\newtheorem{theorem}{Theorem}
\newtheorem{lemma}{Lemma}
\newtheorem{proposition}{Proposition}
\newtheorem{corollary}{Corollary}
\newtheorem{definition}{Definition}
\newtheorem{example}{Example}
\newtheorem{remark}{Remark}
\newcommand{\prob}{\ensuremath{\mathbb{P}}}
\newcommand{\naturals}{\ensuremath{\mathbb{N}}}
\newcommand{\Reals}{\ensuremath{\mathbb{R}}}
\newcommand{\expectation}{\ensuremath{\mathbb{E}}}
\newcommand{\Var}{\mathrm{Var}}
\newcommand{\set}{\ensuremath{\mathcal}}
\DeclareMathOperator*{\esssup}{ess\,sup}
\begin{document}
\thispagestyle{empty}
\setcounter{page}{1}
\setlength{\baselineskip}{1.15\baselineskip}

\title{\huge{On Relations Between the Relative Entropy
and $\chi^2$-Divergence, Generalizations and Applications}\\[0.2cm]}
\author{Tomohiro Nishiyama \qquad Igal Sason
\thanks{T. Nishiyama is an independent researcher, Tokyo, Japan
(e-mail: htam0ybboh@gmail.com).\newline
I. Sason is with the Andrew \& Erna Faculty of Electrical Engineering,
Technion - Israel Institute of Technology, Haifa 3200003, Israel
(e-mail: sason@ee.technion.ac.il).\newline
For citation: T. Nishiyama and I. Sason, ``On relations between the relative entropy and $\chi^2$-divergence,
generalizations and applications,'' {\em Entropy}, vol.~22, no.~5, paper~563, pp.~1--36, May 2020.}}

\maketitle
\thispagestyle{empty}

\begin{abstract}
The relative entropy and chi-squared divergence are fundamental
divergence measures in information theory and statistics. This paper is focused on
a study of integral relations between the two divergences, the
implications of these relations, their information-theoretic applications,
and some generalizations pertaining to the rich class of
$f$-divergences. Applications that are studied in this paper
refer to lossless compression, the method of types and large deviations,
strong~data-processing inequalities, bounds on contraction coefficients
and maximal correlation, and~the convergence rate to stationarity
of a type of discrete-time Markov chains.
\end{abstract}

{\bf{Keywords}}: {\small
Relative entropy,
chi-squared divergence,
$f$-divergences,
method of types,
large deviations,
strong data-processing inequalities,
information contraction,
maximal correlation,
Markov chains.}

\section{Introduction}
\label{section: introduction}

The relative entropy (also known as the Kullback--Leibler divergence \cite{SKRAL51}) and the
chi-squared divergence \cite{Pearson1900x} are divergence measures which play a key
role in information theory, statistics, learning, signal processing and other theoretical
and applied branches of mathematics. These divergence measures are fundamental in problems
pertaining to source and channel coding, combinatorics and large deviations theory,
goodness-of-fit and independence tests in statistics, expectation-maximization
iterative algorithms for estimating a distribution from an incomplete data, and other
sorts of problems (the reader is referred to the tutorial paper by Csisz\'{a}r and
Shields \cite{CsiszarS_FnT}).
They both belong to an important class of divergence measures, defined by means of
convex functions $f$, and named $f$-divergences (\cite{AliS},
\cite{Csiszar63}--\cite{Csiszar72}). In addition to the relative entropy and the chi-squared
divergence, this class unifies other useful divergence measures such as the total
variation distance in functional analysis, and it is also closely related to the
R\'{e}nyi divergence which generalizes the relative entropy (\cite{ErvenH14},
\cite{Renyientropy}). In general, $f$-divergences (defined in Section~\ref{section: preliminaries})
are attractive since they satisfy pleasing features such as the data-processing
inequality, convexity, (semi)continuity and duality properties, and they therefore
find nice applications in information theory and statistics (see, e.g.,
\cite{Csiszar67a, Csiszar72, LieseV_book87, LieseV_IT2006}).

In this work, we study integral relations between the relative entropy
and the chi-squared divergence, implications of these relations, and some of
their information-theoretic applications. Some generalizations
which apply to the class of $f$-divergences are also explored in details.
In this context, it should be noted that integral representations of
general $f$-divergences, expressed as a function of either the DeGroot
statistical information \cite{DeGroot62}, the $E_\gamma$-divergence (a
parametric sub-class of $f$-divergences which generalizes the total variation distance,
\cite[p.~2314]{PPV10}) and the relative information spectrum, have been derived
in \cite[Section~5]{LieseV_IT2006}, \cite[Section~7.B]{ISSV16} and
\cite[Section~3]{Sason18} respectively.

Applications in this paper are related to lossless source compression,
large deviations by the method of types, and strong data-processing inequalities.
Relevant background for each of these applications is provided to make the
presentation self contained.

We next outline the paper contributions, and the structure
of our manuscript.

\subsection{Paper Contributions}
This work starts by introducing integral relations between the relative entropy and
the chi-squared divergence, and some inequalities which relate these two divergences
(see Theorem~\ref{thm: KL and chi^2}, its corollaries,
and Proposition~\ref{prop: f-div. ineq.}). It continues with a study of the
implications and generalizations of these relations, pertaining to the rich class
of $f$-divergences. One implication leads to a tight lower bound on the relative
entropy between a pair of probability measures, expressed as a function of the means and
variances under these measures (see Theorem~\ref{theorem: KL LB}). A second implication
of Theorem~\ref{thm: KL and chi^2} leads to an upper bound on a skew divergence (see
Theorem~\ref{thm: F} and Corollary~\ref{corollary: UB on relative entropy}). Due to the
concavity of the Shannon entropy, let the concavity deficit of the entropy function be defined
as the non-negative difference between the entropy of a convex combination of distributions
and the convex combination of the entropies of these distributions.
Then, Corollary~\ref{corollary: entropy of a mixture of distributions} provides an upper
bound on this deficit, expressed as a function of the pairwise relative entropies
between all pairs of distributions. Theorem~\ref{theorem: monotonic_f_divergence_seq}
provides a generalization of Theorem~\ref{thm: KL and chi^2} to the class of $f$-divergences.
It recursively constructs non-increasing sequences of $f$-divergences and,
as a consequence of Theorem~\ref{theorem: monotonic_f_divergence_seq} followed by the
usage of polylogairthms, Corollary~\ref{corollary: polylogarithm} provides a
generalization of the useful integral relation in Theorem~\ref{thm: KL and chi^2} between
the relative entropy and the chi-squared divergence.
Theorem~\ref{thm: gen. Csiszar} relates probabilities of sets to $f$-divergences,
generalizing a known and useful result by Csisz\'{a}r for the relative entropy. In respect to
Theorem~\ref{thm: KL and chi^2}, the integral relation between
the relative entropy and the chi-squared divergence has been independently derived in \cite{Allerton19},
which also derived an alternative upper bound on the concavity deficit of the entropy as a function
of total variational distances (differing from the bound in Corollary~\ref{corollary: entropy of a mixture of distributions},
which depends on pairwise relative entropies). The interested reader is referred to \cite{Allerton19},
with a preprint of the extended version in~\cite{arXiv20}, and to \cite{Audenaert} where the connections
in Theorem~\ref{thm: KL and chi^2} were originally discovered in the quantum setting.

The second part of this work studies information-theoretic applications of the above results.
These are ordered by starting from the relatively simple applications, and ending at the more complicated ones.
The first one includes a bound on the redundancy of the Shannon code for universal lossless compression with
discrete memoryless sources, used in conjunction with Theorem~\ref{thm: F} (see
Section~\ref{subsec: UB on RE - Poisson}). An application of Theorem~\ref{theorem: KL LB} in the
context of the method of types and large deviations analysis is then studied in Section~\ref{subsec: method of types},
providing non-asymptotic bounds which lead to a closed-form expression as a function of the Lambert $W$ function
(see Proposition~\ref{prop: W function}).
Strong data-processing inequalities with bounds on contraction coefficients of skew divergences are provided
in Theorem~\ref{thm: contraction coefficients}, Corollary~\ref{corollary: bounds - contraction coefficient}
and Proposition~\ref{proposition: bounds - contraction coefficient}. Consequently, non-asymptotic bounds
on the convergence to stationarity of time-homogeneous, irreducible and reversible discrete-time Markov
chains with finite state spaces are obtained by relying on our bounds on the contraction coefficients of skew
divergences (see Theorem~\ref{thm: MC}). The exact asymptotic convergence rate is also obtained in
Corollary~\ref{corollary: lim. of n-th root}. Finally, a property of maximal correlations is obtained in
Proposition~\ref{prop: max. correlation} as an application of our starting point on the integral relation
between the relative entropy and the chi-squared divergence.

\subsection{Paper Organization}
This paper is structured as follows. Section~\ref{section: preliminaries}
presents notation and preliminary material which is necessary
for, or otherwise related to, the exposition of this work.
Section~\ref{section: main results} refers to the developed relations between divergences,
and Section~\ref{section: applications} studies information-theoretic applications.
Proofs of the results in Sections~\ref{section: main results} and~\ref{section: applications}
(except for short proofs) are deferred to Section~\ref{section: proofs}.

\section{Preliminaries and Notation}
\label{section: preliminaries}
This subsection provides definitions of divergence measures which are
used in this paper, and it also provides relevant notation.
\begin{definition} \label{def:fD} \cite[p.~4398]{LieseV_IT2006}
Let $P$ and $Q$ be probability measures, let $\mu$ be a
dominating measure of $P$ and $Q$ (i.e., $P, Q \ll \mu$),
and let $p := \frac{\mathrm{d}P}{\mathrm{d}\mu}$ and
$q := \frac{\mathrm{d}Q}{\mathrm{d}\mu}$ be the densities of
$P$ and $Q$ with respect to $\mu$. The {\em $f$-divergence}
from $P$ to $Q$ is given by
\begin{align} \label{eq:fD}
D_f(P\|Q) := \int q \, f \Bigl(\frac{p}{q}\Bigr) \, \mathrm{d}\mu,
\end{align}
where
\begin{align}
& f(0) := \underset{t \to 0^+}{\lim} \, f(t), \quad
0 f\biggl(\frac{0}{0}\biggr) := 0, \\[0.1cm]
& 0 f\biggl(\frac{a}{0}\biggr)
:= \lim_{t \to 0^+} \, t f\biggl(\frac{a}{t}\biggr)
= a \lim_{u \to \infty} \frac{f(u)}{u}, \quad a>0.
\end{align}
It should be noted that the right side of \eqref{eq:fD}
does not depend on the dominating measure $\mu$.
\end{definition}

\vspace*{0.1cm}
Throughout the paper, we denote by $1\{{\mathrm{relation}}\}$ the indicator function;
it is equal to~1 if the relation is true, and it is equal to~0 otherwise.
Throughout the paper, unless indicated explicitly, logarithms have an arbitrary
common base (that is larger than~1), and $\exp(\cdot)$ indicates the inverse function
of the logarithm with that base.

\begin{definition} \label{def:relative-entropy}  \cite{SKRAL51}
The {\em relative entropy} is the
$f$-divergence with $f(t) := t \log t$
for $t >0$,
\begin{align}
\label{eq: 1st KL divergence}
D(P\|Q) &:= D_f(P\|Q) \\
\label{eq2: 1st KL divergence}
&= \int p \, \log \frac{p}{q} \; \mathrm{d}\mu.
\end{align}
\end{definition}

\begin{definition} \label{def:TV}
The {\em total variation distance} between probability
measures $P$ and $Q$ is the $f$-divergence from $P$ to $Q$ with
$f(t) := |t-1|$ for all $t \geq 0$. It is a symmetric $f$-divergence,
denoted by $|P-Q|$, which is given by
\begin{align}
\label{eq1: TV distance}
|P-Q| &:=  D_f(P\|Q) \\
\label{eq2: TV distance}
&= \int |p-q| \, \mathrm{d}\mu.
\end{align}
\end{definition}

\begin{definition} \label{def:chi-squared}  \cite{Pearson1900x}
The {\em chi-squared divergence} from $P$ to $Q$
is defined to be the $f$-divergence in \eqref{eq:fD} with
$f(t) := (t-1)^2$ or $f(t) := t^2-1$ for all $t>0$,
\begin{align}
\label{eq: chi-square 1}
\chi^2(P\|Q) &:= D_f(P\|Q) \\
\label{eq: chi-square 1b}
&= \int \frac{(p-q)^2}{q} \, \mathrm{d}\mu
= \int \frac{p^2}{q} \, \mathrm{d}\mu - 1.
\end{align}
\end{definition}

\vspace*{0.2cm}
The R\'{e}nyi divergence, a generalization of the relative entropy, was introduced by
R\'{e}nyi \cite{Renyientropy} in the special case of finite alphabets. Its general
definition is given as follows (see, e.g., \cite{ErvenH14}).
\begin{definition} \label{def:RD}  \cite{Renyientropy}
Let $P$ and $Q$ be probability measures on $\set{X}$ dominated by $\mu$,
and let their densities be respectively denoted by $p = \frac{\mathrm{d}P}{\mathrm{d}\mu}$
and $q = \frac{\mathrm{d}Q}{\mathrm{d}\mu}$.
The {\em R\'{e}nyi divergence} of order $\alpha \in [0, \infty]$ is defined as follows:
\begin{itemize}
\item
If $\alpha \in (0,1) \cup (1, \infty) $, then
\begin{align} \label{eq:RD0}
D_{\alpha}(P\|Q) &=
\frac1{\alpha-1} \; \log \mathbb{E} \left[
p^\alpha(Z) \, q^{1-\alpha}(Z) \right] \\[0.1cm]
&= \frac1{\alpha-1} \; \log \,
\sum_{x \in \set{X}} P^\alpha (x) \,
Q^{1-\alpha} (x), \label{eq:RD1}
\end{align}
where $Z \sim \mu$ in \eqref{eq:RD0}, and \eqref{eq:RD1} holds if
$\set{X}$ is a discrete set.
\item By the continuous extension of $D_{\alpha}(P \| Q)$,
\begin{align}
\label{eq: d0}
& D_0(P \| Q) = \underset{\set{A}: P(\set{A})=1}{\max} \log \frac1{Q(\set{A})}, \\
\label{def:d1}
& D_1(P\|Q) = D(P\|Q), \\
\label{def:dinf}
& D_{\infty}(P\|Q) = \log \, \esssup \frac{p(Z)}{q(Z)}.
\end{align}
\end{itemize}
\end{definition}

The second-order R\'{e}nyi divergence and the chi-squared divergence are related as follows:
\begin{align} \label{RD2}
D_2(P\|Q) = \log\bigl(1+\chi^2(P\|Q)\bigr),
\end{align}
and the relative entropy and chi-squared divergence satisfy (see, e.g.,
\cite[Theorem~5]{GibbsSu02})
\begin{align} \label{RD2-chi2}
D(P\|Q) \leq \log \bigl(1 + \chi^2(P\|Q)\bigr).
\end{align}
Inequality~\eqref{RD2-chi2} readily follows from \eqref{def:d1}, \eqref{RD2},
and since $D_{\alpha}(P\|Q)$ is monotonically increasing in $\alpha \in (0, \infty)$
(see \cite[Theorem~3]{ErvenH14}). A tightened version of \eqref{RD2-chi2}, introducing
an improved and locally-tight upper bound on $D(P\|Q)$ as a function of $\chi^2(P\|Q)$
and $\chi^2(Q\|P)$, is introduced in \cite[Theorem~20]{ISSV16}. Another sharpened version
of \eqref{RD2-chi2} is derived in \cite[Theorem~11]{ISSV16} under the assumption of
a bounded relative information. Furthermore, under the latter assumption, tight upper
and lower bounds on the ratio $\frac{D(P\|Q)}{\chi^2(P\|Q)}$ are obtained in \cite[(169)]{ISSV16}.

\begin{definition}\cite{GV01}  \label{definition: GV divergence}
The {\em Gy\"{o}rfi--Vajda divergence} of order $s \in [0,1]$ is an $f$-divergence with
\begin{align} \label{eq: f - GV01}
f(t) = \phi_s(t) := \frac{(t-1)^2}{s + (1-s)t}, \quad t \geq 0.
\end{align}
Vincze--Le Cam distance (also known as the triangular discrimination) (\cite{Le Cam86,Vincze81})
is a special case with $s = \tfrac12$.
\end{definition}

In view of \eqref{eq:fD}, \eqref{eq: chi-square 1b} and \eqref{eq: f - GV01},
it can be verified that the Gy\"{o}rfi--Vajda divergence is related to the chi-squared
divergence as follows:
\begin{align} \label{eq: GV01-chi^2}
D_{\phi_s}(P\|Q) =
\begin{dcases}
\frac1{s^2} \cdot \chi^2\bigl(P \, \| \, (1-s)P + sQ \bigr),
& \quad s \in (0,1], \\
\chi^2(Q\|P), & \quad s = 0.
\end{dcases}
\end{align}
Hence,
\begin{align}
\label{f-div03}
& D_{\phi_1}(P\|Q) = \chi^2(P\|Q), \\
\label{f-div04}
& D_{\phi_0}(P\|Q) = \chi^2(Q\|P).
\end{align}

\section{Relations Between Divergences}
\label{section: main results}

We introduce in this section new results on the relations between
the relative entropy and the chi-squared divergence, and some of their implications
and generalizations. Information-theoretic applications of these results
are studied in the next section.

\subsection{Relations Between the Relative Entropy and the Chi-Squared Divergence}
\label{subsec: relative entropy and chi-squared divergence}

The following result relates the relative entropy and the chi-squared divergence,
which are two fundamental divergence measures in information theory and
statistics. This result was recently obtained in an equivalent form in
\cite[(12)]{Allerton19} (it is noted that this identity was also independently
derived by the coauthors in two separate un-published works in \cite[(16)]{Nishiyama19}
and \cite{Sason_June18}). It should be noted that these connections between
divergences in the quantum setting were originally discovered in \cite[Theorem~6]{Audenaert}.
Beyond serving as an interesting relation between these two fundamental
divergence measures, it is introduced here for the following reasons:
\begin{enumerate}[1)]
\item New consequences and applications of it are obtained, including
new shorter proofs of some known results;
\item An interesting extension in Section~\ref{subsec: monotonic_sequence}
provides new relations between $f$-divergences.
\end{enumerate}

\begin{theorem} \label{thm: KL and chi^2}
Let $P$ and $Q$ be probability measures defined on a measurable space
$(\set{X}, \mathscr{F})$, and let
\begin{align} \label{mixture PMF}
R_\lambda := (1-\lambda) P + \lambda Q, \quad \lambda \in [0,1]
\end{align}
be the convex combination of $P$ and $Q$. Then, for all $\lambda \in [0,1]$,
\begin{align} \label{identity: KL}
\tfrac1{\log \mathrm{e}} \, D(P\| R_\lambda)
&= \int_0^\lambda \chi^2(P \| R_s) \; \frac{\mathrm{d}s}{s}, \\[0.1cm]
\label{identity: chi2}
\tfrac12 \, \lambda^2 \, \chi^2(R_{1-\lambda} \| Q)
&= \int_0^\lambda \chi^2(R_{1-s} \| Q) \; \frac{\mathrm{d}s}{s}.
\end{align}
\end{theorem}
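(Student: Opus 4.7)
The plan is to verify both identities by checking equality at $\lambda = 0$ and agreement of $\lambda$-derivatives on $(0,1]$. Working with densities $p,q$ of $P,Q$ and $r_s := (1-s)p + sq$ relative to a common dominating measure $\mu$, two linear observations do all the algebraic work: $p - r_s = s(p-q)$ (so $\chi^2(P\|R_s) = s^2 \int (p-q)^2/r_s \, \mathrm{d}\mu$) and $r_{1-s} - q = s(p-q)$ (so $\chi^2(R_{1-s}\|Q) = s^2 \chi^2(P\|Q)$); together with $\int (p-q) \, \mathrm{d}\mu = 0$, these are the only facts we need.

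Identity~\eqref{identity: chi2} is then essentially immediate: substituting $\chi^2(R_{1-s}\|Q) = s^2 \chi^2(P\|Q)$ on the right collapses the integral to $\chi^2(P\|Q)\int_0^\lambda s \, \mathrm{d}s = \tfrac12 \lambda^2 \chi^2(P\|Q)$, which is reconciled with the left-hand side through the same formula applied at $s=\lambda$. For identity~\eqref{identity: KL} I would differentiate both sides. The right-hand side has derivative $\chi^2(P\|R_\lambda)/\lambda$ by the fundamental theorem of calculus. For the left-hand side, using $\partial r_\lambda/\partial\lambda = q-p$ and exchanging differentiation with integration gives
\begin{align*}
\frac{\mathrm{d}}{\mathrm{d}\lambda} D(P\|R_\lambda) = \log\mathrm{e} \int \frac{p(p-q)}{r_\lambda} \, \mathrm{d}\mu.
\end{align*}
Decomposing $p = r_\lambda + \lambda(p-q)$ and invoking $\int (p-q) \, \mathrm{d}\mu = 0$, this reduces to $\log\mathrm{e} \cdot \lambda \int (p-q)^2/r_\lambda \, \mathrm{d}\mu = \log\mathrm{e} \cdot \chi^2(P\|R_\lambda)/\lambda$, which matches the RHS derivative after dividing by $\log\mathrm{e}$.

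For the boundary, $D(P\|R_0)=D(P\|P)=0$, while the right-hand integrand equals $s\int(p-q)^2/r_s \, \mathrm{d}\mu = O(s)$ near $s=0$, so the integral over $[0,0]$ trivially vanishes and no integrability issue arises at the origin. The main technical obstacle is justifying differentiation under the integral sign in the display above. For $\lambda$ restricted to a compact sub-interval $[0, \lambda_0]\subset[0,1)$, the pointwise bound $r_\lambda \geq (1-\lambda_0)p$ on $\{p>0\}$ supplies integrable envelopes both for $p\log(p/r_\lambda)$ and for the candidate derivative $p(p-q)/r_\lambda$ (the latter dominated by $(p+q)/(1-\lambda_0)$), so dominated convergence legitimises the exchange. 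The endpoint $\lambda = 1$ is then recovered by monotone convergence in the cases where both sides are finite; otherwise one checks directly that both sides equal $+\infty$.
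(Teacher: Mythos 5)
Your proof of \eqref{identity: chi2} coincides with the paper's: both rest on the pointwise identity $r_{1-s}-q=s(p-q)$, hence $\chi^2(R_{1-s}\|Q)=s^2\chi^2(P\|Q)$, and a trivial integration. For \eqref{identity: KL}, however, you take a genuinely different route. The paper plugs the integral representation $\log_{\mathrm e}x=\int_0^1\frac{x-1}{x+(1-x)v}\,\mathrm{d}v$ into $\int p\log_{\mathrm e}(p/r_\lambda)\,\mathrm{d}\mu$, swaps the order of integration, and identifies the inner integral as $\tfrac1v\,\chi^2(P\|R_{\lambda v})$; the change of variable $s=\lambda v$ then gives the claim in one pass, with no differentiation anywhere. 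You instead establish the derivative identity $\tfrac{\mathrm d}{\mathrm d\lambda}D(P\|R_\lambda)=\tfrac1\lambda\,\chi^2(P\|R_\lambda)\log\mathrm{e}$ (via differentiation under the integral sign, the decomposition $p=r_\lambda+\lambda(p-q)$, and $\int(p-q)\,\mathrm{d}\mu=0$) and integrate it up from $\lambda=0$. Your derivative formula is correct and is in fact exactly the relation \eqref{1711a1} that the paper later extracts \emph{from} Theorem~\ref{thm: KL and chi^2} in proving Theorem~\ref{thm: F}, so you have reversed the logical order; what your route buys is that the dominated-convergence justification is transparent (your envelope $(p+q)/(1-\lambda_0)$ works), whereas the paper's Fubini swap also needs a word of justification since its integrand is signed. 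The price is that the endpoints require separate care. Two small points to tighten: near $s=0$ the integrand $\chi^2(P\|R_s)/s=s\int(p-q)^2/r_s\,\mathrm{d}\mu$ is not $O(s)$ in general (its limit is $Q(\{p=0\})$, and when $\chi^2(Q\|P)=\infty$ the $O(s)$ bound fails), but it is bounded near the origin, and in any case writing $F(\lambda)-F(a)=\int_a^\lambda\chi^2(P\|R_s)\,\tfrac{\mathrm{d}s}{s}$ and letting $a\downarrow0$ with $F(a)\le\log\tfrac1{1-a}\to0$ and monotone convergence on the right sidesteps the issue entirely. At $\lambda=1$ you should say explicitly why $D(P\|R_\lambda)\to D(P\|Q)$: the upper bound $D(P\|R_\lambda)\le\lambda D(P\|Q)$ from convexity together with Fatou's lemma applied with the uniform integrable minorant $p\log(p/r_\lambda)\ge(p-r_\lambda)\log\mathrm{e}\ge-q\log\mathrm{e}$ gives the two-sided limit, covering the infinite case as well. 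With those two sentences added, the argument is complete.
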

\begin{proof}
See Section~\ref{subsection: proof KL-chi2}.
\end{proof}

A specialization of Theorem~\ref{thm: KL and chi^2} by letting $\lambda = 1$
gives the following identities.
\begin{corollary} \label{corollary: RE and chi2}
\begin{align}
\label{identity2: KL}
\tfrac1{\log \mathrm{e}} \, D(P\| Q)
= \int_0^1 \chi^2(P \, \| \, (1-s)P + sQ) \; \frac{\mathrm{d}s}{s}, \\[0.1cm]
\label{identity2: chi2}
\tfrac12 \, \chi^2(P \| Q)
= \int_0^1 \chi^2(sP + (1-s)Q \, \| \, Q) \; \frac{\mathrm{d}s}{s}.
\end{align}
\end{corollary}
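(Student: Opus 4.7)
The plan is simply to specialize Theorem~\ref{thm: KL and chi^2} by setting $\lambda = 1$ in each of the two identities \eqref{identity: KL} and \eqref{identity: chi2}, and verifying that the resulting endpoint values of the mixture $R_\lambda = (1-\lambda)P + \lambda Q$ yield precisely \eqref{identity2: KL} and \eqref{identity2: chi2}.

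For \eqref{identity2: KL}, I would substitute $\lambda = 1$ into \eqref{identity: KL}. The left-hand side becomes $\tfrac{1}{\log \mathrm{e}} D(P\|R_1)$, and since $R_1 = (1-1)P + 1 \cdot Q = Q$ by \eqref{mixture PMF}, this equals $\tfrac{1}{\log \mathrm{e}} D(P\|Q)$. The right-hand side is already $\int_0^1 \chi^2(P\|R_s)\,\tfrac{\mathrm{d}s}{s}$ with $R_s = (1-s)P + sQ$, which is the stated integrand.

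For \eqref{identity2: chi2}, I would substitute $\lambda = 1$ into \eqref{identity: chi2}. On the left, the prefactor $\tfrac12 \lambda^2$ becomes $\tfrac12$, and $R_{1-\lambda} = R_0 = (1-0)P + 0 \cdot Q = P$, so the left-hand side collapses to $\tfrac12 \chi^2(P\|Q)$. On the right, the integrand involves $R_{1-s} = (1-(1-s))P + (1-s)Q = sP + (1-s)Q$, matching the form in \eqref{identity2: chi2}.

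There is no real obstacle here beyond careful bookkeeping at the endpoints of $R_\lambda$: the two identities of the corollary are literally the $\lambda = 1$ instances of the two identities in Theorem~\ref{thm: KL and chi^2}, and the integrability at $s \to 0^+$ is already guaranteed by the theorem.
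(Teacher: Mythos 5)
Your proposal is correct and coincides with the paper's own derivation: the paper introduces Corollary~\ref{corollary: RE and chi2} precisely as the specialization of Theorem~\ref{thm: KL and chi^2} to $\lambda = 1$, with the same endpoint identifications $R_1 = Q$ and $R_{1-s} = sP + (1-s)Q$ that you verify.
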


\begin{remark}
The substitution $s := \frac1{1+t}$ transforms \eqref{identity2: KL} to
\cite[Eq. (31)]{MakurP18}, i.e.,
\begin{align} \label{eq:MakurP18}
\tfrac1{\log \mathrm{e}} \, D(P\| Q)
= \int_0^\infty \chi^2\biggl(P \, \| \, \frac{tP+Q}{1+t} \biggr) \, \frac{\mathrm{d}t}{1+t}.
\end{align}
\end{remark}

In view of \eqref{eq: GV01-chi^2} and \eqref{mixture PMF}, an equivalent form of
\eqref{identity: KL} and \eqref{identity2: KL} is given as follows.
\begin{corollary}
\begin{align}
\label{identity3: KL}
\tfrac1{\log \mathrm{e}} \, D(P\| R_\lambda)
&= \int_0^\lambda  s D_{\phi_s}(P \| Q) \, \mathrm{d}s, \quad \lambda \in [0,1], \\[0.1cm]
\label{identity4: KL}
\tfrac1{\log \mathrm{e}} \, D(P\| Q)
&= \int_0^1  s D_{\phi_s}(P \| Q) \, \mathrm{d}s,
\end{align}
where $\phi_s$ in the right sides of \eqref{identity3: KL} and \eqref{identity4: KL} is
given in \eqref{eq: f - GV01}.
\end{corollary}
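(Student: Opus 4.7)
The plan is to obtain both identities as direct substitutions into Theorem~\ref{thm: KL and chi^2}, using the explicit link between the Gy\"orfi--Vajda divergence and the chi-squared divergence recorded in \eqref{eq: GV01-chi^2}. No new analytical input is needed beyond Theorem~\ref{thm: KL and chi^2} and the definitions in Section~\ref{section: preliminaries}.

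Concretely, I would first recall from \eqref{mixture PMF} that $R_s = (1-s)P + sQ$, and invoke \eqref{eq: GV01-chi^2} in its non-degenerate regime $s\in(0,1]$ to write
\begin{align}
\chi^2(P\|R_s) \;=\; s^{2}\, D_{\phi_s}(P\|Q).
\end{align}
Substituting this identity into \eqref{identity: KL} of Theorem~\ref{thm: KL and chi^2} and cancelling one factor of $s$ against the measure $\mathrm{d}s/s$ yields
\begin{align}
\tfrac1{\log \mathrm{e}} \, D(P\|R_\lambda)
\;=\; \int_0^\lambda s^{2}\, D_{\phi_s}(P\|Q)\,\frac{\mathrm{d}s}{s}
\;=\; \int_0^\lambda s\, D_{\phi_s}(P\|Q)\,\mathrm{d}s,
\end{align}
which is \eqref{identity3: KL}. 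Specializing to $\lambda=1$ then gives \eqref{identity4: KL} as an immediate consequence, matching the corresponding specialization carried out in Corollary~\ref{corollary: RE and chi2}.

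The only point requiring a brief check is the behavior of the integrand at the lower endpoint $s=0$, since \eqref{eq: GV01-chi^2} treats $s=0$ as a separate case. The $s^2$ factor in $\chi^2(P\|R_s)$ vanishes at exactly the rate that cancels the $1/s$ in the representation from Theorem~\ref{thm: KL and chi^2}, leaving the integrand $s\, D_{\phi_s}(P\|Q)$; by \eqref{f-div04} this tends to $0\cdot\chi^2(Q\|P)=0$ as $s\to 0^+$, so the integrand is continuous on $[0,\lambda]$ and the substitution is valid without any improper-integral subtlety. There is no real obstacle here beyond this bookkeeping at the endpoint; the corollary is, in essence, a rewriting of Theorem~\ref{thm: KL and chi^2} in the Gy\"orfi--Vajda parametrization.
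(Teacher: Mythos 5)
Your proof is correct and coincides with the paper's own derivation: the paper presents this corollary as an immediate rewriting of \eqref{identity: KL} via the identity $\chi^2(P\|R_s)=s^2 D_{\phi_s}(P\|Q)$ from \eqref{eq: GV01-chi^2}, exactly as you do. Your extra check of the integrand at $s=0$ is harmless bookkeeping that the paper omits.
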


By Corollary~\ref{corollary: RE and chi2}, we obtain original and simple proofs
of new and old $f$-divergence inequalities.
\begin{proposition} \label{prop: f-div. ineq.}
({\em $f$-divergence inequalities}).
\begin{enumerate}[1)]
\item Pinsker's inequality:
\begin{align}  \label{Pinsker}
D(P \| Q) \geq \tfrac12  |P-Q|^2 \log \mathrm{e}.
\end{align}

\item
\begin{align}  \label{RD2-chi2 new}
\tfrac1{\log \mathrm{e}} \, D(P\|Q) \leq \tfrac13 \, \chi^2(P\|Q) + \tfrac16 \, \chi^2(Q\|P).
\end{align}
Furthermore, let $\{P_n\}$ be a sequence of probability measures that is defined on
a measurable space $(\set{X}, \mathscr{F})$, and which converges to a probability
measure $P$ in the sense that
\begin{align} \label{lim 1}
\lim_{n \to \infty} \esssup \frac{\mathrm{d}P_n}{\mathrm{d}P} \, (X) = 1,
\end{align}
with $X \sim P$. Then, \eqref{RD2-chi2 new} is locally tight in the sense that its
both sides converge to~0, and
\begin{align} \label{locally tight}
\lim_{n \to \infty} \frac{\tfrac13 \, \chi^2(P_n\|P)
+ \tfrac16 \, \chi^2(P\|P_n)}{\tfrac1{\log \mathrm{e}} \, D(P_n \| P)} = 1.
\end{align}

\item For all $\theta \in (0,1)$,
\begin{align} \label{LB-RE-chi2}
D(P\|Q) \geq (1-\theta) \, \log \biggl(\frac1{1-\theta}\biggr) \; D_{\phi_\theta}(P\|Q).
\end{align}
Moreover, under the assumption in \eqref{lim 1}, for all $\theta \in [0,1]$
\begin{align} \label{29032020b1}
\lim_{n \to \infty}  \frac{D(P\|P_n)}{D_{\phi_\theta}(P\|P_n)} = \tfrac12 \log \mathrm{e}.
\end{align}

\item \cite[Theorem~2]{ISSV16}:
\begin{align} \label{eq: ISSV16}
\tfrac1{\log \mathrm{e}} \, D(P\|Q) \leq \tfrac12 \, \chi^2(P\|Q) + \tfrac14 \, |P-Q|.
\end{align}
\end{enumerate}
\end{proposition}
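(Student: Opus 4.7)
The plan is to derive all four items from the equivalent form
\[
\tfrac{1}{\log \mathrm{e}}\, D(P\|Q) = \int_0^1 s\, D_{\phi_s}(P\|Q)\, ds
\]
in \eqref{identity4: KL}, which is \eqref{identity2: KL} rewritten using $\chi^2(P\|R_s) = s^2 D_{\phi_s}(P\|Q)$ (see \eqref{eq: GV01-chi^2}). In each item I would establish a pointwise-in-$(\mu,s)$ inequality for the integrand and integrate against $s$ on $[0,1]$.

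Items 1, 2 and 4 each follow from a one-line pointwise bound. For item~1, a Cauchy--Schwarz step in $\mu$ with weight $\sqrt{sq+(1-s)p}$ yields $|P-Q|^2 \le D_{\phi_s}(P\|Q)$ for every $s$, and $\int_0^1 s\, ds = \tfrac12$ delivers Pinsker. For item~2, convexity of $x \mapsto 1/x$ applied to $(1-s)p+sq$ gives the pointwise comparison $\phi_s(t) \le s\phi_1(t)+(1-s)\phi_0(t)$, hence $D_{\phi_s}(P\|Q) \le s\chi^2(P\|Q)+(1-s)\chi^2(Q\|P)$ using \eqref{f-div03}--\eqref{f-div04}; the constants $\tfrac13$ and $\tfrac16$ then come from $\int_0^1 s^2\, ds$ and $\int_0^1 s(1-s)\, ds$. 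For item~4, split the integrand $(p-q)^2/((1-s)p+sq)$ of $D_{\phi_s}(P\|Q)$ across $\{p \ge q\}$ and $\{p<q\}$, bounding the denominator by $q$ on the first set and by $sq$ on the second. With $\alpha := \int_{p \ge q}(p-q)^2/q\, d\mu$ and $\beta := \int_{p<q}(p-q)^2/q\, d\mu$ (so $\alpha+\beta = \chi^2(P\|Q)$), this gives $sD_{\phi_s}(P\|Q) \le s\alpha+\beta$ and hence $\tfrac{1}{\log \mathrm{e}} D(P\|Q) \le \tfrac{\alpha}{2}+\beta = \tfrac12 \chi^2(P\|Q)+\tfrac12 \beta$, while $\beta \le \int_{p<q}(q-p)\, d\mu = \tfrac12 |P-Q|$ since $(q-p)/q \le 1$ and $\int(q-p)_+\, d\mu = \tfrac12 |P-Q|$.

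Item~3 is the main obstacle because the coefficient $(1-\theta)\log\tfrac{1}{1-\theta}$ must appear exactly. The key is the elementary observation that, for fixed $s,\theta \in [0,1]$, the ratio $\phi_s(t)/\phi_\theta(t) = (\theta+(1-\theta)t)/(s+(1-s)t)$ is monotone in $t \ge 0$ with the sign of $s-\theta$. Hence its infimum over $t \ge 0$ equals $\theta/s$ when $s \ge \theta$ (attained at $t=0$) and $(1-\theta)/(1-s)$ when $s \le \theta$ (attained as $t \to \infty$), giving the pointwise bound $s\phi_s(t) \ge c(s,\theta)\phi_\theta(t)$ with $c(s,\theta) := \theta$ on $[\theta,1]$ and $c(s,\theta) := s(1-\theta)/(1-s)$ on $[0,\theta]$. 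Integrating against $q\, d\mu$ and then over $s \in [0,1]$, the decisive computation is
\[
\int_0^1 c(s,\theta)\, ds = \theta(1-\theta)+(1-\theta)\int_0^\theta \tfrac{s}{1-s}\, ds = -(1-\theta)\ln(1-\theta),
\]
which converts to $(1-\theta)\log\tfrac{1}{1-\theta}$ after multiplying by $\log \mathrm{e}$.

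Finally, the local-tightness statements \eqref{locally tight} and \eqref{29032020b1} both come from a Taylor expansion under \eqref{lim 1}. Writing $\eta_n := dP_n/dP - 1$ (so $\eta_n \to 0$ essentially and $\expectation_P[\eta_n] = 0$), the expansion $(1+\eta_n)\ln(1+\eta_n) = \eta_n+\tfrac12 \eta_n^2+O(\eta_n^3)$ gives $\tfrac{1}{\log \mathrm{e}} D(P_n\|P) = \tfrac12 \expectation_P[\eta_n^2](1+o(1))$ and, analogously, $\tfrac{1}{\log \mathrm{e}} D(P\|P_n) = \tfrac12 \expectation_P[\eta_n^2](1+o(1))$. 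Since $\chi^2(P_n\|P) = \expectation_P[\eta_n^2]$, $\chi^2(P\|P_n) = \expectation_P[\eta_n^2/(1+\eta_n)]$ and $D_{\phi_\theta}(P\|P_n) = \expectation_P[\eta_n^2/(1+\theta\eta_n)]$ each equal $\expectation_P[\eta_n^2](1+o(1))$, the ratios in \eqref{locally tight} and \eqref{29032020b1} tend to $1$ and $\tfrac12 \log \mathrm{e}$ respectively; the residual subtlety is uniform control of the $o(1)$ remainders, which follows from the essential boundedness of $dP_n/dP$ in \eqref{lim 1}.
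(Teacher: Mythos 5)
Your proposal is correct, and for Items~1)--3) it follows essentially the same route as the paper: the same integral identity \eqref{identity4: KL}, the same Cauchy--Schwarz step giving $D_{\phi_s}(P\|Q)\geq |P-Q|^2$ for Pinsker, the same convexity-of-the-hyperbola bound producing the constants $\tfrac13$ and $\tfrac16$, and for Item~3) your pointwise comparison $s\,\phi_s(t)\geq c(s,\theta)\,\phi_\theta(t)$ with $c(s,\theta)=\theta$ on $[\theta,1]$ and $c(s,\theta)=\tfrac{s(1-\theta)}{1-s}$ on $[0,\theta]$ is exactly the paper's Lemma (stated there as $D_{\phi_s}/D_{\phi_\theta}\geq\min\{\tfrac{1-\theta}{1-s},\tfrac{\theta}{s}\}$), with the identical integral $\int_0^1 c(s,\theta)\,\mathrm{d}s=-(1-\theta)\ln(1-\theta)$. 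Item~4) is where you genuinely depart from the paper: the paper subtracts $s^2\chi^2(P\|Q)$ inside the integral and estimates the resulting cubic integrand $s(1-s)\int\frac{(q-p)^3}{q[(1-s)p+sq]}\,\mathrm{d}\mu$ via a pointwise bound supported on $\{q\geq p\}$, whereas you split the domain into $\{p\geq q\}$ and $\{p<q\}$ and lower-bound the denominator $(1-s)p+sq$ by $q$ and by $sq$ respectively, giving $sD_{\phi_s}\leq s\alpha+\beta$ and hence $\tfrac{\alpha}{2}+\beta=\tfrac12\chi^2(P\|Q)+\tfrac12\beta\leq\tfrac12\chi^2(P\|Q)+\tfrac14|P-Q|$; this is shorter and avoids the sign bookkeeping of the cubic term, at the (negligible) cost of not isolating the exact remainder $\chi^2(P\|R_s)-s^2\chi^2(P\|Q)$. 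For the local limits \eqref{locally tight} and \eqref{29032020b1} you Taylor-expand directly where the paper cites the local-behavior theorems of \cite{PardoV03} and \cite[Theorem~7]{Sason18}; this is equivalent in substance, but note that the one-sided hypothesis \eqref{lim 1} only bounds $\mathrm{d}P_n/\mathrm{d}P$ from above, so for the reverse-direction quantities $\chi^2(P\|P_n)$ and $D_{\phi_\theta}(P\|P_n)$ you need the small additional observation that the set where $\mathrm{d}P_n/\mathrm{d}P\leq 1-\delta$ has $P$-measure $O(\varepsilon_n/\delta)$ (using $\expectation_P[\eta_n]=0$) before the remainders can be controlled uniformly --- the same gap the paper closes by citation rather than by argument.
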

\begin{proof}
See Section~\ref{subsection: proof f-div ineq.}.
\end{proof}

\begin{remark}
Inequality \eqref{RD2-chi2 new} is locally tight in the sense that \eqref{lim 1} yields
\eqref{locally tight}. This property, however, is not satisfied by \eqref{RD2-chi2}
since the assumption in \eqref{lim 1} implies that
\begin{align} \label{not locally tight}
\lim_{n \to \infty} \frac{\log\bigl(1 + \chi^2(P_n \| P) \bigr)}{D(P_n \| P)} = 2.
\end{align}
\end{remark}

\begin{remark}
Inequality \eqref{RD2-chi2 new} readily yields
\begin{align}
& D(P\|Q) + D(Q\|P) \leq \tfrac12 \left( \chi^2(P\|Q) + \chi^2(Q\|P) \right) \, \log \mathrm{e},
\end{align}
which is proved by a different approach in \cite[Proposition~4]{Simic08}. It is
further shown in \cite[Theorem~2~b)]{ISSV16} that
\begin{align}
\label{eq: symmetrized RE-chi^2}
\sup \frac{D(P \| Q) + D(Q \| P)}{\chi^2(P \| Q) + \chi^2(Q \| P)} = \tfrac12 \log \mathrm{e},
\end{align}
where the supremum is over $P \ll \gg Q$ and $P \neq Q$.
\end{remark}

\subsection{Implications of Theorem~\ref{thm: KL and chi^2}}
\label{subsec: Implications of thm: KL and chi^2}

We next provide two implications of Theorem~\ref{thm: KL and chi^2}.
The first implication, which relies on the Hammersley--Chapman--Robbins (HCR) bound for
the chi-squared divergence (\cite{CR51} and \cite{Hammersley50}), gives
the following tight lower bound on the relative entropy $D(P\|Q)$ as a
function of the means and variances under $P$ and $Q$.

\begin{theorem} \label{theorem: KL LB}
Let $P$ and $Q$ be probability measures defined on the measurable space
$(\Reals, \mathscr{B})$, where $\Reals$ is the real line and $\mathscr{B}$
is the Borel $\sigma$-algebra of subsets of $\Reals$.
Let $m_P$, $m_Q$, $\sigma_P^2$ and $\sigma_Q^2$ denote the expected values
and variances of $X \sim P$ and $Y \sim Q$, i.e.,
\begin{align} \label{eq: 1st and 2nd moments}
& \expectation[X] =: m_P, \; \expectation[Y] =: m_Q,
\quad \Var(X) =: \sigma_P^2, \;  \Var(Y) =: \sigma_Q^2.
\end{align}
\begin{enumerate}[a)]
\item If $m_P \neq m_Q$, then
\begin{align} \label{LB-KL}
D(P\|Q) \geq d(r\|s),
\end{align}
where $d(r\|s) := r \log \frac{r}{s} + (1-r) \log \frac{1-r}{1-s}$,
for $r,s \in [0,1]$, denotes the binary relative entropy (with the
convention that $0 \log \frac{0}{0} = 0$), and
\begin{align}
\label{r}
& r := \frac12 + \frac{b}{4av} \in [0,1], \\[0.1cm]
\label{s}
& s := r - \frac{a}{2v} \in [0,1], \\
\label{a}
& a:= m_P - m_Q, \\
\label{b}
& b:= a^2 + \sigma_Q^2 - \sigma_P^2, \\
\label{v}
& v:= \sqrt{\sigma_P^2 + \frac{b^2}{4a^2}}.
\end{align}

\item
The lower bound in the right side of \eqref{LB-KL} is attained for $P$ and
$Q$ which are defined on the two-element set $\set{U} := \{u_1, u_2\}$, and
\begin{align} \label{PMFs}
P(u_1) = r, \quad Q(u_1) = s,
\end{align}
with $r$ and $s$ in \eqref{r} and \eqref{s}, respectively, and for $m_P \neq m_Q$
\begin{align} \label{u_1,2}
& u_1 := m_P + \sqrt{\frac{(1-r) \sigma_P^2}{r}},
\quad u_2 := m_P - \sqrt{\frac{r \sigma_P^2}{1-r}}.
\end{align}

\item
If $m_P = m_Q$, then

\begin{align}  \label{20200502a1}
\inf_{P,Q} \, D(P\|Q) = 0,
\end{align}
where the infimum in the left side of \eqref{20200502a1} is taken over all $P$ and $Q$ which
satisfy \eqref{eq: 1st and 2nd moments}.
\end{enumerate}
\end{theorem}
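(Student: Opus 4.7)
The plan is to combine the integral identity of Corollary~\ref{corollary: RE and chi2} with the Hammersley--Chapman--Robbins (HCR) lower bound on the chi-squared divergence, applied to the identity function, and then to evaluate the resulting one-dimensional integral in closed form.

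For Part~(a), I would start from
\[
\tfrac{1}{\log \mathrm{e}}\, D(P\|Q) = \int_0^1 \chi^2(P\|R_s)\, \frac{\mathrm{d}s}{s},\qquad R_s=(1-s)P+sQ,
\]
and compute $m_{R_s}=(1-s)m_P+s m_Q$ and $\sigma_{R_s}^2=(1-s)\sigma_P^2+s\sigma_Q^2+s(1-s)a^2$, where $a=m_P-m_Q$. The HCR inequality with the identity map gives
\[
\chi^2(P\|R_s)\ \geq\ \frac{(m_P-m_{R_s})^2}{\sigma_{R_s}^2}\ =\ \frac{s^2 a^2}{\sigma_P^2+bs-a^2 s^2},
\]
with $b=a^2+\sigma_Q^2-\sigma_P^2$; substituting yields $\tfrac{1}{\log \mathrm{e}}\, D(P\|Q)\geq \int_0^1 \frac{a^2 s\,\mathrm{d}s}{\sigma_P^2+bs-a^2 s^2}$.

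The main technical step is evaluating this integral and matching it to $d(r\|s)$. I would write the denominator as $v^2-a^2\bigl(s-b/(2a^2)\bigr)^2$ and substitute $u=a\bigl(s-b/(2a^2)\bigr)/v$, which normalizes it to $v^2(1-u^2)$ and turns the integrand into $\tfrac{u}{1-u^2}+\tfrac{b}{2av(1-u^2)}$, with elementary antiderivatives $-\tfrac12\log(1-u^2)$ and $\tfrac{b}{4av}\log\tfrac{1+u}{1-u}$. Using the definitions in \eqref{r}--\eqref{v} one checks that $(1+u_0,1-u_0)=(2(1-r),\,2r)$ at $s=0$, $(1+u_1,1-u_1)=(2(1-s),\,2s)$ at $s=1$, and $b/(4av)=r-\tfrac12$; substituting these endpoint values and collecting terms collapses the integral to $r\log\tfrac{r}{s}+(1-r)\log\tfrac{1-r}{1-s}=d(r\|s)$. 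The bookkeeping of the log-arguments at the endpoints is the step most likely to introduce sign errors, but no ideas beyond standard integration are needed.

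For Part~(b), HCR is saturated precisely by measures supported on two points, so the bound in Part~(a) is attained when every mixture $R_s$ is supported on the same pair of points. I would therefore verify that the two-point distributions in \eqref{PMFs}--\eqref{u_1,2} have the prescribed $m_P,\sigma_P^2$ under $P$---immediate from the choice of $u_1,u_2$---and the prescribed $m_Q,\sigma_Q^2$ under $Q$, a short calculation that reduces to the defining relations \eqref{r}--\eqref{s} for $r$ and $s$. The two-point formula for the relative entropy then directly gives $D(P\|Q)=d(r\|s)$. For Part~(c), I would exhibit a minimizing sequence: with $m:=m_P=m_Q$ and for each $M>\max(\sigma_P,\sigma_Q)$, define probability measures on $\{m-M,\,m,\,m+M\}$ by
\[
P_M(m\pm M) = \tfrac{\sigma_P^2}{2M^2},\qquad P_M(m)=1-\tfrac{\sigma_P^2}{M^2},
\]
and $Q_M$ analogously with $\sigma_Q^2$. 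Both measures have the prescribed moments for every such $M$, and a direct computation shows $D(P_M\|Q_M)=O(1/M^2)$ as $M\to\infty$, proving $\inf D(P\|Q)=0$.
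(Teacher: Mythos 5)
Your proposal is correct. For parts (a) and (b) it is essentially the paper's own argument: identity \eqref{identity2: KL} plus the HCR bound applied to each mixture $R_s$ gives precisely the integral $a^2\int_0^1 \lambda\,\bigl(\sigma_P^2+b\lambda-a^2\lambda^2\bigr)^{-1}\,\mathrm{d}\lambda$, and attainment is verified directly on the two-point support \eqref{u_1,2}. The only difference in (a) is how the integral is evaluated: you complete the square and substitute $u=a\bigl(\lambda-b/(2a^2)\bigr)/v$, whereas the paper factors the denominator as $(\alpha-a\lambda)(\beta+a\lambda)$ and uses partial fractions; your endpoint identities ($1-u=2r$ at $\lambda=0$, $1-u=2s$ at $\lambda=1$, $b/(4av)=r-\tfrac12$) and the final collection of logarithms into $d(r\|s)$ all check out. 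In (b), the ``short calculation'' that $Q(u_1)=s$ on $\{u_1,u_2\}$ reproduces $m_Q$ and $\sigma_Q^2$ is the substantive content of that item (the paper isolates it as a separate lemma), but it does reduce to the defining relations \eqref{r}--\eqref{s} as you claim. Part (c) is where you genuinely diverge: your symmetric three-point family on $\{m-M,m,m+M\}$ with outer masses $\sigma^2/(2M^2)$ is simpler and cleaner than either of the paper's constructions (a binary $P$ with a vanishing far-away atom added to $Q$, or a scaled four-point family), the moment constraints hold exactly for every admissible $M$, and the estimate $D(P_M\|Q_M)=O(1/M^2)$ is correct. Note only that, like the paper's own constructions, yours implicitly requires $\sigma_Q>0$: if $\sigma_Q=0<\sigma_P$ then $P\not\ll Q$ for every admissible $P$, and the infimum in \eqref{20200502a1} is actually $+\infty$ --- an edge case the theorem statement leaves implicit.
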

\begin{proof}
See Section~\ref{subsection: proof KL-LB}.
\end{proof}

\begin{remark}
Consider the case of the non-equal means in Items~(a) and~(b) of
Theorem~\ref{theorem: KL LB}. If these means are fixed, then
the infimum of $D(P\|Q)$ is zero by choosing arbitrarily large
equal variances. Suppose now that the non-equal means $m_P$ and
$m_Q$ are fixed, as well as one of the variances (either $\sigma_P^2$
or $\sigma_Q^2$).
Numerical experimentation shows that in this case, the achievable lower
bound in \eqref{LB-KL} is monotonically decreasing as a function of the
other variance, and it tends to zero as we let the free
variance tend to infinity. This asymptotic convergence
to zero can be justified by assuming, for example, that $m_P, m_Q$
and $\sigma_Q^2$ are fixed, and $m_P > m_Q$ (the other cases can be
justified in a similar way). Then, it can be verified from
\eqref{r}--\eqref{v} that
\begin{align}
\label{20200507}
r = \frac{(m_P-m_Q)^2}{\sigma_P^2} + O\biggl(\frac1{\sigma_P^4}\biggr),
\quad s = O\biggl(\frac1{\sigma_P^4}\biggr),
\end{align}
which implies that $d(r\|s) \to 0$ as we let $\sigma_P \to \infty$.
The infimum of the relative entropy $D(P\|Q)$ is therefore equal to
zero since the probability measures $P$ and $Q$ in \eqref{PMFs}
and \eqref{u_1,2}, which are defined on a two-element
set and attain the lower bound on the relative entropy under the
constraints in \eqref{eq: 1st and 2nd moments}, have a vanishing relative
entropy in this asymptotic case.
\end{remark}

\begin{remark}
The proof of Item~c) in Theorem~\ref{theorem: KL LB}
suggests explicit constructions of sequences of pairs probability measures $\{(P_n, Q_n)\}$ such that
\begin{enumerate}[a)]
\item The means under $P_n$ and $Q_n$ are both equal to $m$ (independently of $n$);
\item The variance under $P_n$ is equal to $\sigma_P^2$, and the variance under $Q_n$
is equal to $\sigma_Q^2$ (independently of $n$);
\item The relative entropy $D(P_n \| Q_n)$ vanishes as we let $n \to \infty$.
\end{enumerate}
This yields in particular \eqref{20200502a1}.
\end{remark}

\vspace*{0.2cm}
A second consequence of Theorem~\ref{thm: KL and chi^2} gives the
following result. Its first part holds due to the concavity of
$\exp\bigl(-D(P\|\cdot)\bigr)$ (see \cite[Problem~4.2]{Verdu20}).
The second part is new, and its proof relies on Theorem~\ref{thm: KL and chi^2}.
As an educational note, we provide an alternative proof of the first part
by relying on Theorem~\ref{thm: KL and chi^2}.
\begin{theorem} \label{thm: F}
Let $P \ll Q$, and $F \colon [0,1] \to [0, \infty)$ be given by
\begin{align} \label{def:F}
F(\lambda) := D \bigl(P \, \| \, (1-\lambda)P + \lambda Q \bigr),
\quad \forall \, \lambda \in [0,1].
\end{align}
Then, for all $\lambda \in [0,1]$,
\begin{align}
\label{UB on F}
F(\lambda) \leq \log \Biggl( \frac1{1-\lambda+\lambda \exp\bigl(-D(P\|Q)\bigr)}
\Biggr),
\end{align}
with an equality if $\lambda=0$ or $\lambda=1$. Moreover, $F$ is
monotonically increasing, differentiable, and it satisfies
\begin{align}
\label{diff F 2a}
& F'(\lambda) \geq \frac1{\lambda} \Bigl[\exp\bigl(F(\lambda)\bigr)
- 1 \Bigr] \log \mathrm{e}, \quad \forall \, \lambda \in (0,1], \\[0.2cm]
\label{diff F 2b}
& \lim_{\lambda \to 0^+} \frac{F'(\lambda)}{\lambda}
= \chi^2(Q \| P) \, \log \mathrm{e},
\end{align}
so, the limit in \eqref{diff F 2b} is twice larger than the value of the lower
bound on this limit as it follows from the right side of \eqref{diff F 2a}.
\end{theorem}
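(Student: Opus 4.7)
My plan hinges on the integral identity \eqref{identity: KL} from Theorem~\ref{thm: KL and chi^2}, which writes
\begin{align*}
F(\lambda) = \log \mathrm{e} \int_0^\lambda \chi^2(P\|R_s) \, \frac{\mathrm{d}s}{s}, \qquad \lambda \in [0,1],
\end{align*}
and handles differentiability, monotonicity, and the differential inequality \eqref{diff F 2a} more or less mechanically; the upper bound \eqref{UB on F} is instead a direct consequence of concavity of $Q \mapsto \exp(-D(P\|Q))$ for fixed $P$.

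For \eqref{UB on F}, I would follow the hint and invoke the cited concavity of $Q \mapsto \exp(-D(P\|Q))$, applied to $R_\lambda = (1-\lambda)P + \lambda Q$:
\begin{align*}
\exp\bigl(-F(\lambda)\bigr) \geq (1-\lambda)\exp\bigl(-D(P\|P)\bigr) + \lambda \exp\bigl(-D(P\|Q)\bigr) = 1-\lambda + \lambda \exp\bigl(-D(P\|Q)\bigr),
\end{align*}
and taking $-\log$ of both sides yields \eqref{UB on F}. Equality at $\lambda \in \{0,1\}$ is immediate since $F(0)=0$ and $F(1)=D(P\|Q)$.

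For the remaining claims I rely on a direct computation with $p,q$ the densities of $P,Q$ and $r_s = (1-s)p + sq$:
\begin{align*}
\chi^2(P\|R_s) = \int \frac{(p - r_s)^2}{r_s} \, \mathrm{d}\mu = s^2 \int \frac{(p-q)^2}{r_s} \, \mathrm{d}\mu,
\end{align*}
so the integrand $\chi^2(P\|R_s)/s$ in the displayed identity is continuous on $(0,1]$ and vanishes as $s \to 0^+$. The fundamental theorem of calculus then gives
\begin{align*}
F'(\lambda) = \frac{\log \mathrm{e}}{\lambda} \, \chi^2(P\|R_\lambda) \geq 0,
\end{align*}
establishing monotonicity and differentiability on $(0,1]$. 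Plugging the inequality $D(P\|R_\lambda) \leq \log\bigl(1+\chi^2(P\|R_\lambda)\bigr)$ from \eqref{RD2-chi2} (equivalently, $\chi^2(P\|R_\lambda) \geq \exp(F(\lambda))-1$) into this formula yields \eqref{diff F 2a}. For the limit \eqref{diff F 2b}, the above identity gives $F'(\lambda)/\lambda = \log \mathrm{e} \int (p-q)^2/r_\lambda \, \mathrm{d}\mu$; since $r_\lambda \geq p/2$ for $\lambda \in (0,1/2]$, the integrand is dominated by $2(p-q)^2/p$, which is integrable precisely when $\chi^2(Q\|P) < \infty$, and dominated convergence yields $\log \mathrm{e} \cdot \chi^2(Q\|P)$ (a Fatou argument covers the divergent case).

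The step I expect to be most delicate is the promised alternative proof of \eqref{UB on F} purely via Theorem~\ref{thm: KL and chi^2}. A natural attempt is to establish concavity of $\lambda \mapsto \exp(-F(\lambda))$ by computing two derivatives through $F'(\lambda) = (\log \mathrm{e}/\lambda)\chi^2(P\|R_\lambda)$ and bounding $F''$ against $(F')^2$; however, a straightforward differentiation produces integrals of $(p-q)^3/r_\lambda^2$ which have no definite sign, so closing this route will be the main technical obstacle.
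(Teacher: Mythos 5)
Your proposal is correct, but it diverges from the paper's own proof in two places, one of which is worth your attention because it resolves the obstacle you flagged at the end. The core of your argument --- differentiating \eqref{identity: KL} to get $F'(\lambda) = \tfrac{1}{\lambda}\,\chi^2(P\|R_\lambda)\log \mathrm{e}$ and combining with \eqref{RD2-chi2} to obtain \eqref{diff F 2a} --- is exactly the paper's route. For \eqref{diff F 2b} the paper instead invokes the local-behavior result \cite[Theorem~8]{Sason18} to get $\lim_{\lambda\to 0^+} F(\lambda)/\lambda^2 = \tfrac12\chi^2(Q\|P)\log\mathrm{e}$ and then applies L'H\^{o}pital; your direct dominated-convergence computation of $\int (p-q)^2/r_\lambda \,\mathrm{d}\mu$ is more elementary and self-contained, and it also hands you the final ``twice larger'' remark, since $\exp(F(\lambda))-1 \sim F(\lambda)$ and $F(\lambda)/\lambda^2 \to \tfrac12 \chi^2(Q\|P)\log\mathrm{e}$. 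For \eqref{UB on F} you rely on the concavity of $Q\mapsto\exp(-D(P\|\cdot))$, which the paper explicitly accepts as a valid proof of that part (citing \cite[Problem~4.2]{Verdu20}), so your proof of the theorem is complete as written; but the paper's stated goal is precisely the alternative derivation of \eqref{UB on F} from Theorem~\ref{thm: KL and chi^2} alone, and the trick is not the one you attempted. Rather than trying to prove concavity of $\lambda\mapsto\exp(-F(\lambda))$ via second derivatives (which, as you found, produces sign-indefinite integrals of $(p-q)^3/r_\lambda^2$), one separates variables in the first-order inequality \eqref{diff F 2a} and integrates over $[\lambda,1]$:
\begin{align*}
\int_\lambda^1 \frac{F'(t)\,\exp\bigl(-F(t)\bigr)}{1-\exp\bigl(-F(t)\bigr)}\,\mathrm{d}t
= \log\Biggl(\frac{1-\exp\bigl(-D(P\|Q)\bigr)}{1-\exp\bigl(-F(\lambda)\bigr)}\Biggr)
\geq \int_\lambda^1 \frac{\mathrm{d}t}{t}\,\log\mathrm{e} = \log\frac1\lambda,
\end{align*}
and rearranging gives $\exp(-F(\lambda)) \geq 1-\lambda+\lambda\exp(-D(P\|Q))$, i.e., \eqref{UB on F}. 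So the Gr\"{o}nwall-type integration of the differential inequality, not concavity of $\exp(-F)$, is what closes that route.
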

\begin{proof}
See Section~\ref{subsection: proof of thm: F}.
\end{proof}

\begin{remark}
By the convexity of the relative entropy, it follows that
$F(\lambda) \leq \lambda \, D(P\|Q)$ for all $\lambda \in [0,1]$.
It can be verified, however, that the inequality
$1-\lambda + \lambda \exp(-x) \geq \exp(-\lambda x)$ holds for
all $x \geq 0$ and $\lambda \in [0,1]$. Letting $x:= D(P\|Q)$
implies that the upper bound on $F(\lambda)$ in the right side
of \eqref{UB on F} is tighter than or equal to the latter bound
(with an equality if and only if either $\lambda \in \{0,1\}$
or $P \equiv Q$).
\end{remark}

\begin{corollary}
\label{corollary: UB on relative entropy}
Let $\{P_j\}_{j=1}^m$, with $m \in \naturals$, be probability measures
defined on a measurable space $(\set{X}, \mathscr{F})$, and let
$\{\alpha_j\}_{j=1}^m$ be a sequence of non-negative numbers that sum
to~1. Then, for all $i \in \{1, \ldots, m\}$,
\begin{align}
\label{1911a1}
D\Biggl( P_i \, \| \, \sum_{j=1}^m \alpha_j P_j \Biggr) & \leq
-\log \Biggl( \alpha_i + (1-\alpha_i)
\exp\biggl(-\tfrac1{1-\alpha_i} \sum_{j \neq i} \alpha_j \,
D(P_i \| P_j) \biggr) \Biggr).
\end{align}
\end{corollary}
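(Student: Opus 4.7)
The plan is to reduce the corollary to the upper bound on $F(\lambda)$ in Theorem~\ref{thm: F}. Fix an arbitrary index $i \in \{1,\ldots,m\}$; the case $\alpha_i = 1$ is trivial (both sides equal zero), so assume $\alpha_i < 1$. Set
\begin{align*}
P := P_i, \quad \lambda := 1 - \alpha_i \in (0,1], \quad Q := \frac{1}{1-\alpha_i} \sum_{j \neq i} \alpha_j \, P_j.
\end{align*}
Then $Q$ is a probability measure, and
\begin{align*}
\sum_{j=1}^m \alpha_j \, P_j = \alpha_i P_i + (1-\alpha_i) Q = (1-\lambda) P + \lambda Q,
\end{align*}
so the left side of \eqref{1911a1} equals $F(\lambda)$ for the function $F$ in \eqref{def:F}.

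Next I would invoke the bound \eqref{UB on F} from Theorem~\ref{thm: F}, obtaining
\begin{align*}
D\Biggl( P_i \, \Big\| \, \sum_{j=1}^m \alpha_j P_j \Biggr) \leq -\log\Bigl( \alpha_i + (1-\alpha_i) \exp\bigl(-D(P_i \| Q)\bigr) \Bigr).
\end{align*}
It remains only to control $D(P_i \| Q)$. Since $Q$ is the convex combination $Q = \sum_{j \neq i} \frac{\alpha_j}{1-\alpha_i} \, P_j$ of the measures $\{P_j\}_{j \neq i}$, the convexity of the relative entropy in its second argument yields
\begin{align*}
D(P_i \| Q) \leq \sum_{j \neq i} \frac{\alpha_j}{1-\alpha_i} \, D(P_i \| P_j) = \frac{1}{1-\alpha_i} \sum_{j \neq i} \alpha_j \, D(P_i \| P_j).
\end{align*}

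Finally, the function $x \mapsto -\log\bigl(\alpha_i + (1-\alpha_i) e^{-x}\bigr)$ is monotonically non-decreasing on $[0,\infty)$, so combining the last two displays gives exactly the claimed inequality \eqref{1911a1}. No step is really an obstacle here: the whole argument is a bookkeeping reduction, with Theorem~\ref{thm: F} doing the heavy lifting and convexity of $D(P_i\|\cdot)$ supplying the needed upper bound on the exponent. The only thing to watch is the degenerate case $\alpha_i \in \{0,1\}$, and one verifies that for $\alpha_i = 0$ the inequality collapses to the standard convexity bound $D(P_i \| \sum_j \alpha_j P_j) \leq \sum_{j \neq i} \alpha_j D(P_i \| P_j)$, consistent with the stated formula.
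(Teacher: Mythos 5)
Your proposal is correct and follows essentially the same route as the paper's own proof: apply the bound \eqref{UB on F} with $P := P_i$, $\lambda := 1-\alpha_i$, $Q := \tfrac1{1-\alpha_i}\sum_{j\neq i}\alpha_j P_j$, and then use the convexity of $D(P_i\|\cdot)$ together with the monotonicity of $x \mapsto -\log\bigl(\alpha_i + (1-\alpha_i)e^{-x}\bigr)$. Your explicit handling of the degenerate cases $\alpha_i \in \{0,1\}$ is a small welcome addition that the paper leaves implicit.
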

\begin{proof}
For an arbitrary $i \in \{1, \ldots, m\}$, apply the upper bound in
the right side of \eqref{UB on F} with $\lambda := 1-\alpha_i$,
$P := P_i$ and
$Q := \tfrac1{1-\alpha_i} \, \underset{j \neq i}{\sum} \alpha_j P_j$.
The right side of \eqref{1911a1} is obtained from \eqref{UB on F} by
invoking the convexity of the relative entropy, which gives
$D(P_i \|Q) \leq \tfrac1{1-\alpha_i} \underset{j \neq i}{\sum} \alpha_j D(P_i \| P_j)$.
\end{proof}

The next result provides an upper bound on the non-negative difference between
the entropy of a convex combination of distributions and the respective convex
combination of the individual entropies (it is also termed as the concavity
deficit of the entropy function in \cite[Section~3]{Allerton19}).

\begin{corollary}
\label{corollary: entropy of a mixture of distributions}
Let $\{P_j\}_{j=1}^m$, with $m \in \naturals$, be probability measures
defined on a measurable space $(\set{X}, \mathscr{F})$, and let
$\{\alpha_j\}_{j=1}^m$ be a sequence of non-negative numbers that sum
to~1. Then, the entropy of the mixed distribution $\sum_j \alpha_j P_j$ satisfies
\begin{align}
0 & \leq H \Biggl( \sum_{j=1}^m \alpha_j P_j \Biggr) - \sum_{j=1}^m \alpha_j H(P_j) \nonumber \\
\label{eq: entropy of a mixture of distributions}
& \leq - \sum_{i=1}^m \alpha_i \log \Biggl( \alpha_i + (1-\alpha_i)
\exp\biggl(-\tfrac1{1-\alpha_i} \sum_{j \neq i} \alpha_j \,
D(P_i \| P_j) \biggr) \Biggr).
\end{align}
\end{corollary}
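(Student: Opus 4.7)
The plan is to first re-express the concavity deficit as a weighted sum of relative entropies, then apply Corollary~\ref{corollary: UB on relative entropy} term-by-term, and finally sum the resulting bounds.

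First, let $\bar{P} := \sum_{k=1}^m \alpha_k P_k$ denote the mixture, with density $\bar{p} = \sum_k \alpha_k p_k$ with respect to a common dominating measure $\mu$. The opening step is to verify the standard identity
\begin{align*}
H(\bar{P}) - \sum_{j=1}^m \alpha_j H(P_j)
= \sum_{j=1}^m \alpha_j \, D(P_j \| \bar{P}),
\end{align*}
which follows by writing $H(\bar{P}) = -\int \bar{p} \log \bar{p} \, \mathrm{d}\mu = -\sum_j \alpha_j \int p_j \log \bar{p} \, \mathrm{d}\mu$ and combining with $-\sum_j \alpha_j H(P_j) = \sum_j \alpha_j \int p_j \log p_j \, \mathrm{d}\mu$. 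The left inequality in \eqref{eq: entropy of a mixture of distributions} then follows immediately, since each term $D(P_j \| \bar{P})$ is non-negative (equivalently, from the concavity of the entropy functional).

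Next, I would invoke Corollary~\ref{corollary: UB on relative entropy} to bound each of the $m$ relative entropies on the right side of this identity. For every $i \in \{1,\ldots,m\}$, that corollary gives
\begin{align*}
D(P_i \| \bar{P}) \leq - \log \Biggl( \alpha_i + (1-\alpha_i)
\exp\biggl(-\tfrac1{1-\alpha_i} \sum_{j \neq i} \alpha_j \, D(P_i \| P_j) \biggr) \Biggr).
\end{align*}
Multiplying the $i$-th such bound by $\alpha_i$ and summing over $i \in \{1,\ldots,m\}$ yields precisely the right-hand side of \eqref{eq: entropy of a mixture of distributions}, completing the proof.

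There is essentially no hard step here; the corollary is a direct packaging of Corollary~\ref{corollary: UB on relative entropy} with the concavity-deficit identity. The only point requiring mild care is the degenerate case $\alpha_i \in \{0,1\}$: when $\alpha_i = 1$ all other weights vanish, the mixture coincides with $P_i$, and both sides of the claimed inequality reduce to $0$ (with the convention $0 \cdot \infty = 0$ applied to any $D(P_i\|P_j)$ that may be infinite); when $\alpha_i = 0$ the corresponding summand on the right side vanishes and can simply be omitted, matching the way the $i$-th term drops out of the identity on the left.
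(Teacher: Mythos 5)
Your proposal is correct and follows the paper's own argument exactly: the concavity-deficit identity $H(\sum_j \alpha_j P_j) - \sum_j \alpha_j H(P_j) = \sum_i \alpha_i D(P_i \| \sum_j \alpha_j P_j)$ combined with a term-by-term application of Corollary~\ref{corollary: UB on relative entropy}. The extra care you take with the degenerate weights $\alpha_i \in \{0,1\}$ is a reasonable addition not spelled out in the paper.
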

\begin{proof}
The lower bound holds due to the concavity of the entropy function.
The upper bound readily follows from Corollary~\ref{corollary: UB on relative entropy},
and the identity
\begin{align}  \label{eq: identity mixtures}
H \Biggl( \sum_{j=1}^m \alpha_j P_j \Biggr) - \sum_{j=1}^m \alpha_j H(P_j)
= \sum_{i=1}^m \alpha_i D\Biggl(P_i \, \| \, \sum_{j=1}^m \alpha_j P_j \Biggr).
\end{align}
\end{proof}
\begin{remark}
The upper bound in \eqref{eq: entropy of a mixture of distributions} refines
the known bound (see, e.g., \cite[Lemma~2.2]{WangM_IT14})
\begin{align}
\label{eq2: entropy of a mixture of distributions}
H \Biggl( \sum_{j=1}^m \alpha_j P_j \Biggr) - \sum_{j=1}^m \alpha_j H(P_j)
\leq \sum_{j=1}^m \alpha_j \, \log \frac1{\alpha_j} = H(\underline{\alpha}),
\end{align}
by relying on all the $\tfrac12 m (m-1)$ pairwise relative entropies between the
individual distributions $\{P_j\}_{j=1}^m$. Another refinement of
\eqref{eq2: entropy of a mixture of distributions}, expressed in terms of total
variation distances, has been recently provided in \cite[Theorem~3.1]{Allerton19}.
\end{remark}

\subsection{Monotonic Sequences of $f$-divergences and an Extension
of Theorem~\ref{thm: KL and chi^2}}
\label{subsec: monotonic_sequence}

The present subsection generalizes Theorem~\ref{thm: KL and chi^2},
and it also provides relations between $f$-divergences
which are defined in a recursive way.

\begin{theorem} \label{theorem: monotonic_f_divergence_seq}
Let $P$ and $Q$ be probability measures defined on a measurable space
$(\set{X}, \mathscr{F})$. Let $R_\lambda$, for $\lambda \in [0,1]$, be
the convex combination of $P$ and $Q$ as in \eqref{mixture PMF}.
Let $f_0 \colon (0, \infty) \to \Reals$ be a convex function
with $f_0(1)=0$, and let $\{f_k(\cdot)\}_{k=0}^{\infty}$ be a sequence
of functions that are defined on $(0, \infty)$ by the recursive equation
\begin{align} \label{eq:recursion f_k}
f_{k+1}(x) := \int_0^{1-x} f_k(1-s) \; \frac{\mathrm{d}s}{s}, \quad x >0,
\; \; k \in \{0,1,\ldots\}.
\end{align}
Then,
\begin{enumerate}[1)]
\item $\bigl\{D_{f_k}(P\|Q)\bigr\}_{k=0}^{\infty}$ is a non-increasing
(and non-negative) sequence of $f$-divergences.
\item For all $\lambda \in[0,1]$ and $k \in \{0, 1, \ldots\}$,
\begin{align} \label{eq:fD_integral}
D_{f_{k+1}}(R_\lambda \| P)
= \int_0^\lambda D_{f_k}(R_s \| P)  \; \frac{\mathrm{d}s}{s}.
\end{align}
\end{enumerate}
\end{theorem}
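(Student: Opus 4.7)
The plan is to prove the theorem in three steps. First, show inductively that the recursion \eqref{eq:recursion f_k} preserves both convexity and the normalization $f_k(1)=0$, so that each $D_{f_k}$ is a genuine $f$-divergence (and in particular non-negative). Second, establish the integral identity \eqref{eq:fD_integral} via Tonelli combined with a change of variables in the inner integral. Third, deduce monotonicity of $\bigl\{D_{f_k}(P\|Q)\bigr\}$ as a short consequence of \eqref{eq:fD_integral} together with the joint convexity of $f$-divergences in their first argument.

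For the first step, note that $f_{k+1}(1)=0$ is immediate from \eqref{eq:recursion f_k}. The substitution $u=1-s$ recasts \eqref{eq:recursion f_k} as $f_{k+1}(x)=\int_x^1 f_k(u)\,\frac{\mathrm{d}u}{1-u}$ for $x\leq 1$ (and the analogous integral over $[1,x]$ for $x\geq 1$), whereupon the fundamental theorem of calculus yields $f_{k+1}'(x)=\frac{f_k(x)-f_k(1)}{x-1}$ for $x\neq 1$. This expression is the slope of the secant of $f_k$ from $1$ to $x$, which by convexity of $f_k$ is monotone nondecreasing in $x$; at $x=1$ the one-sided derivatives of $f_{k+1}$ reduce to those of $f_k$, which satisfy $f_k'(1^-)\leq f_k'(1^+)$. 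Hence $f_{k+1}$ is convex, and the induction proceeds.

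For the second step, let $p,q$ denote the densities of $P,Q$ with respect to a dominating measure $\mu$, and use the identity $1-r_s/p=s(1-q/p)$. After subtracting a supporting-line term $c(x-1)$ at $x=1$ from $f_k$ (which leaves $D_{f_k}$ invariant and, as one checks by direct substitution, shifts $f_{k+1}$ by exactly the same additive $c(x-1)$, so that both sides of \eqref{eq:fD_integral} are unchanged), one may assume $f_k\geq 0$. Tonelli's theorem then permits swapping the order of integration:
\begin{align*}
\int_0^\lambda D_{f_k}(R_s\|P)\,\frac{\mathrm{d}s}{s}
= \int p \left[\int_0^\lambda f_k\!\bigl(1+s(q/p-1)\bigr)\,\frac{\mathrm{d}s}{s}\right] \mathrm{d}\mu.
\end{align*}
The inner integral is then evaluated by the change of variables $\sigma:=s(1-q/p)$, which preserves $\frac{\mathrm{d}s}{s}=\frac{\mathrm{d}\sigma}{\sigma}$ and sends $[0,\lambda]$ to the oriented segment from $0$ to $1-r_\lambda/p$; by \eqref{eq:recursion f_k} this inner integral equals $f_{k+1}(r_\lambda/p)$, and integrating against $p\,\mathrm{d}\mu$ produces $D_{f_{k+1}}(R_\lambda\|P)$. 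The main technical obstacle will be the careful bookkeeping of signs when $q(x)>p(x)$: in that case $1-r_\lambda(x)/p(x)<0$, and one must verify that the convention $\int_0^y:=-\int_y^0$ for $y<0$ implicit in \eqref{eq:recursion f_k} is consistent with the orientation produced by the substitution.

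For the third step, apply \eqref{eq:fD_integral} with the roles of $P$ and $Q$ interchanged and $\lambda=1$ to obtain $D_{f_{k+1}}(P\|Q)=\int_0^1 D_{f_k}\bigl((1-s)Q+sP\,\|\,Q\bigr)\,\frac{\mathrm{d}s}{s}$. Joint convexity of the $f$-divergence in its first argument, together with $D_{f_k}(Q\|Q)=0$, bounds the integrand by $s\,D_{f_k}(P\|Q)$; integrating over $s\in(0,1]$ yields $D_{f_{k+1}}(P\|Q)\leq D_{f_k}(P\|Q)$, completing the proof of monotonicity.
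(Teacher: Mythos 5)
Your proposal is correct, and while your treatment of the integral identity \eqref{eq:fD_integral} coincides with the paper's (the same substitution $\sigma = s(1-q/p)$, i.e.\ $s:=(p-q)s'/p$, followed by an interchange of the order of integration --- you are merely more explicit about the Tonelli justification and the sign convention when $q>p$, both of which the paper leaves implicit), you handle the other two ingredients differently. For convexity, the paper rewrites the recursion as $f_{k+1}(x)=\int_0^1 f_k(s'x-s'+1)\,\frac{\mathrm{d}s'}{s'}$ and pushes a convex combination through the integral, whereas you differentiate to get $f_{k+1}'(x)=\frac{f_k(x)-f_k(1)}{x-1}$ and invoke monotonicity of secant slopes; both are sound, and your derivative formula is in fact also used by the paper (in its proof of the pointwise inequality below). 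The real divergence is in part 1): the paper proves the \emph{pointwise} inequality $f_{k+1}(x)\leq f_k(x)$ for all $x>0$, by evaluating the supporting line of $f_{k+1}$ at $y=1$ and using $f_{k+1}(1)=0$ together with $f_{k+1}'(x)=\frac{f_k(x)}{x-1}$, after which monotonicity of the divergences is immediate; you instead apply \eqref{eq:fD_integral} with $P$ and $Q$ swapped at $\lambda=1$ and bound the integrand by $s\,D_{f_k}(P\|Q)$ via convexity of $D_{f_k}(\cdot\|Q)$ and $D_{f_k}(Q\|Q)=0$. Your route is perfectly adequate for the theorem as stated, but it makes part 1) logically dependent on part 2) and on the convexity of $f$-divergences in the first argument, and it yields only the divergence-level monotonicity; the paper's argument is self-contained at the level of the functions $f_k$ and delivers the strictly stronger pointwise statement (recorded there as a separate lemma) essentially for free.
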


\begin{proof}
See Section~\ref{subsection: proof monotonic_f_divergence_seq}.
\end{proof}

We next use the polylogarithm functions, which satisfy the recursive
equation \cite[Eq.~(7.2)]{Lewin}:
\begin{align} \label{eq:polylog}
\mathrm{Li}_{k}(x) :=
\begin{dcases}
\frac{x}{1-x},  & \quad \mbox{if} \hspace*{0.15cm} k=0, \\[0.2cm]
\int_0^{x} \frac{\mathrm{Li}_{k-1}(s)}{s} \; \mathrm{d}s,
& \quad \mbox{if} \hspace*{0.15cm} k\geq 1.
\end{dcases}
\end{align}
This gives $\mathrm{Li}_1(x) = -\log_{\mathrm{e}}(1-x)$,
$\mathrm{Li}_2(x) = -\int_0^{x} \frac1{s} \, \log_{\mathrm{e}}(1-s) \, \mathrm{d}s$
and so on, which are real-valued and finite for $x<1$.

\begin{corollary} \label{corollary: polylogarithm}
Let
\begin{align} \label{eq: def f_k}
f_k(x) := \mathrm{Li}_{k}(1-x), \quad x > 0, \; \; k \in \{0, 1, \ldots\}.
\end{align}
Then, \eqref{eq:fD_integral} holds for all $\lambda \in[0,1]$ and
$k \in \{0, 1, \ldots\}$. Furthermore, setting $k=0$ in \eqref{eq:fD_integral}
yields \eqref{identity: KL} as a special case.
\end{corollary}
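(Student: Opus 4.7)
The strategy is to verify that the sequence $\{f_k\}$ in \eqref{eq: def f_k} is an admissible input for Theorem~\ref{theorem: monotonic_f_divergence_seq}, and then to specialize the resulting identity at $k=0$.

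First I would check the base case. Using \eqref{eq:polylog} with $k=0$,
\begin{equation*}
f_0(x) \;=\; \mathrm{Li}_0(1-x) \;=\; \frac{1-x}{1-(1-x)} \;=\; \frac{1}{x}-1,
\end{equation*}
which is strictly convex on $(0,\infty)$ and satisfies $f_0(1)=0$, so the hypotheses of Theorem~\ref{theorem: monotonic_f_divergence_seq} are met. Next I would verify that the polylogarithm recursion encodes precisely the $f$-divergence recursion \eqref{eq:recursion f_k}. Since $f_k(1-s)=\mathrm{Li}_k(1-(1-s))=\mathrm{Li}_k(s)$, the defining integral \eqref{eq:polylog} gives
\begin{equation*}
f_{k+1}(x) \;=\; \mathrm{Li}_{k+1}(1-x) \;=\; \int_0^{1-x} \frac{\mathrm{Li}_k(s)}{s}\,\mathrm{d}s \;=\; \int_0^{1-x} f_k(1-s)\,\frac{\mathrm{d}s}{s},
\end{equation*}
which is exactly \eqref{eq:recursion f_k}. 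Applying Theorem~\ref{theorem: monotonic_f_divergence_seq} then yields \eqref{eq:fD_integral} for all $\lambda \in [0,1]$ and $k \in \{0,1,\ldots\}$.

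Finally, to recover \eqref{identity: KL}, I would compute the two specific $f$-divergences at $k=0$. Denoting the density of $R_s$ by $r_s$, with $f_0(x)=\tfrac{1}{x}-1$,
\begin{equation*}
D_{f_0}(R_s\|P) \;=\; \int p\,f_0\!\left(\tfrac{r_s}{p}\right)\mathrm{d}\mu \;=\; \int \frac{p^2}{r_s}\,\mathrm{d}\mu - 1 \;=\; \chi^2(P\|R_s),
\end{equation*}
and with $f_1(x)=\mathrm{Li}_1(1-x)=-\ln x$,
\begin{equation*}
D_{f_1}(R_\lambda\|P) \;=\; \int p \, \ln\!\left(\tfrac{p}{r_\lambda}\right)\mathrm{d}\mu \;=\; \tfrac{1}{\log\mathrm{e}}\,D(P\|R_\lambda),
\end{equation*}
the last equality accounting for the conversion between natural logarithm and the arbitrary base used in the definition of $D(\cdot\|\cdot)$. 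Substituting into the $k=0$ case of \eqref{eq:fD_integral} gives exactly \eqref{identity: KL}.

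There is no substantive obstacle: the whole argument is a bookkeeping matching between \eqref{eq:polylog} and \eqref{eq:recursion f_k}. The only points requiring care are (i) the direction of the arguments—\eqref{eq:fD_integral} features $D_{f_k}(R_s\|P)$ rather than $D_{f_k}(P\|R_s)$, which is why the densities arrange themselves correctly in the computation of $D_{f_0}$ and $D_{f_1}$ above—and (ii) the log-base conversion factor $\tfrac{1}{\log\mathrm{e}}$ that arises from the use of natural logarithms inside the polylogarithm.
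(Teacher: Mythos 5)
Your proposal is correct and follows essentially the same route as the paper: identify $f_0(x)=\tfrac1x-1$ as a valid convex seed, observe that the polylogarithm recursion \eqref{eq:polylog} is exactly \eqref{eq:recursion f_k} under the substitution $f_k(x)=\mathrm{Li}_k(1-x)$, invoke Theorem~\ref{theorem: monotonic_f_divergence_seq}, and then evaluate $D_{f_0}$ and $D_{f_1}$ (with the reversed argument order) to recover \eqref{identity: KL}. Your explicit computation of $D_{f_0}(R_s\|P)=\chi^2(P\|R_s)$ and $D_{f_1}(R_\lambda\|P)=\tfrac{1}{\log\mathrm{e}}D(P\|R_\lambda)$ matches the paper's identity \eqref{reverse KL and chi^2}, including the log-base conversion.
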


\begin{proof}
See Section~\ref{proof: corollary-polylogarithm}.
\end{proof}

\subsection{On Probabilities and $f$-divergences}
\label{subsec: prob. and f-divergences}

The following result relates probabilities of sets to $f$-divergences.
\begin{theorem} \label{thm: gen. Csiszar}
Let $(\set{X}, \mathscr{F}, \mu)$ be a probability space, and let
$\set{C} \in \mathscr{F}$ be a measurable set with $\mu(\set{C}) > 0$.
Define the conditional probability measure
\begin{align} \label{cond. PM}
\mu_{\set{C}}(\set{E}) := \frac{\mu(\set{C} \cap \set{E})}{\mu(\set{C})},
\quad \forall \, \set{E} \in \mathscr{F}.
\end{align}
Let $f \colon (0, \infty) \to \Reals$ be an arbitrary convex function
with $f(1)=0$, and assume (by continuous extension of $f$ at zero) that
$f(0) := \underset{t \to 0^+}{\lim} f(t) < \infty$. Furthermore, let
$\widetilde{f} \colon (0, \infty) \to \Reals$ be the convex function
which is given by
\begin{align} \label{f conjugate}
\widetilde{f}(t) := t f\biggl(\frac1t\biggr), \quad \forall \, t>0.
\end{align}
Then,
\begin{align} \label{eq: gen. Csiszar}
D_f(\mu_{\set{C}} \| \mu)
= \widetilde{f}\bigl(\mu(\set{C})\bigr) + \bigl(1-\mu(\set{C})\bigr) \, f(0).
\end{align}
\end{theorem}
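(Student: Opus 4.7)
The plan is to unwind the definition of $D_f$ directly, using $\mu$ itself as the dominating measure for both $\mu_{\set{C}}$ and $\mu$.

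First I would verify that $\mu_{\set{C}} \ll \mu$: indeed, if $\mu(\set{E})=0$ for some $\set{E} \in \mathscr{F}$, then $\mu(\set{C} \cap \set{E})=0$, so by \eqref{cond. PM} we get $\mu_{\set{C}}(\set{E})=0$. Next, reading \eqref{cond. PM} as $\mu_{\set{C}}(\set{E}) = \int_{\set{E}} \frac{1\{x \in \set{C}\}}{\mu(\set{C})} \, \mathrm{d}\mu(x)$ identifies the Radon--Nikodym derivative as
\begin{align*}
p(x) := \frac{\mathrm{d}\mu_{\set{C}}}{\mathrm{d}\mu}(x) = \frac{1\{x \in \set{C}\}}{\mu(\set{C})},
\end{align*}
while $q \equiv 1$ is the density of $\mu$ with respect to itself.

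Next I would substitute into \eqref{eq:fD} and split the domain of integration according to whether $x \in \set{C}$ or $x \in \set{X} \setminus \set{C}$. On $\set{C}$ the argument $p/q$ equals $1/\mu(\set{C})$, so that part of the integral contributes
\begin{align*}
\mu(\set{C}) \, f\!\left(\frac{1}{\mu(\set{C})}\right) = \widetilde{f}\bigl(\mu(\set{C})\bigr),
\end{align*}
where the last equality is exactly the definition \eqref{f conjugate} of the Csisz\'ar conjugate. On $\set{X} \setminus \set{C}$ the argument $p/q$ equals $0$, and by the continuous extension hypothesis $f(0) = \lim_{t \to 0^+} f(t) < \infty$, so the contribution is $\bigl(1 - \mu(\set{C})\bigr) \, f(0)$. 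Summing the two pieces yields \eqref{eq: gen. Csiszar}.

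There is essentially no obstacle: the entire argument reduces to recognizing that the Radon--Nikodym derivative is piecewise constant with only two values, and then matching the two pieces to $\widetilde{f}(\mu(\set{C}))$ and $f(0)$. The only mild point worth flagging is that the hypothesis $f(0) < \infty$ is needed precisely to ensure that the second term is finite and the splitting is unambiguous; without it the identity would still hold in $[0,\infty]$, but the right-hand side would need the extended-real convention $(1-\mu(\set{C})) \cdot \infty = 0$ when $\mu(\set{C}) = 1$.
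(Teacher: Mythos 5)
Your proposal is correct and follows essentially the same route as the paper's own proof: identify $\frac{\mathrm{d}\mu_{\set{C}}}{\mathrm{d}\mu} = \frac{1_{\set{C}}}{\mu(\set{C})}$, split the integral in \eqref{eq:fD} over $\set{C}$ and $\set{X}\setminus\set{C}$, and match the two constant pieces to $\widetilde{f}(\mu(\set{C}))$ and $(1-\mu(\set{C}))\,f(0)$. No gaps.
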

\begin{proof}
See Section~\ref{subsection: proof of Csiszar84 extended}.
\end{proof}

Connections of probabilities to the relative entropy, and to the
chi-squared divergence, are next exemplified as special cases of
Theorem~\ref{thm: gen. Csiszar}.
\begin{corollary}
\label{corollary: gen. Csiszar}
In the setting of Theorem~\ref{thm: gen. Csiszar},
\begin{align}
\label{eq: Csiszar84a}
& D\bigl( \mu_{\set{C}} \| \mu \bigr) = \log \frac1{\mu\bigl(\set{C}\bigr)}, \\[0.1cm]
\label{eq: Csiszar84b}
& \chi^2\bigl( \mu_{\set{C}} \| \mu \bigr) = \frac1{\mu\bigl(\set{C}\bigr)} - 1,
\end{align}
so \eqref{RD2-chi2} is satisfied in this case with equality.
More generally, for all $\alpha \in (0, \infty)$,
\begin{align}
\label{eq: Csiszar84c}
D_{\alpha}\bigl( \mu_{\set{C}} \| \mu \bigr) = \log \frac1{\mu\bigl(\set{C}\bigr)}.
\end{align}
\end{corollary}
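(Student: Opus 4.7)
The plan is to treat the three identities as three short calculations: the first two are immediate specializations of Theorem~\ref{thm: gen. Csiszar} to the convex functions that define the relative entropy and the chi-squared divergence, while the last one, since the R\'enyi divergence is not itself an $f$-divergence (and Theorem~\ref{thm: gen. Csiszar} therefore does not apply directly for general $\alpha$), will be handled by a one-line direct computation using the explicit form of $\tfrac{\mathrm{d}\mu_{\set{C}}}{\mathrm{d}\mu}$, followed by a separate check at the endpoint values $\alpha\in\{0,1,\infty\}$.

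For \eqref{eq: Csiszar84a}, I would set $f(t):=t\log t$, so that $f(0)=0$ and, by \eqref{f conjugate}, $\widetilde{f}(t)=t\cdot\tfrac{1}{t}\log\tfrac{1}{t}=-\log t$. Plugging into \eqref{eq: gen. Csiszar} and using $f(0)=0$ collapses the second term and gives $D(\mu_{\set{C}}\|\mu)=-\log\mu(\set{C})$. For \eqref{eq: Csiszar84b}, I would take $f(t):=(t-1)^2$, so that $f(0)=1$ and $\widetilde{f}(t)=t\bigl(\tfrac{1}{t}-1\bigr)^2=\tfrac{(1-t)^2}{t}$. Substituting $t=\mu(\set{C})$ in \eqref{eq: gen. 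Csiszar}, the two resulting terms combine as
\begin{align*}
\frac{(1-\mu(\set{C}))^2}{\mu(\set{C})}+(1-\mu(\set{C}))=(1-\mu(\set{C}))\cdot\frac{1}{\mu(\set{C})}=\frac{1}{\mu(\set{C})}-1,
\end{align*}
which is the desired expression. The remark that \eqref{RD2-chi2} becomes an equality in this setting is then an immediate numerical check: $\log(1+(\tfrac{1}{\mu(\set{C})}-1))=\log\tfrac{1}{\mu(\set{C})}=D(\mu_{\set{C}}\|\mu)$.

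For \eqref{eq: Csiszar84c}, observe that $\tfrac{\mathrm{d}\mu_{\set{C}}}{\mathrm{d}\mu}=\tfrac{1_{\set{C}}}{\mu(\set{C})}$ $\mu$-a.e. For $\alpha\in(0,1)\cup(1,\infty)$, plugging this into \eqref{eq:RD0} gives
\begin{align*}
\int\Bigl(\tfrac{\mathrm{d}\mu_{\set{C}}}{\mathrm{d}\mu}\Bigr)^{\alpha}\mathrm{d}\mu
=\frac{\mu(\set{C})}{\mu(\set{C})^{\alpha}}=\mu(\set{C})^{1-\alpha},
\end{align*}
so $D_{\alpha}(\mu_{\set{C}}\|\mu)=\tfrac{1}{\alpha-1}\log\mu(\set{C})^{1-\alpha}=-\log\mu(\set{C})$, independently of $\alpha$. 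The endpoints are then handled via \eqref{eq: d0}--\eqref{def:dinf}: for $\alpha=1$ the identity coincides with \eqref{eq: Csiszar84a}; for $\alpha=0$ the sets with $\mu_{\set{C}}(\set{A})=1$ are exactly those containing $\set{C}$ up to $\mu$-null sets, so the maximum in \eqref{eq: d0} is attained at $\set{A}=\set{C}$; and for $\alpha=\infty$ the essential supremum in \eqref{def:dinf} equals $\tfrac{1}{\mu(\set{C})}$. There is no real obstacle in this corollary; the only thing to be slightly careful about is that the R\'enyi case must be treated separately from Theorem~\ref{thm: gen. Csiszar} rather than shoehorned into its $f$-divergence framework.
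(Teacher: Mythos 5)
Your proof is correct, but it routes the first two identities differently from the paper. You obtain \eqref{eq: Csiszar84a} and \eqref{eq: Csiszar84b} by specializing Theorem~\ref{thm: gen. Csiszar} to $f(t)=t\log t$ (with $f(0)=0$ and $\widetilde{f}(t)=-\log t$) and to $f(t)=(t-1)^2$ (with $f(0)=1$ and $\widetilde{f}(t)=(1-t)^2/t$), and both short computations check out. The paper instead proves \eqref{eq: Csiszar84c} first, by exactly the direct calculation you give for $\alpha\in(0,1)\cup(1,\infty)$ using $\frac{\mathrm{d}\mu_{\set{C}}}{\mathrm{d}\mu}=\frac{1_{\set{C}}}{\mu(\set{C})}$, and then reads off \eqref{eq: Csiszar84a} as the continuous extension at $\alpha=1$ (via \eqref{def:d1}) and \eqref{eq: Csiszar84b} by combining \eqref{RD2} with the case $\alpha=2$. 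Your version makes genuine use of Theorem~\ref{thm: gen. Csiszar} --- which is, after all, what the corollary is stated to specialize --- while the paper's version is slightly more economical, since a single one-line integral yields all three identities through known relations between $D_\alpha$, $D$ and $\chi^2$. Your remark that the R\'{e}nyi divergence is not itself an $f$-divergence, so that Theorem~\ref{thm: gen. Csiszar} cannot be invoked verbatim for general $\alpha$, is a fair point which the paper sidesteps in the same way you do, by computing $D_\alpha(\mu_{\set{C}}\|\mu)$ directly; your endpoint checks at $\alpha\in\{0,\infty\}$ are correct but unnecessary, since \eqref{eq: Csiszar84c} is only claimed for $\alpha\in(0,\infty)$.
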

\begin{proof}
See Section~\ref{subsection: proof of Csiszar84 extended}.
\end{proof}

\begin{remark}
In spite of its simplicity, \eqref{eq: Csiszar84a} proved very useful
in the seminal work by Marton on transportation-cost inequalities,
proving concentration of measures by information-theoretic tools
\cite{Marton1, Marton2} (see also \cite[Chapter~8]{BoucheronLM}
and \cite[Chapter~3]{RS_FnT}). As a side note, the simple identity
\eqref{eq: Csiszar84a} was apparently first explicitly used by Csisz\'{a}r
(see \cite[Eq.~(4.13)]{Csiszar84}).
\end{remark}

\section{Applications}
\label{section: applications}

This section provides applications of our results in
Section~\ref{section: main results}. These include universal lossless
compression, method of types and large deviations, and strong
data-processing inequalities (SDPIs).

\subsection{Application of Corollary~\ref{corollary: UB on relative entropy}:
Shannon Code for Universal Lossless Compression}
\label{subsec: UB on RE - Poisson}

Consider $m>1$ discrete, memoryless and stationary sources with
probability mass functions $\{P_i\}_{i=1}^m$,
and assume that the symbols are emitted by one of these sources
with an {\em a priori} probability $\alpha_i$ for source no.~$i$,
where $\{\alpha_i\}_{i=1}^m$ are positive and sum to~1.

For lossless data compression by a universal source code, suppose
that a single source code is designed with respect to the average
probability mass function
$P := \overset{m}{\underset{j=1}{\sum}} \alpha_j P_j$.

Assume that the designer uses a Shannon code, where the code
assignment for a symbol $x \in \set{X}$ is of length
$\ell(x) = \Bigl\lceil \log \frac1{P(x)}\Bigl\rceil$ bits
(logarithms are on base~2). Due to the mismatch in the source
distribution, the average codeword length $\ell_{\mathrm{avg}}$
satisfies (see \cite[Proposition~3.B]{ClarkeB90})
\begin{align} \label{mismatch}
\sum_{i=1}^m \alpha_i H(P_i) + \sum_{i=1}^m \alpha_i D(P_i \| P)
\leq \ell_{\mathrm{avg}} \leq \sum_{i=1}^m \alpha_i H(P_i)
+ \sum_{i=1}^m \alpha_i D(P_i \| P) + 1.
\end{align}
The fractional penalty in the average codeword length, denoted by
$\nu$, is defined to be equal to the ratio of the penalty in
the average codeword length as a result of the source mismatch, and
the average codeword length in case of a perfect matching. From
\eqref{mismatch}, it follows that
\begin{align}
\label{bounds on nu}
\frac{\overset{m}{\underset{i=1}{\sum}} \alpha_i \, D(P_i \| P)}{1
+ \overset{m}{\underset{i=1}{\sum}} \alpha_i H(P_i)}
\leq \nu \leq \frac{1 + \overset{m}{\underset{i=1}{\sum}} \alpha_i \,
D(P_i \| P)}{\overset{m}{\underset{i=1}{\sum}} \alpha_i H(P_i)}.
\end{align}

We next rely on Corollary~\ref{corollary: UB on relative entropy}
to obtain an upper bound on $\nu$ which is expressed as a function
of the $m(m-1)$ relative entropies $D(P_i \| P_j)$ for all $i \neq j$ in
$\{1, \ldots, m\}$. This is useful if, e.g., the $m$
relative entropies in the left and right sides of \eqref{bounds on nu}
do not admit closed-form expressions, in contrast to the $m(m-1)$
relative entropies $D(P_i \| P_j)$ for $i \neq j$. We next exemplify
this case.

For $i \in \{1, \ldots, m\}$, let $P_i$ be a Poisson
distribution with parameter $\lambda_i > 0$.
Consequently, for $i, j \in \{1, \ldots, m\}$, the relative entropy
from $P_i$ to $P_j$ admits the closed-form expression
\begin{align}
\label{1911b3}
& D(P_i \| P_j)
= \lambda_i \, \log \biggl(\frac{\lambda_i}{\lambda_j}\biggr)
+ (\lambda_j - \lambda_i) \log \mathrm{e}.
\end{align}
From \eqref{1911a1} and \eqref{1911b3}, it
follows that
\begin{align}
\label{UB - Poisson}
D(P_i \| P) & \leq -\log \biggl( \alpha_i + (1-\alpha_i)
\exp \biggl(-\frac{f_i(\underline{\alpha},
\underline{\lambda})}{1-\alpha_i} \biggr) \biggr),
\end{align}
where
\begin{align}
f_i(\underline{\alpha}, \underline{\lambda})
&:= \sum_{j \neq i} \alpha_j \, D(P_i \| P_j) \\
&=\sum_{j \neq i}
\biggl\{ \alpha_j \biggl[ \lambda_i \, \log \biggl(\frac{\lambda_i}{\lambda_j}\biggr)
+ (\lambda_j - \lambda_i) \log \mathrm{e} \biggr] \biggr\}.
\end{align}
The entropy of a Poisson distribution, with parameter $\lambda_i > 0$,
is given by the following integral representation (\cite{EvansB88,Knessl98,MerhavS20}):
\begin{align}
\label{entropy Poisson}
H(P_i) = \lambda_i \log \biggl( \frac{\mathrm{e}}{\lambda_i} \biggr) +
\int_0^\infty \biggl(\lambda_i - \frac{1-\mathrm{e}^{-\lambda_i
(1-\mathrm{e}^{-u})}}{1-\mathrm{e}^{-u}}\biggr) \frac{\mathrm{e}^{-u}}{u}
\, \mathrm{d}u \; \log \mathrm{e}.
\end{align}
Combining \eqref{bounds on nu}, \eqref{UB - Poisson} and
\eqref{entropy Poisson} finally gives an upper bound on
$\nu$ in the considered setup.

\begin{example}
Consider five discrete memoryless sources where the probability mass
function of source no. $i$ is given by $P_i = \mathrm{Poisson}(\lambda_i)$ with
$\underline{\lambda} = [16, 20, 24, 28, 32]$. Suppose that the symbols are
emitted from one of the sources with equal probability, so
$\underline{\alpha} = \bigl[\tfrac15, \tfrac15, \tfrac15, \tfrac15, \tfrac15 \bigr]$.
Let $P := \tfrac15 (P_1 + \ldots + P_5)$
be the average probability mass function of the five sources. The term
$\sum_i \alpha_i \, D(P_i \| P)$, which appears in the
numerators of the upper and lower bounds on $\nu$ (see \eqref{bounds on nu})
does not lend itself to a closed-form expression, and it is not even an easy task
to calculate it numerically due to the need to compute an infinite series which involves
factorials. We therefore apply the closed-form upper bound in \eqref{UB - Poisson}
to get that $\sum_i \alpha_i \, D(P_i \| P) \leq 1.46$~bits, whereas the
upper bound which follows from the convexity of the relative entropy (i.e.,
$\sum_i \alpha_i f_i(\underline{\alpha}, \underline{\lambda})$) is equal
to 1.99~bits (both upper bounds are smaller than the trivial bound
$\log_2 5 \approx 2.32$~bits).
From \eqref{bounds on nu}, \eqref{entropy Poisson} and the stronger upper
bound on $\sum_i \alpha_i \, D(P_i \| P)$, the improved upper bound on $\nu$
is equal to~$57.0\%$ (as compared to a looser upper bound of $69.3\%$, which
follows from \eqref{bounds on nu}, \eqref{entropy Poisson} and the looser
upper bound on $\sum_i \alpha_i \, D(P_i \| P)$ that is equal to~1.99 bits).
\end{example}

\subsection{Application of Theorem~\ref{theorem: KL LB} in the Context
of the Method of Types and Large Deviations Theory}
\label{subsec: method of types}

Let $X^n = (X_1, \ldots, X_n)$ be a sequence of i.i.d. random variables with
$X_1 \sim Q$ where $Q$ is a probability measure defined on a
finite set $\set{X}$, and $Q(x) > 0$ for all $x \in \set{X}$. Let
$\mathscr{P}$ be a set of probability measures on $\set{X}$ such
that $Q \notin \mathscr{P}$, and suppose that the closure of $\mathscr{P}$
coincides with the closure of its interior. Then, by Sanov's theorem
(see, e.g., \cite[Theorem~11.4.1]{Cover_Thomas} and \cite[Theorem~3.3]{Csiszar98}),
the probability that the empirical distribution $\widehat{P}_{X^n}$
belongs to $\mathscr{P}$ vanishes exponentially at the rate
\begin{align} \label{Sanov}
\lim_{n \to \infty} \frac1n \, \log \frac1{\prob[\widehat{P}_{X^n} \in \mathscr{P}]}
= \inf_{P \in \mathscr{P}} D(P \|Q).
\end{align}
Furthermore, for finite $n$, the method of types yields the following upper bound
on this rare event
\begin{align}  \label{method of types1}
\prob[\widehat{P}_{X^n} \in \mathscr{P}] & \leq \binom{n+|\set{X}|-1}{|\set{X}|-1}
\exp\Bigl(-n \inf_{P \in \mathscr{P}} D(P \|Q) \Bigr) \\
\label{method of types2}
& \leq (n+1)^{|\set{X}|-1} \, \exp\Bigl(-n \inf_{P \in \mathscr{P}} D(P \|Q) \Bigr),
\end{align}
whose exponential decay rate coincides with the exact asymptotic result in \eqref{Sanov}.

Suppose that $Q$ is not fully known, but its mean $m_Q$ and variance $\sigma_Q^2$
are available. Let $m_1 \in \Reals$ and $\delta_1, \varepsilon_1, \sigma_1 > 0$ be
fixed, and let $\mathscr{P}$ be the set of all probability measures $P$, defined
on the finite set $\set{X}$, with mean $m_P \in [m_1 - \delta_1, m_1 + \delta_1]$
and variance $\sigma_P^2 \in [\sigma_1^2 - \varepsilon_1, \sigma_1^2 + \varepsilon_1]$
where $|m_1 - m_Q| > \delta_1$. Hence, $\mathscr{P}$ coincides with the closure of
its interior, and $Q \notin \mathscr{P}$.

The lower bound on the relative entropy in Theorem~\ref{theorem: KL LB}, used in
conjunction with the upper bound in \eqref{method of types2}, can serve to obtain
an upper bound on the probability of the event that the empirical distribution of
$X^n$ belongs to the set $\mathscr{P}$, regardless of the uncertainty in $Q$.
This gives
\begin{align}  \label{method of types3}
\prob[\widehat{P}_{X^n} \in \mathscr{P}] & \leq
(n+1)^{|\set{X}|-1} \, \exp\bigl(-n d^\ast\bigr),
\end{align}
where
\begin{align} \label{d star}
d^\ast := \inf_{m_P, \sigma_P^2} d(r\|s),
\end{align}
and, for fixed $(m_P, m_Q, \sigma_P^2, \sigma_Q^2)$, the parameters $r$ and $s$ are
given in \eqref{r} and \eqref{s}, respectively.

Standard algebraic manipulations that rely on \eqref{method of types3} lead to the following
result, which is expressed as a function of the Lambert $W$ function \cite{CGHJK96}.
This function, which finds applications in various engineering and scientific
fields, is a standard built-in function in mathematical software tools
such as Mathematica, Matlab and Maple. Applications of the Lambert $W$ function
in information theory and coding are briefly surveyed in \cite{ITA14}.
\begin{proposition}
\label{prop: W function}
For $\varepsilon \in (0,1)$, let $n^\ast := n^\ast(\varepsilon)$ denote the
minimal value of $n \in \naturals$ such that the upper bound
in the right side of \eqref{method of types3} does not exceed $\varepsilon \in (0,1)$.
Then, $n^\ast$ admits the following closed-form expression:
\begin{align} \label{eq: minimal n}
n^\ast = \max \Biggl\{ \left\lceil -\frac{\bigl(|\set{X}|-1\bigr) \, W_{-1}(\eta)
\, \log \mathrm{e}}{d^\ast} \right\rceil - 1, \, 1 \Biggr\},
\end{align}
with
\begin{align}
& \eta := - \, \frac{d^\ast \, \bigl( \varepsilon \,
\exp(-d^\ast) \bigr)^{1/(|\set{X}|-1)}}{\bigl(|\set{X}|-1\bigr)
\log \mathrm{e}} \in \bigl[-\tfrac1{\mathrm{e}}, 0),
\end{align}
and $W_{-1}(\cdot)$ in the right side of \eqref{eq: minimal n} denotes the secondary real-valued branch of the
Lambert $W$ function (i.e., $x := W_{-1}(y)$ where
$W_{-1} \colon \bigl[-\tfrac1{\mathrm{e}}, 0) \to (-\infty, -1]$ is the inverse function of $y := x \mathrm{e}^x$).
\end{proposition}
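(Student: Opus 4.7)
The plan is to solve the integer inequality $(n+1)^{|\set{X}|-1}\exp(-nd^\ast)\leq \varepsilon$ explicitly by casting it into the canonical form $w\mathrm{e}^w \geq \eta$ that defines the Lambert $W$ function, and then identifying the correct branch. Setting $k := |\set{X}|-1$ and using that $\exp(\cdot)$ is the inverse of the paper's base-$b$ logarithm, so that $\exp(-nd^\ast) = \mathrm{e}^{-nd^\ast/\log \mathrm{e}}$, I would first multiply the target inequality by $\exp(-d^\ast)$, extract $k$-th roots, and then multiply by the negative factor $-d^\ast/(k\log \mathrm{e})$ (which reverses the inequality), recasting the bound in \eqref{method of types3} as
\begin{align*}
w\,\mathrm{e}^w \;\geq\; \eta, \qquad
w := -\frac{(n+1)\,d^\ast}{k \log \mathrm{e}},
\qquad
\eta := -\frac{d^\ast}{k \log \mathrm{e}}\bigl(\varepsilon \exp(-d^\ast)\bigr)^{1/k}.
\end{align*}
This $\eta$ agrees with the quantity defined in the proposition, and $w<0$ for every $n\geq 0$.

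Next I would pin down the correct Lambert branch. The auxiliary function $g(n) := (n+1)^{k}\exp(-nd^\ast)$ is strictly log-concave on $[0,\infty)$ with its unique maximum at $n+1 = k\log \mathrm{e}/d^\ast$ — precisely the point $w=-1$ — so the minimal $n$ for which $g(n) \leq \varepsilon$ must lie on the decreasing branch $w\leq -1$. On $(-\infty,-1]$ the map $w\mapsto w\mathrm{e}^w$ is strictly \emph{decreasing} (its derivative $(1+w)\mathrm{e}^w$ is negative there), running from $0$ at $w=-\infty$ down to $-1/\mathrm{e}$ at $w=-1$, and its inverse is by definition $W_{-1}$; hence whenever $\eta \in [-1/\mathrm{e},0)$, the condition $w\mathrm{e}^w \geq \eta$ rewrites as $w\leq W_{-1}(\eta)$. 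To verify $\eta \in [-1/\mathrm{e},0)$ for every admissible $\varepsilon$, I would set $t := d^\ast/(k\log \mathrm{e}) > 0$ and note that $|\eta| = \varepsilon^{1/k}\, t\mathrm{e}^{-t} \leq t\mathrm{e}^{-t} \leq 1/\mathrm{e}$, using $\varepsilon \in (0,1)$ and that $t\mapsto t\mathrm{e}^{-t}$ attains its maximum $1/\mathrm{e}$ at $t=1$.

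Finally, substituting $w = -(n+1)d^\ast/(k\log \mathrm{e})$ into $w \leq W_{-1}(\eta)$ and solving for $n$ yields the real-valued constraint
\begin{align*}
n \;\geq\; -\frac{k\, W_{-1}(\eta)\,\log \mathrm{e}}{d^\ast} \; - \; 1.
\end{align*}
The smallest $n \in \naturals$ satisfying it equals $\bigl\lceil -kW_{-1}(\eta)\log \mathrm{e}/d^\ast \bigr\rceil - 1$ (by the identity $\lceil x-1\rceil = \lceil x\rceil - 1$ valid for non-integer $x$, with the integer case handled trivially), and the outer $\max\{\cdot,1\}$ in \eqref{eq: minimal n} merely safeguards the degenerate regime (arising when $d^\ast$ is large relative to $k\log \mathrm{e}$) in which this ceiling is already $\leq 1$. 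The only real subtlety is careful bookkeeping between the paper's base-$b$ logarithm (so $\log \mathrm{e} = 1/\ln b$) and the Lambert equation — which is naturally written in natural logarithms — in order for the expression $\eta$ to come out exactly as stated; once that conversion is in place, selecting the $W_{-1}$ rather than $W_{0}$ branch is forced by the unimodality of $g$.
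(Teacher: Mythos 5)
Your derivation is correct, and it supplies exactly the computation that the paper omits: the paper states Proposition~\ref{prop: W function} as following from ``standard algebraic manipulations'' of \eqref{method of types3} and gives no proof, so there is nothing to compare against beyond noting that your reduction to $w\mathrm{e}^w \geq \eta$ with $w = -(n+1)d^\ast/\bigl((|\set{X}|-1)\log\mathrm{e}\bigr)$ is the intended route, and your bookkeeping of the base-$b$ logarithm versus the natural exponential, the verification that $\eta \in [-\tfrac1{\mathrm{e}},0)$ via $t\mathrm{e}^{-t}\leq \tfrac1{\mathrm{e}}$, and the ceiling/max manipulations are all accurate. One step deserves to be made explicit: unimodality of $g(n)=(n+1)^{|\set{X}|-1}\exp(-nd^\ast)$ alone does not force the minimal $n$ with $g(n)\leq\varepsilon$ onto the branch $w\leq -1$, since a priori the sublevel set $\{g\leq\varepsilon\}$ could also contain points to the left of the peak (where the condition would instead read $w\geq W_0(\eta)$); this possibility is excluded by observing that $g(0)=1>\varepsilon$, so $g>\varepsilon$ on the entire increasing segment $[0,\,(|\set{X}|-1)\log\mathrm{e}/d^\ast-1]$, and hence the condition $w\mathrm{e}^w\geq\eta$ is indeed equivalent to $w\leq W_{-1}(\eta)$ for every admissible $n$. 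With that one-line addition your argument is complete. (Also, $\lceil x-1\rceil=\lceil x\rceil-1$ holds for all real $x$, so no case split on integrality is needed.)
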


\begin{example}
Let $Q$ be an arbitrary probability measure, defined on a finite set $\set{X}$, with mean
$m_Q = 40$ and variance $\sigma_Q^2 = 20$. Let $\mathscr{P}$ be the set of all probability
measures $P$, defined on $\set{X}$, whose mean $m_P$ and variance $\sigma_P^2$ lie in the
intervals $[43, 47]$ and $[18, 22]$, respectively.
Suppose that it is required that, for all probability measures $Q$ as above, the probability
that the empirical distribution of the i.i.d. sequence $X^n \sim Q^n$ be included in the set
$\mathscr{P}$ is at most $\varepsilon = 10^{-10}$. We rely here on the upper bound in
\eqref{method of types3}, and impose the stronger condition where it should not exceed $\varepsilon$.
By this approach, it is obtained numerically from \eqref{d star} that $d^\ast = 0.203$ nats.
We next examine two cases:
\begin{enumerate}[i)]
\item If $|\set{X}|=2$, then it follows from \eqref{eq: minimal n} that $n^\ast = 138$.
\item Consider a richer alphabet size of the i.i.d. samples where, e.g., $|\set{X}|=100$.
By relying on the same universal lower bound $d^\ast$, which holds independently of
the value of $|\set{X}|$ ($\set{X}$ can possibly be an infinite set), it follows from
\eqref{eq: minimal n} that $n^\ast = 4170$ is the minimal value such that the upper bound
in \eqref{method of types3} does not exceed $10^{-10}$.
\end{enumerate}
\end{example}

We close this discussion by providing numerical experimentation of the lower bound on
the relative entropy in Theorem~\ref{theorem: KL LB}, and comparing this attainable
lower bound (see Item~b) of Theorem~\ref{theorem: KL LB}) with the following closed-form
expressions for relative entropies:
\begin{enumerate}[a)]
\item The relative entropy between real-valued Gaussian distributions is given by
\begin{align} \label{eq: KL Gaussians}
D\bigl( \mathcal{N}(m_P, \sigma_P^2) \, \| \, \mathcal{N}(m_Q, \sigma_Q^2) \bigr)
= \log \frac{\sigma_Q}{\sigma_P}
+ \frac12 \biggl[\frac{(m_P-m_Q)^2+\sigma_P^2}{\sigma_Q^2} - 1 \biggr] \, \log \mathrm{e}.
\end{align}
\item Let $E_\mu$ denote a random variable which is exponentially distributed with
mean $\mu > 0$; its probability density function is given by
\begin{align} \label{exp pdf}
e_\mu(x) = \frac1{\mu} \, \mathrm{e}^{-x/\mu} \; 1\{x \geq 0\}.
\end{align}
Then, for $a_1, a_2 >0$ and $d_1, d_2 \in \Reals$,
\begin{align} \label{eq: KL exponentials}
D(E_{a_1} + d_1 \| E_{a_2} + d_2) =
\begin{dcases}
\log \frac{a_2}{a_1} + \frac{d_1 + a_1 - d_2 - a_2}{a_2}
\; \log \mathrm{e}, & \quad d_1 \geq d_2, \\
\infty, & \quad d_1 < d_2.
\end{dcases}
\end{align}
In this case, the means under $P$ and $Q$ are $m_P = d_1 + a_1$ and $m_Q = d_2 + a_2$,
respectively, and the variances are $\sigma_P^2 = a_1^2$ and $\sigma_Q^2 = a_2^2$.
Hence, for obtaining the required means and variances, set
\begin{align}
a_1 = \sigma_P, \quad a_2 = \sigma_Q, \quad d_1 = m_P - \sigma_P, \quad d_2 = m_Q - \sigma_Q.
\end{align}
\end{enumerate}

\begin{example}
We compare numerically the attainable lower bound on the relative entropy, as it is given in
\eqref{LB-KL}, with the two relative entropies in \eqref{eq: KL Gaussians} and
\eqref{eq: KL exponentials}.
\begin{enumerate}[i)]
\item If $(m_P, m_Q, \sigma_P^2, \sigma_Q^2) = (45, 40, 20, 20)$, then the lower bound
in \eqref{LB-KL} is equal to 0.521~nats, and the two relative entropies in \eqref{eq: KL Gaussians}
and \eqref{eq: KL exponentials} are equal to 0.625 and 1.118~nats, respectively.
\item If $(m_P, m_Q, \sigma_P^2, \sigma_Q^2) = (50, 35, 10, 20)$, then the lower bound in
\eqref{LB-KL} is equal to 2.332~nats, and the two relative entropies in \eqref{eq: KL Gaussians}
and \eqref{eq: KL exponentials} are equal to 5.722 and 3.701~nats, respectively.
\end{enumerate}
\end{example}

\subsection{Strong Data-Processing Inequalities and Maximal Correlation}
\label{subsec: SDPI}

The information contraction is a fundamental concept in information
theory. The contraction of $f$-divergences through channels is captured
by data-processing inequalities, which can be further tightened by the
derivation of SDPIs with channel-dependent or source-channel dependent
contraction coefficients (see, e.g.,
\cite{CohenKZ98, Cohen93, MakurP18, MakurZ15, Makur_PhD19, PolyanskiyW17, Raginsky16, Sason19}).

We next provide necessary definitions which are relevant for the presentation
in this subsection.
\begin{definition}
\label{def: contraction}
Let $Q_X$ be a probability distribution which is defined on a set $\set{X}$,
and that is not a point mass, and let $W_{Y|X} \colon \set{X} \to \set{Y}$
be a stochastic transformation. The {\em contraction coefficient for
$f$-divergences} is defined as
\begin{align}
\label{contraction coef.}
\mu_f(Q_X, W_{Y|X}) := \underset{P_X: \, D_f(P_X \| Q_X) \in (0, \infty)}{\sup} \,
\frac{D_f(P_Y \| Q_Y)}{D_f(P_X \| Q_X)},
\end{align}
where, for all $y \in \set{Y}$,
\begin{align}
\label{23062019a1}
& P_Y(y) = (P_X W_{Y|X}) \: (y) := \int_{\set{X}} \mathrm{d}P_X(x) \, W_{Y|X}(y|x), \\[0.1cm]
\label{23062019a2}
& Q_Y(y) = (Q_X W_{Y|X}) \: (y) := \int_{\set{X}} \mathrm{d}Q_X(x) \, W_{Y|X}(y|x).
\end{align}
The notation in \eqref{23062019a1} and \eqref{23062019a2} is consistent with the standard
notation used in information theory (see, e.g., the first displayed equation after (3.2)
in \cite{Csiszar_Korner}).
\end{definition}

The derivation of good upper bounds on contraction coefficients for $f$-divergences, which are strictly
smaller than~1, lead to SDPIs. These inequalities find their applications, e.g., in studying the
exponential convergence rate of an irreducible, time-homogeneous and reversible discrete-time Markov
chain to its unique invariant distribution over its state space (see, e.g., \cite[Section~2.4.3]{Makur_PhD19}
and \cite[Section~2]{PolyanskiyW17}). It is in sharp contrast to DPIs which do not yield convergence
to stationarity at any rate. We return to this point later in this subsection, and determine the exact
convergence rate to stationarity under two parametric families of $f$-divergences.

We next rely on Theorem~\ref{thm: KL and chi^2} to obtain upper bounds
on the contraction coefficients for the following $f$-divergences.

\begin{definition} \label{def: parametric f-divergences}
For $\alpha \in (0,1]$, the $\alpha$-skew $K$-divergence is given by
\begin{align}  \label{K div.}
K_\alpha(P\|Q) := D\bigl(P \, \| \, (1-\alpha)P + \alpha Q\bigr),
\end{align}
and, for $\alpha \in [0,1]$, let
\begin{align}  \label{S div.}
S_\alpha(P\|Q) &:= \alpha \, D\bigl(P \, \| \, (1-\alpha)P + \alpha Q\bigr)
+ (1-\alpha) \, D\bigl(Q \, \| \, (1-\alpha)P + \alpha Q\bigr) \\
\label{S-K div.}
& \, = \alpha \, K_\alpha(P\|Q) + (1-\alpha) \, K_{1-\alpha}(Q \| P),
\end{align}
with the convention that $K_0(P\|Q) \equiv 0$ (by a continuous extension at $\alpha=0$
in \eqref{K div.}). These divergence measures are specialized to the relative entropies:
\begin{align} \label{eq: 20200417c1}
K_1(P\|Q) = D(P\|Q) = S_1(P\|Q), & \quad S_0(P\|Q) = D(Q\|P),
\end{align}
and $S_{\frac12}(P\|Q)$ is the Jensen--Shannon divergence \cite{BurbeaR82,Lin91,JSD97}
(also known as the capacitory discrimination \cite{Topsoe_IT00}):
\begin{align} \label{JS}
S_{\frac12}(P\|Q) &= \tfrac12 \, D\bigl(P \, \| \, \tfrac12(P+Q)\bigr)
+ \tfrac12 \, D\bigl(Q \, \| \, \tfrac12(P+Q)\bigr) \\
&= H\bigl(\tfrac12(P+Q)\bigr) - \tfrac12 H(P) - \tfrac12 H(Q) := \mathrm{JS}(P\|Q).
\end{align}
It can be verified that the divergence measures in \eqref{K div.} and \eqref{S div.}
are $f$-divergences:
\begin{align}
\label{eq: 20200412a1}
& K_\alpha(P\|Q) = D_{k_\alpha}(P\|Q), \quad \alpha \in (0,1], \\
\label{eq: 20200412a2}
& S_\alpha(P\|Q) = D_{s_\alpha}(P\|Q), \quad \; \alpha \in [0,1],
\end{align}
with
\begin{align}
\label{eq: 20200412b1}
k_\alpha(t) &:= t \log t - t \log \bigl(\alpha + (1-\alpha)t \bigr),
\quad t>0, \; \; \alpha \in (0,1], \\
\label{eq: 20200412b2}
s_\alpha(t) &:= \alpha t \log t - \bigl(\alpha t + 1-\alpha \bigr) \,
\log\bigl(\alpha + (1-\alpha)t\bigr) \\
\label{eq: 20200412b2.2}
& \; = \alpha k_\alpha(t) + (1-\alpha) t \, k_{1-\alpha}\biggl(\frac{1}{t}\biggr),
\quad t>0, \; \; \alpha \in [0,1],
\end{align}
where $k_\alpha(\cdot)$ and $s_\alpha(\cdot)$ are strictly convex functions
on $(0, \infty)$, and vanish at~1.
\end{definition}

\begin{remark}
The $\alpha$-skew $K$-divergence in \eqref{K div.} is considered in \cite{Lin91} and
\cite[(13)]{Nielsen20} (including pointers in the latter paper to its utility). The
divergence in \eqref{S div.} is akin to Lin's measure in \cite[(4.1)]{Lin91}, the
asymmetric $\alpha$-skew Jensen--Shannon divergence in \cite[(11)--(12)]{Nielsen20},
the symmetric $\alpha$-skew Jensen--Shannon divergence in \cite[(16)]{Nielsen20},
and divergence measures in \cite{AsadiEKS19} which involve arithmetic and geometric
means of two probability distributions. Properties and applications of quantum skew
divergences are studied in \cite{Audenaert} and references therein.
\end{remark}

\begin{theorem} \label{thm: contraction coefficients}
The $f$-divergences in \eqref{K div.} and \eqref{S div.} satisfy the following
integral identities, which are expressed in terms of the Gy\"{o}rfi--Vajda divergence
in \eqref{eq: f - GV01}:
\begin{align}
\label{eq: 20200412c1}
& \tfrac1{\log \mathrm{e}} \, K_\alpha(P\|Q) = \int_0^\alpha s D_{\phi_s}(P\|Q) \, \mathrm{d}s,
\hspace*{1cm} \alpha \in (0,1], \\
\label{eq: 20200412c2}
& \tfrac1{\log \mathrm{e}} \, S_\alpha(P\|Q) = \int_0^1 g_\alpha(s) \, D_{\phi_s}(P\|Q) \, \mathrm{d}s,
\quad \alpha \in [0,1],
\end{align}
with
\begin{align}
\label{eq: 20200412c3}
g_\alpha(s) := \alpha s \, 1\bigl\{s \in (0, \alpha]\bigr\} + (1-\alpha)(1-s) \,
1\bigl\{s \in [\alpha, 1) \bigr\}, \quad (\alpha, s) \in [0,1]^2.
\end{align}
Moreover, the contraction coefficients for these $f$-divergences are related as follows:
\begin{align}
\label{eq: 20200412d1}
\mu_{\chi^2}(Q_X, W_{Y|X}) & \leq \mu_{k_\alpha}(Q_X, W_{Y|X}) \leq \sup_{s \in (0, \alpha]}
\mu_{\phi_s}(Q_X, W_{Y|X}), \quad \alpha \in (0,1], \\
\label{eq: 20200412d2}
\mu_{\chi^2}(Q_X, W_{Y|X}) & \leq \mu_{s_\alpha}(Q_X, W_{Y|X}) \leq \sup_{s \in (0, 1)}
\mu_{\phi_s}(Q_X, W_{Y|X}), \quad \; \alpha \in [0,1],
\end{align}
where $\mu_{\chi^2}(Q_X, W_{Y|X})$ denotes the contraction coefficient for the chi-squared
divergence.
\end{theorem}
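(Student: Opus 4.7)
The plan is to derive \eqref{eq: 20200412c1} directly from Theorem~\ref{thm: KL and chi^2}, and to obtain \eqref{eq: 20200412c2} by symmetrization. Since $K_\alpha(P\|Q) = D(P \| R_\alpha)$ with $R_\lambda$ as in \eqref{mixture PMF}, the identity \eqref{eq: 20200412c1} is precisely \eqref{identity3: KL} evaluated at $\lambda = \alpha$. For \eqref{eq: 20200412c2}, I start from the decomposition \eqref{S-K div.} and apply \eqref{eq: 20200412c1} to each summand, obtaining
\[
\tfrac{1}{\log \mathrm{e}} \, S_\alpha(P\|Q) = \alpha \int_0^\alpha s \, D_{\phi_s}(P\|Q) \, \mathrm{d}s + (1-\alpha) \int_0^{1-\alpha} s \, D_{\phi_s}(Q\|P) \, \mathrm{d}s.
\]
The second integrand is converted to $D_{\phi_\bullet}(P\|Q)$ via the standard $f$-divergence duality $D_f(Q\|P) = D_{\widetilde f}(P\|Q)$ together with the algebraic identity $t \, \phi_s(1/t) = \phi_{1-s}(t)$, which shows $\widetilde{\phi_s} = \phi_{1-s}$ through \eqref{f conjugate}. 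After the substitution $u = 1-s$, the second integral becomes $(1-\alpha)\int_\alpha^1 (1-u) \, D_{\phi_u}(P\|Q) \, \mathrm{d}u$, and the two pieces combine into $\int_0^1 g_\alpha(s) \, D_{\phi_s}(P\|Q) \, \mathrm{d}s$ with the kernel \eqref{eq: 20200412c3}.

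For the upper bounds in \eqref{eq: 20200412d1} and \eqref{eq: 20200412d2}, I bound $D_{\phi_s}(P_Y \| Q_Y) \le \mu_{\phi_s}(Q_X, W_{Y|X}) \, D_{\phi_s}(P_X \| Q_X)$ pointwise inside the non-negative integrands of \eqref{eq: 20200412c1} and \eqref{eq: 20200412c2}, and then pull the supremum of $\mu_{\phi_s}$ over the $s$-support of the kernel out of the integral. Dividing by $K_\alpha(P_X \| Q_X)$, respectively $S_\alpha(P_X \| Q_X)$, when positive, and taking supremum over admissible $P_X$, yields the claimed upper bounds.

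For the lower bounds $\mu_{\chi^2}(Q_X, W_{Y|X}) \le \mu_{k_\alpha}(Q_X, W_{Y|X})$ and $\mu_{\chi^2} \le \mu_{s_\alpha}$, I invoke the general local-tightness principle: for any generator $f$ that is twice differentiable at $1$ with $f''(1) > 0$, a second-order Taylor expansion gives $D_f(P\|Q) = \tfrac{1}{2} f''(1) \, \chi^2(P\|Q) \bigl(1+o(1)\bigr)$ as $P \to Q$ in the sense of \eqref{lim 1}, so along such a sequence the ratio $D_f(P_Y \| Q_Y)/D_f(P_X \| Q_X)$ has the same limit as $\chi^2(P_Y \| Q_Y)/\chi^2(P_X \| Q_X)$. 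Supremizing the latter over such local perturbations recovers $\mu_{\chi^2}(Q_X, W_{Y|X})$, and hence $\mu_f(Q_X, W_{Y|X}) \ge \mu_{\chi^2}(Q_X, W_{Y|X})$. The positivity hypothesis is verified by direct differentiation: $k_\alpha''(1) = \alpha^2 > 0$ for $\alpha \in (0,1]$, and $s_\alpha''(1) = 3\alpha^2 - 3\alpha + 1 > 0$ for all $\alpha \in [0,1]$ (the quadratic has discriminant $9-12 < 0$).

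The main obstacle I anticipate is the symmetrization bookkeeping in the derivation of \eqref{eq: 20200412c2}: verifying $\widetilde{\phi_s} = \phi_{1-s}$, carrying out the change of variable so that $[0, 1-\alpha]$ is mapped to $[\alpha, 1]$, and confirming that the resulting two pieces match the indicator-weighted kernel $g_\alpha$ in \eqref{eq: 20200412c3} with exact weights $\alpha$ and $1-\alpha$. Everything else reduces to direct substitution into Theorem~\ref{thm: KL and chi^2} or to the standard local-perturbation argument for the chi-squared contraction coefficient.
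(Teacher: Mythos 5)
Your proposal is correct and follows essentially the same route as the paper: identity \eqref{eq: 20200412c1} is read off from \eqref{identity3: KL}, identity \eqref{eq: 20200412c2} comes from the decomposition \eqref{S-K div.} plus the reflection $D_{\phi_{1-s}}(Q\|P)=D_{\phi_s}(P\|Q)$ and the substitution $u=1-s$, the upper bounds come from bounding $D_{\phi_s}(P_Y\|Q_Y)\leq \mu_{\phi_s}(Q_X,W_{Y|X})\,D_{\phi_s}(P_X\|Q_X)$ inside the non-negative integrals, and the lower bounds come from the second-order behavior at $P=Q$. The only cosmetic differences are that you verify the reflection identity via the conjugate $\widetilde{\phi_s}=\phi_{1-s}$ rather than via the $\chi^2$ representation \eqref{eq: GV01-chi^2}, and that you re-derive the standard inequality $\mu_{\chi^2}(Q_X,W_{Y|X})\leq\mu_f(Q_X,W_{Y|X})$ by the local expansion (checking $k_\alpha''(1)>0$ and $s_\alpha''(1)>0$, as the paper also does) instead of citing it as in \eqref{eq: CohenKZ98}.
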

\begin{proof}
See Section~\ref{subsection: proof of thm: contraction coefficients}.
\end{proof}

\begin{remark}
The upper bounds on the contraction coefficients for the parametric
$f$-divergences in \eqref{K div.} and \eqref{S div.} generalize the
upper bound on the contraction coefficient for the relative entropy
in \cite[Theorem~III.6]{Raginsky16} (recall that
$K_1(P\|Q) = D(P\|Q) = S_1(P\|Q)$), so the upper bounds in
Theorem~\ref{thm: contraction coefficients} are specialized to the
latter bound at $\alpha=1$.
\end{remark}

\begin{corollary} \label{corollary: bounds - contraction coefficient}
Let
\begin{align} \label{eq: 20200417a0}
\mu_{\chi^2}(W_{Y|X}) := \sup_Q \mu_{\chi^2}(Q_X, W_{Y|X}),
\end{align}
where the supremum in the right side is over all probability measures $Q_X$ defined on $\set{X}$.
Then,
\begin{align}
\label{eq: 20200417a1}
\mu_{\chi^2}(Q_X, W_{Y|X}) & \leq \mu_{k_\alpha}(Q_X, W_{Y|X}) \leq \mu_{\chi^2}(W_{Y|X}), \quad \alpha \in (0,1], \\
\label{eq: 20200417a2}
\mu_{\chi^2}(Q_X, W_{Y|X}) & \leq \mu_{s_\alpha}(Q_X, W_{Y|X}) \leq \mu_{\chi^2}(W_{Y|X}), \quad \; \alpha \in [0,1].
\end{align}
\end{corollary}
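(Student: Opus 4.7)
The plan is to derive this corollary directly from Theorem~\ref{thm: contraction coefficients}. The left inequalities in \eqref{eq: 20200417a1} and \eqref{eq: 20200417a2} are already contained in the left inequalities of \eqref{eq: 20200412d1} and \eqref{eq: 20200412d2}, so no new argument is needed there. The only remaining task is to replace the $s$-dependent suprema of $\mu_{\phi_s}(Q_X,W_{Y|X})$ that appear on the right sides of \eqref{eq: 20200412d1} and \eqref{eq: 20200412d2} by the single universal quantity $\mu_{\chi^2}(W_{Y|X})$ in \eqref{eq: 20200417a0}. For this, it is enough to establish the uniform bound
\[
\mu_{\phi_s}(Q_X, W_{Y|X}) \leq \mu_{\chi^2}(W_{Y|X}), \qquad s \in (0,1],
\]
valid for every input reference $Q_X$, since $(0,\alpha] \subset (0,1]$ for $\alpha \in (0,1]$ and $(0,1) \subset (0,1]$.

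The key step is to exploit the identity \eqref{eq: GV01-chi^2}, which for $s \in (0,1]$ and any pair $(A,B)$ of probability measures reads $D_{\phi_s}(A\|B) = s^{-2}\,\chi^2\bigl(A \,\|\, (1-s)A + sB\bigr)$. Applying this separately to the input pair $(P_X,Q_X)$ and to the output pair $(P_Y,Q_Y)$, and using the linearity of the channel map $A \mapsto A W_{Y|X}$, which sends the mixture $R_{X,s} := (1-s)P_X + sQ_X$ to $R_{Y,s} := (1-s)P_Y + sQ_Y$, the factors $s^{-2}$ cancel in the contraction ratio and one obtains
\[
\frac{D_{\phi_s}(P_Y\|Q_Y)}{D_{\phi_s}(P_X\|Q_X)} = \frac{\chi^2(P_Y \,\|\, R_{Y,s})}{\chi^2(P_X \,\|\, R_{X,s})}.
\]
The right side is now a chi-squared contraction ratio with the input reference $R_{X,s}$; by Definition~\ref{def: contraction} it is at most $\mu_{\chi^2}(R_{X,s}, W_{Y|X})$, and by \eqref{eq: 20200417a0} this last quantity is in turn bounded above by $\mu_{\chi^2}(W_{Y|X})$, since $R_{X,s}$ is just some probability measure on $\set{X}$.

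Taking the supremum over admissible $P_X$ then delivers the uniform bound displayed above, and combining with the right-hand inequalities of \eqref{eq: 20200412d1}--\eqref{eq: 20200412d2} completes the proof. No serious obstacle arises; the only point that deserves a brief check is that the admissibility condition $D_{\phi_s}(P_X\|Q_X) \in (0,\infty)$ translates, via the identity above, into $\chi^2(P_X\|R_{X,s}) \in (0,\infty)$, so that $P_X$ does qualify as a test distribution in the supremum defining $\mu_{\chi^2}(R_{X,s}, W_{Y|X})$ and the passage from the $\phi_s$-divergence formulation to the $\chi^2$-divergence formulation is legitimate.
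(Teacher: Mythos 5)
Your proposal is correct and follows essentially the same route as the paper's own proof: both reduce the right-hand inequalities to the uniform bound $\sup_{s\in(0,1]}\mu_{\phi_s}(Q_X,W_{Y|X})\leq\mu_{\chi^2}(W_{Y|X})$, established by rewriting the $\phi_s$-contraction ratio via \eqref{eq: GV01-chi^2} as a chi-squared contraction ratio with reference input $(1-s)P_X+sQ_X$ (the $s^{-2}$ factors cancelling), and then invoking the definition \eqref{eq: 20200417a0}. Your added remark on the admissibility condition $D_{\phi_s}(P_X\|Q_X)\in(0,\infty)$ is a harmless extra check that the paper leaves implicit.
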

\begin{proof}
See Section~\ref{proof of Corollary: bounds - contraction coefficient}.
\end{proof}

\begin{example} \label{example: BSC}
Let $Q_X = \mathrm{Bernoulli}\bigl(\tfrac12\bigr)$,
and let $W_{Y|X}$ correspond to a binary symmetric channel (BSC) with
crossover probability $\varepsilon$. Then,
$\mu_{\chi^2}(Q_X, W_{Y|X}) = \mu_{\chi^2}(W_{Y|X}) = (1-2 \varepsilon)^2$.
The upper and lower bounds on $\mu_{k_\alpha}(Q_X, W_{Y|X})$ and
$\mu_{s_\alpha}(Q_X, W_{Y|X})$ in \eqref{eq: 20200417a1}
and \eqref{eq: 20200417a2} match for all $\alpha$,
and they are all equal to $(1-2 \varepsilon)^2$.
\end{example}

\vspace*{0.1cm}
The upper bound on the contraction coefficients in
Corollary~\ref{corollary: bounds - contraction coefficient} is given by
$\mu_{\chi^2}(W_{Y|X})$, whereas the lower bound is given by
$\mu_{\chi^2}(Q_X, W_{Y|X})$ which depends on the input distribution
$Q_X$. We next provide alternative upper bounds on the contraction
coefficients for the considered (parametric) $f$-divergences which,
similarly to the lower bound, scale like $\mu_{\chi^2}(Q_X, W_{Y|X})$.
Although the upper bound in Corollary~\ref{corollary: bounds - contraction coefficient}
may be tighter in some cases than the alternative upper bounds which are
next presented in Proposition~\ref{proposition: bounds - contraction coefficient}
(and in fact, the former upper bound may be even achieved with equality
as in Example~\ref{example: BSC}), the bounds in
Proposition~\ref{proposition: bounds - contraction coefficient}
are used shortly to determine the exponential rate of the convergence
to stationarity of a type of Markov chains.

\begin{proposition} \label{proposition: bounds - contraction coefficient}
For all $\alpha \in (0,1]$,
\begin{align}
\label{eq: 20200417b1}
& \mu_{\chi^2}(Q_X, W_{Y|X}) \leq \mu_{k_\alpha}(Q_X, W_{Y|X}) \leq
\frac1{\alpha \, Q_{\min}} \cdot \mu_{\chi^2}(Q_X, W_{Y|X}), \\
\label{eq: 20200417b2}
& \mu_{\chi^2}(Q_X, W_{Y|X}) \leq \mu_{s_\alpha}(Q_X, W_{Y|X}) \leq
\frac{(1-\alpha) \, \log_{\mathrm{e}}\Bigl(\frac1\alpha\Bigr)
+ 2\alpha-1}{(1-3 \alpha + 3 \alpha^2) \, Q_{\min}} \cdot \mu_{\chi^2}(Q_X, W_{Y|X}),
\end{align}
where $Q_{\min}$ denotes the minimal positive mass of the input distribution $Q_X$.
\end{proposition}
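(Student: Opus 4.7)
The lower bounds $\mu_{\chi^2}(Q_X, W_{Y|X}) \le \mu_{k_\alpha}(Q_X, W_{Y|X})$ and $\mu_{\chi^2}(Q_X, W_{Y|X}) \le \mu_{s_\alpha}(Q_X, W_{Y|X})$ are already part of Theorem~\ref{thm: contraction coefficients} (equations \eqref{eq: 20200412d1}--\eqref{eq: 20200412d2}), so the real content is the two upper bounds; the plan is to prove these by combining the integral identities \eqref{eq: 20200412c1}--\eqref{eq: 20200412c2} with a pointwise two-sided comparison between the Gy\"{o}rfi--Vajda divergence and $\chi^2$.

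First I would use \eqref{eq: 20200412c1}--\eqref{eq: 20200412c2} to write $K_\alpha$ and $S_\alpha$ as weighted integrals in $s$ of the Gy\"{o}rfi--Vajda divergence $D_{\phi_s}(P\|Q) = \int (p-q)^2/((1-s)p + sq)\, \mathrm{d}\mu$, and then sandwich the integrand between multiples of $\chi^2(P\|Q)$ using the two elementary pointwise estimates
\begin{align*}
D_{\phi_s}(P\|Q) \le \tfrac{1}{s}\, \chi^2(P\|Q), \qquad D_{\phi_s}(P_X\|Q_X) \ge Q_{\min}\, \chi^2(P_X\|Q_X).
\end{align*}
The first follows from $(1-s)p + sq \ge sq$ and is universal; the second uses the discrete-alphabet inequality $P_X(x) \le 1 \le Q_X(x)/Q_{\min}$, which yields $(1-s)P_X(x) + sQ_X(x) \le Q_X(x)/Q_{\min}$, and is the only step that introduces $Q_{\min}$.

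The plan is then to apply the upper estimate at the output (to bound $K_\alpha(P_Y\|Q_Y)$ and $S_\alpha(P_Y\|Q_Y)$ from above), chain it with the chi-squared contraction $\chi^2(P_Y\|Q_Y) \le \mu_{\chi^2}(Q_X, W_{Y|X})\, \chi^2(P_X\|Q_X)$, and apply the lower estimate at the input (to bound $K_\alpha(P_X\|Q_X)$ and $S_\alpha(P_X\|Q_X)$ from below). The resulting $s$-integrations reduce to explicit computations: for $K_\alpha$ the numerator becomes $\int_0^\alpha \mathrm{d}s = \alpha$ (because the $s$ weight in $s\, D_{\phi_s}$ cancels the $1/s$ in the upper estimate) and the denominator becomes $Q_{\min} \int_0^\alpha s\, \mathrm{d}s = \alpha^2 Q_{\min}/2$, while for $S_\alpha$ the relevant integrals are
\begin{align*}
\int_0^1 \tfrac{g_\alpha(s)}{s}\, \mathrm{d}s = (1-\alpha)\log_{\mathrm{e}}(1/\alpha) + 2\alpha - 1, \qquad \int_0^1 g_\alpha(s)\, \mathrm{d}s = \tfrac{1 - 3\alpha + 3\alpha^2}{2}.
\end{align*}
Dividing the numerator bound by the denominator bound and taking the supremum over $P_X$ produces upper bounds on $\mu_{k_\alpha}(Q_X, W_{Y|X})$ and $\mu_{s_\alpha}(Q_X, W_{Y|X})$ of the stated form $C(\alpha)\, \mu_{\chi^2}(Q_X, W_{Y|X})/Q_{\min}$.

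The hard part will be verifying the precise numerical constants $1/\alpha$ and $[(1-\alpha)\log_{\mathrm{e}}(1/\alpha) + 2\alpha - 1]/(1 - 3\alpha + 3\alpha^2)$ claimed in the proposition: the crudest form of the input-side estimate already captures the correct structural dependence on $\alpha$ and $Q_{\min}$, but reaching the precise constants may require keeping the sharper pointwise form $D_{\phi_s}(P_X\|Q_X) \ge (Q_{\min}/(1 - s(1 - Q_{\min})))\, \chi^2(P_X\|Q_X)$ (from the tighter inequality $(1-s)P_X(x) + sQ_X(x) \le Q_X(x)(1 - s(1-Q_{\min}))/Q_{\min}$) inside the $s$-integration and carefully evaluating the resulting closed-form integrals.
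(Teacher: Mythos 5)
Your overall architecture (use the integral identities \eqref{eq: 20200412c1}--\eqref{eq: 20200412c2}, bound $D_{\phi_s}$ above by $\chi^2$ at the output and below by $Q_{\min}\chi^2$ at the input, chain through $\mu_{\chi^2}$) is sensible and does prove an inequality of the right shape, but it does not reach the stated constants, and this is a genuine quantitative gap rather than a bookkeeping issue. Carrying out your own computation, the $K_\alpha$ bound you obtain is $\frac{\alpha}{\alpha^2 Q_{\min}/2}=\frac{2}{\alpha Q_{\min}}$ and the $S_\alpha$ bound is $\frac{2\bigl[(1-\alpha)\log_{\mathrm{e}}(1/\alpha)+2\alpha-1\bigr]}{(1-3\alpha+3\alpha^2)Q_{\min}}$ --- in both cases exactly \emph{twice} the constants in \eqref{eq: 20200417b1}--\eqref{eq: 20200417b2}. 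The entire loss sits in the input-side step: your numerator constant $\log\mathrm{e}\int_0^1 g_\alpha(s)\,\mathrm{d}s^{-1}\!\cdot\!(\cdots)$ actually matches $k_\alpha'(1)+k_\alpha(0)$ and $s_\alpha'(1)+s_\alpha(0)$, but the pointwise bound $D_{\phi_s}(P_X\|Q_X)\geq Q_{\min}\,\chi^2(P_X\|Q_X)$, integrated against the weight, only certifies $D_f(P_X\|Q_X)\geq \tfrac12 f''(1)\,Q_{\min}\,\chi^2(P_X\|Q_X)$ (note $\log\mathrm{e}\int_0^1 g_\alpha(s)\,\mathrm{d}s=\tfrac12 s_\alpha''(1)$), whereas the stated constants require the stronger, non-pointwise inequality $D_f(P_X\|Q_X)\geq f''(1)\,Q_{\min}\,\chi^2(P_X\|Q_X)$. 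Your proposed fix --- the sharper estimate $D_{\phi_s}(P_X\|Q_X)\geq \frac{Q_{\min}}{1-s(1-Q_{\min})}\chi^2(P_X\|Q_X)$ --- provably does not close the gap either: at $\alpha=1$ and $Q_{\min}=0.4$ it yields the constant $\bigl(0.4\cdot\frac{-\log_{\mathrm{e}}0.4-0.6}{0.36}\bigr)^{-1}\approx 2.85$, which still exceeds $1/Q_{\min}=2.5$, and for small $\alpha$ it degenerates back to $\approx\frac{2}{\alpha Q_{\min}}$.

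The paper's proof takes a different route entirely: it invokes \cite[Theorem~2.2]{Makur_PhD19}, which states that $\mu_f(Q_X,W_{Y|X})\leq\frac{f'(1)+f(0)}{f''(1)\,Q_{\min}}\cdot\mu_{\chi^2}(Q_X,W_{Y|X})$ whenever $z(t):=\frac{f(t)-f(0)}{t}$ is concave, and the actual work consists of verifying $z''<0$ for $f=k_\alpha$ and $f=s_\alpha$ (the latter via a power-series estimate for $\log_{\mathrm{e}}(1+x)$) and computing $f(0)$, $f'(1)$, $f''(1)$. The factor-of-two improvement over your argument is baked into Makur's input-side lemma, which exploits the concavity of $z$ globally rather than a density-pointwise comparison of $(1-s)p+sq$ with $q$. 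So either cite that result and check its hypotheses, or accept the weaker constants; as written, your proof establishes only the doubled bounds. (Your treatment of the lower bounds via Theorem~\ref{thm: contraction coefficients} is fine and agrees with the paper.)
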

\begin{proof}
See Section~\ref{proof of proposition: bounds - contraction coefficient}.
\end{proof}

\begin{remark}
In view of \eqref{eq: 20200417c1}, at $\alpha=1$, \eqref{eq: 20200417b1} and
\eqref{eq: 20200417b2} specialize to an upper bound on the contraction coefficient
of the relative entropy (KL divergence) as a function of the contraction coefficient
of the chi-squared divergence. In this special case, both \eqref{eq: 20200417b1}
and \eqref{eq: 20200417b2} give
\begin{align}
\mu_{\chi^2}(Q_X, W_{Y|X}) \leq \mu_{\mathrm{KL}}(Q_X, W_{Y|X}) \leq \frac1{Q_{\min}}
\cdot \mu_{\chi^2}(Q_X, W_{Y|X}),
\end{align}
which then coincides with \cite[Theorem~10]{MakurZ15}.
\end{remark}

\vspace*{0.1cm}
We next apply Proposition~\ref{proposition: bounds - contraction coefficient} to consider
the convergence rate to stationarity of Markov chains by the introduced $f$-divergences
in Definition~\ref{def: parametric f-divergences}. The next result follows
\cite[Section~2.4.3]{Makur_PhD19}, and it provides a generalization of the
result there.

\begin{theorem} \label{thm: MC}
Consider a time-homogeneous, irreducible and reversible discrete-time Markov
chain with a finite state space $\set{X}$, let $W$ be its probability transition
matrix, and $Q_X$ be its unique stationary distribution (reversibility means that
$Q_X(x) [W]_{x,y} = Q_X(y) [W]_{y,x}$ for all $x,y \in \set{X}$). Let $P_X$ be
an initial probability distribution over $\set{X}$. Then, for all $\alpha \in (0,1]$
and $n \in \naturals$,
\begin{align}
\label{eq: 20200418a1}
& K_\alpha(P_X W^n \| Q_X) \leq \mu_{k_\alpha}(Q_X, W^n) \; K_\alpha(P_X \| Q_X), \\
\label{eq: 20200418a2}
& S_\alpha(P_X W^n \| Q_X) \leq \mu_{s_\alpha}(Q_X, W^n) \; S_\alpha(P_X \| Q_X),
\end{align}
and the contraction coefficients in the right sides of \eqref{eq: 20200418a1}
and \eqref{eq: 20200418a2} scale like the $n$-th power of the contraction coefficient
for the chi-squared divergence as follows:
\begin{align}
\label{eq: 20200418a3}
& \bigl( \mu_{\chi^2}(Q_X, W) \bigr)^n \leq \mu_{k_\alpha}(Q_X, W^n)
\leq \frac1{\alpha \, Q_{\min}} \cdot \bigl(\mu_{\chi^2}(Q_X, W)\bigr)^n, \\
\label{eq: 20200418a4}
& \bigl( \mu_{\chi^2}(Q_X, W) \bigr)^n \leq \mu_{s_\alpha}(Q_X, W^n)
\leq \frac{(1-\alpha) \, \log_{\mathrm{e}}\Bigl(\frac1\alpha\Bigr)
+ 2\alpha-1}{(1-3 \alpha + 3 \alpha^2) \, Q_{\min}} \cdot \bigl(\mu_{\chi^2}(Q_X, W)\bigr)^n.
\end{align}
\end{theorem}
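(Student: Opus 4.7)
The inequalities \eqref{eq: 20200418a1} and \eqref{eq: 20200418a2} are essentially bookkeeping once stationarity is invoked. Because $Q_X$ is the unique invariant distribution of $W$, one has $Q_X W^n = Q_X$ for every $n \in \naturals$, and $W^n$ is itself a stochastic transformation from $\set{X}$ to $\set{X}$. I would therefore apply Definition~\ref{def: contraction} with $W_{Y|X}$ replaced by $W^n$, $P_X$ by $P_X$, and $Q_X$ by $Q_X$: the supremum in \eqref{contraction coef.} then controls the ratio $D_f(P_X W^n\|Q_X)/D_f(P_X\|Q_X)$ by $\mu_f(Q_X, W^n)$ for every admissible $P_X$. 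Specializing $f$ to $k_\alpha$ and to $s_\alpha$, via \eqref{eq: 20200412a1}--\eqref{eq: 20200412a2}, yields \eqref{eq: 20200418a1} and \eqref{eq: 20200418a2} at once.

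For the two-sided bounds \eqref{eq: 20200418a3} and \eqref{eq: 20200418a4}, I would invoke Proposition~\ref{proposition: bounds - contraction coefficient} with the same substitution $W_{Y|X} := W^n$; note that $Q_{\min}$ depends only on the stationary distribution $Q_X$, and is hence unaffected by replacing $W$ by $W^n$. This immediately gives
\begin{align*}
\mu_{\chi^2}(Q_X, W^n) &\leq \mu_{k_\alpha}(Q_X, W^n) \leq \frac{1}{\alpha\,Q_{\min}}\, \mu_{\chi^2}(Q_X, W^n), \\
\mu_{\chi^2}(Q_X, W^n) &\leq \mu_{s_\alpha}(Q_X, W^n) \leq \frac{(1-\alpha)\log_{\mathrm{e}}\bigl(\tfrac{1}{\alpha}\bigr)+2\alpha-1}{(1-3\alpha+3\alpha^2)\,Q_{\min}}\, \mu_{\chi^2}(Q_X, W^n),
\end{align*}
so the proof reduces to the tensorization identity
\begin{align*}
\mu_{\chi^2}(Q_X, W^n) = \bigl(\mu_{\chi^2}(Q_X, W)\bigr)^n,
\end{align*}
which converts each $\mu_{\chi^2}(Q_X, W^n)$ on the right into $(\mu_{\chi^2}(Q_X, W))^n$.

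The main obstacle, and the only place where reversibility is used in an essential way, is this multiplicativity. The route I would take is operator-theoretic. View $W$ as a bounded linear operator on the Hilbert space $L^2(Q_X)$ via $(Wf)(x) := \sum_{y} [W]_{x,y}\, f(y)$; reversibility $Q_X(x)[W]_{x,y} = Q_X(y)[W]_{y,x}$ is exactly the statement that $W$ is self-adjoint on $L^2(Q_X)$. A classical identity (Rényi--Sarmanov) asserts that $\mu_{\chi^2}(Q_X, W)$ equals the squared operator norm of $W$ restricted to the codimension-one subspace $\{f \in L^2(Q_X) : \expectation_{Q_X}[f]=0\}$, equivalently the square of the second-largest singular value of $W$. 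Because $W$ is self-adjoint, its singular values coincide with the absolute values of its eigenvalues and are preserved under the spectral calculus, so $W^n$ restricted to the mean-zero subspace has operator norm equal to the $n$-th power of that of $W$ on the same subspace. Squaring yields $\mu_{\chi^2}(Q_X, W^n) = \mu_{\chi^2}(Q_X, W)^n$, after which combining with the display above closes \eqref{eq: 20200418a3} and \eqref{eq: 20200418a4}.

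The only delicate point is confirming the Rényi--Sarmanov representation of $\mu_{\chi^2}$ as a squared $L^2(Q_X)$-operator norm on mean-zero functions; once this is in hand the entire theorem is a one-page consequence of stationarity, Proposition~\ref{proposition: bounds - contraction coefficient}, and the spectral calculus for self-adjoint operators. I do not anticipate difficulty extending the argument to the full range $\alpha \in (0,1]$ for $k_\alpha$ and $\alpha \in [0,1]$ for $s_\alpha$, since the upper bounds in Proposition~\ref{proposition: bounds - contraction coefficient} are already uniform in these parameters; the boundary case $\alpha=0$ of $s_\alpha$, where $S_0(P\|Q) = D(Q\|P)$, can be handled either by continuity in $\alpha$ or by the standard SDPI for the relative entropy applied with the roles of $P$ and $Q$ interchanged.
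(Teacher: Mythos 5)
Your proposal is correct and follows essentially the same route as the paper: stationarity $Q_X W^n = Q_X$ plus the definition of the contraction coefficient gives \eqref{eq: 20200418a1}--\eqref{eq: 20200418a2}, and Proposition~\ref{proposition: bounds - contraction coefficient} applied to $W^n$ combined with the multiplicativity $\mu_{\chi^2}(Q_X, W^n) = \bigl(\mu_{\chi^2}(Q_X, W)\bigr)^n$ gives \eqref{eq: 20200418a3}--\eqref{eq: 20200418a4}. The only difference is that the paper cites this multiplicativity to \cite[Eq.~(2.92)]{Makur_PhD19}, whereas you supply the (correct) spectral argument for it --- self-adjointness of $W$ on $L^2(Q_X)$ under reversibility, invariance of the mean-zero subspace under $W$ by stationarity, and the Sarmanov identification of $\mu_{\chi^2}$ with the squared operator norm there --- which is a welcome expansion of a step the paper outsources.
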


\begin{proof}
Inequalities~\eqref{eq: 20200418a1} and \eqref{eq: 20200418a2} holds since $Q_X W^n = Q_X$,
for all $n \in \naturals$, and due to Definition~\ref{def: contraction} and
\eqref{eq: 20200412a1}--\eqref{eq: 20200412a2}.
Inequalities~\eqref{eq: 20200418a3} and \eqref{eq: 20200418a4} hold
by Proposition~\ref{proposition: bounds - contraction coefficient}, and due to the reversibility
of the Markov chain which implies that (see \cite[Eq.~(2.92)]{Makur_PhD19})
\begin{align}
\mu_{\chi^2}(Q_X, W^n) = \bigl( \mu_{\chi^2}(Q_X, W) \bigr)^n, \quad n \in \naturals.
\end{align}
\end{proof}

In view of \eqref{eq: 20200418a3} and \eqref{eq: 20200418a4}, Theorem~\ref{thm: MC}
readily gives the following result on the exponential decay rate of the upper bounds
on the divergences in the left sides of \eqref{eq: 20200418a1} and \eqref{eq: 20200418a2}.
\begin{corollary} \label{corollary: lim. of n-th root}
For all $\alpha \in (0,1]$,
\begin{align}
\lim_{n \to \infty} \bigl(\mu_{k_\alpha}(Q_X, W^n)\bigr)^{1/n}
= \mu_{\chi^2}(Q_X, W) = \lim_{n \to \infty} \bigl(\mu_{s_\alpha}(Q_X, W^n)\bigr)^{1/n}.
\end{align}
\end{corollary}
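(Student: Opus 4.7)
The plan is to read off the corollary as a direct consequence of the two-sided bounds in Theorem~\ref{thm: MC}, specifically \eqref{eq: 20200418a3} and \eqref{eq: 20200418a4}, by taking $n$-th roots and applying the squeeze theorem.

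First, I would fix $\alpha \in (0,1]$ and observe that \eqref{eq: 20200418a3} sandwiches $\mu_{k_\alpha}(Q_X, W^n)$ between $\bigl(\mu_{\chi^2}(Q_X, W)\bigr)^n$ and $C_\alpha \bigl(\mu_{\chi^2}(Q_X, W)\bigr)^n$, where $C_\alpha := \tfrac{1}{\alpha Q_{\min}}$ is a positive constant depending only on $\alpha$ and on the Markov chain (through $Q_{\min}$), but crucially independent of $n$. Taking $n$-th roots yields
\begin{equation*}
\mu_{\chi^2}(Q_X, W) \; \leq \; \bigl(\mu_{k_\alpha}(Q_X, W^n)\bigr)^{1/n} \; \leq \; C_\alpha^{1/n} \cdot \mu_{\chi^2}(Q_X, W).
\end{equation*}
Since $C_\alpha$ is a fixed positive constant, $C_\alpha^{1/n} \to 1$ as $n \to \infty$, so the squeeze theorem forces $\bigl(\mu_{k_\alpha}(Q_X, W^n)\bigr)^{1/n} \to \mu_{\chi^2}(Q_X, W)$.

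The identical argument applied to \eqref{eq: 20200418a4}, with the prefactor $C_\alpha$ replaced by $\tfrac{(1-\alpha) \log_{\mathrm{e}}(1/\alpha) + 2\alpha - 1}{(1 - 3\alpha + 3\alpha^2) Q_{\min}}$ (again a finite positive constant independent of $n$, with the usual continuous extension handling the case $\alpha = 1$ where the numerator and denominator both equal $1$), yields $\bigl(\mu_{s_\alpha}(Q_X, W^n)\bigr)^{1/n} \to \mu_{\chi^2}(Q_X, W)$ by the same squeeze.

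There is essentially no obstacle: all the technical work has already been done in Proposition~\ref{proposition: bounds - contraction coefficient} and in deducing Theorem~\ref{thm: MC} from it via the reversibility identity $\mu_{\chi^2}(Q_X, W^n) = \bigl(\mu_{\chi^2}(Q_X, W)\bigr)^n$. The only minor point worth checking is that the prefactor in \eqref{eq: 20200418a4} is a well-defined finite positive number for every $\alpha \in (0,1]$ (the denominator $1-3\alpha+3\alpha^2$ has discriminant $9-12<0$ and is thus strictly positive, and the numerator is nonnegative on $(0,1]$ with a removable issue only at endpoints handled by continuity), so that the upper bound really is $O\bigl(\mu_{\chi^2}(Q_X, W)^n\bigr)$ and not vacuous.
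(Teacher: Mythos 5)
Your proposal is correct and is exactly the argument the paper intends: the corollary is stated as following "readily" from the two-sided bounds \eqref{eq: 20200418a3} and \eqref{eq: 20200418a4} of Theorem~\ref{thm: MC}, i.e., by taking $n$-th roots and noting that the $n$-th root of the $n$-independent positive prefactor tends to $1$. Your additional check that the prefactor in \eqref{eq: 20200418a4} is finite and positive on $(0,1]$ is a sensible verification that the paper leaves implicit.
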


\begin{remark}
Theorem~\ref{thm: MC} and Corollary~\ref{corollary: lim. of n-th root} generalize
the results in \cite[Section~2.4.3]{Makur_PhD19}, which follow as a special case at $\alpha=1$
(see \eqref{eq: 20200417c1}).
\end{remark}

\vspace*{0.1cm}
We end this subsection by considering maximal correlations, which are closely related to
the contraction coefficient for the chi-squared divergence.
\begin{definition}  \label{definition: maximal correlation}
The {\em maximal correlation} between two random variables $X$ and $Y$
is defined as
\begin{align}
\label{eq: max. correlation}
\rho_{\mathrm{m}}(X;Y) := \sup_{f,g} \, \expectation[f(X) g(Y)],
\end{align}
where the supremum is taken over all real-valued functions $f$ and $g$ such that
\begin{align}
\expectation[f(X)] = \expectation[g(Y)] = 0,
\quad \expectation[f^2(X)] \leq 1, \; \expectation[g^2(Y)] \leq 1.
\end{align}
\end{definition}
It is well-known \cite{Sarmanov62} that if $X \sim Q_X$ and $Y \sim Q_Y = Q_X W_{Y|X}$,
then the contraction coefficient for the chi-squared divergence $\mu_{\chi^2}(Q_X, W_{Y|X})$
is equal to the square of the maximal correlation between the random variables $X$ and $Y$,
i.e.,
\begin{align} \label{eq:Sarmanov}
\rho_{\mathrm{m}}(X;Y) = \sqrt{\mu_{\chi^2}(Q_X, W_{Y|X})}.
\end{align}

A simple application of Corollary~\ref{corollary: RE and chi2} and \eqref{eq:Sarmanov}
gives the following result.
\begin{proposition} \label{prop: max. correlation}
In the setting of Definition~\ref{def: contraction}, for $s \in [0,1]$, let
$X_s \sim (1-s) P_X + s Q_X$ and $Y_s \sim (1-s) P_Y + s Q_Y$ with $P_X \neq Q_X$
and $P_X \ll \gg Q_X$. Then, the following inequality holds
\begin{align} \label{eq: LB max. correlation}
\sup_{s \in [0,1]} \rho_{\mathrm{m}}(X_s;Y_s) \geq \max \biggl\{
\sqrt{\frac{D(P_Y \| Q_Y)}{D(P_X \| Q_X)}}, \,
\sqrt{\frac{D(Q_Y \| P_Y)}{D(Q_X \| P_X)}} \biggr\}.
\end{align}
\end{proposition}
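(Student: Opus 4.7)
My plan is to combine three ingredients: the integral identity of Corollary~\ref{corollary: RE and chi2}, the data-processing property encoded in the contraction coefficient for the chi-squared divergence, and Sarmanov's identity \eqref{eq:Sarmanov} that equates this contraction coefficient with the squared maximal correlation.

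First, I would set $R_{s,X} := (1-s)P_X + s Q_X$ so that $X_s \sim R_{s,X}$, and observe that since $W_{Y|X}$ is linear in the input distribution,
\begin{align*}
R_{s,Y} := R_{s,X} W_{Y|X} = (1-s) P_Y + s Q_Y,
\end{align*}
so $Y_s \sim R_{s,Y}$. By the definition of the contraction coefficient in \eqref{contraction coef.} applied to the chi-squared divergence with input distribution $R_{s,X}$, and then invoking Sarmanov's identity \eqref{eq:Sarmanov},
\begin{align*}
\chi^2(P_Y \| R_{s,Y}) \leq \mu_{\chi^2}(R_{s,X}, W_{Y|X}) \; \chi^2(P_X \| R_{s,X})
= \rho_{\mathrm{m}}^2(X_s;Y_s) \; \chi^2(P_X \| R_{s,X}),
\end{align*}
for every $s \in (0,1]$.

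Next, I would apply the integral identity \eqref{identity2: KL} of Corollary~\ref{corollary: RE and chi2} to both $(P_Y,Q_Y)$ and $(P_X,Q_X)$, divide the pointwise bound above by $s$, and integrate over $s \in (0,1]$:
\begin{align*}
\tfrac1{\log \mathrm{e}} \, D(P_Y \| Q_Y)
= \int_0^1 \chi^2(P_Y \| R_{s,Y}) \, \frac{\mathrm{d}s}{s}
\leq \Bigl(\sup_{s \in [0,1]} \rho_{\mathrm{m}}^2(X_s; Y_s) \Bigr) \int_0^1 \chi^2(P_X \| R_{s,X}) \, \frac{\mathrm{d}s}{s}
= \Bigl(\sup_{s \in [0,1]} \rho_{\mathrm{m}}^2(X_s; Y_s) \Bigr) \, \tfrac1{\log \mathrm{e}} \, D(P_X \| Q_X).
\end{align*}
Since $P_X \ll \gg Q_X$ ensures $D(P_X\|Q_X) \in (0,\infty)$, rearranging and taking the square root yields the first term in the maximum on the right side of \eqref{eq: LB max. correlation}.

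For the second term, I would simply repeat the argument with the roles of $P$ and $Q$ interchanged: the family $\{(1-s)Q_X + s P_X : s \in [0,1]\}$ is identical to $\{R_{s,X} : s \in [0,1]\}$ under the substitution $s \mapsto 1-s$ (and likewise for $Y$), so the supremum of $\rho_{\mathrm{m}}(X_s;Y_s)$ is unchanged and one obtains $\sup_s \rho_{\mathrm{m}}^2(X_s;Y_s) \geq D(Q_Y\|P_Y)/D(Q_X\|P_X)$. Taking the maximum of the two lower bounds gives \eqref{eq: LB max. correlation}. The proof is essentially routine once the three ingredients are in place; the only subtlety worth noting is that $\mu_{\chi^2}$ depends on the input distribution, so it is crucial that the contraction coefficient is evaluated at $R_{s,X}$, which is precisely the distribution of $X_s$, and this is what makes Sarmanov's identity directly applicable under the supremum.
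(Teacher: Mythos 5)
Your proposal is correct and follows essentially the same route as the paper: the integral identity \eqref{identity2: KL}, the data-processing bound for $\chi^2$ at the input distribution $(1-s)P_X+sQ_X$, Sarmanov's identity \eqref{eq:Sarmanov}, and the swap $P \leftrightarrow Q$ with $s \mapsto 1-s$ for the second term. The only cosmetic difference is that you bound the integrand pointwise before integrating, whereas the paper bounds the ratio of the two integrals directly; these are the same argument.
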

\begin{proof}
See Section~\ref{proof of prop: max. correlation}.
\end{proof}

\section{Proofs}
\label{section: proofs}

This section provides proofs of the results in Sections~\ref{section: main results}
and~\ref{section: applications}.

\subsection{Proof of Theorem~\ref{thm: KL and chi^2}}
\label{subsection: proof KL-chi2}

{\em Proof of \eqref{identity: KL}}: We rely on an integral representation
of the logarithm function (on base $\mathrm{e}$):
\begin{align} \label{int. rep. of log}
\log_{\mathrm{e}} x
&= \int_0^1 \frac{x-1}{x +(1-x)v} \; \mathrm{d}v, \quad \forall \, x >0.
\end{align}
Let $\mu$ be a dominating measure of $P$ and $Q$ (i.e., $P,Q \ll \mu$), and let
$p := \frac{\mathrm{d}P}{\mathrm{d}\mu}$,
$q := \frac{\mathrm{d}Q}{\mathrm{d}\mu}$, and
\begin{align}
\label{r_lambda}
r_\lambda := \frac{\mathrm{d}R_\lambda}{\mathrm{d}\mu}
= (1-\lambda)p + \lambda q, \quad \forall \, \lambda \in [0,1],
\end{align}
where the last equality is due to \eqref{mixture PMF}. For all
$\lambda \in [0,1]$,
\begin{align}
\label{1411a1}
\tfrac1{\log \mathrm{e}} \, D(P\| R_\lambda) &=
\int p \log_{\mathrm{e}} \Bigl( \frac{p}{r_\lambda}\Bigr) \, \mathrm{d}\mu \\
\label{1411a3}
&= \int_0^1 \int \frac{p (p - r_\lambda)}{p + v(r_\lambda - p)} \;
\mathrm{d}\mu \, \mathrm{d}v,
\end{align}
where \eqref{1411a3} holds due to \eqref{int. rep. of log} with
$x:=\frac{p}{r_\lambda}$, and by swapping the
order of integration. The inner integral in the right side of
\eqref{1411a3} satisfies, for all $v \in (0,1]$,
\begin{align}
& \int \frac{p (p - r_\lambda)}{p + v(r_\lambda - p)} \;
\mathrm{d}\mu \nonumber \\
&= \int (p-r_\lambda) \left( 1 + \frac{v(p-r_\lambda)}{p
+ v(r_\lambda - p)} \right)
\mathrm{d}\mu \\
&= \int (p-r_\lambda) \, \mathrm{d}\mu
+ v \int \frac{(p-r_\lambda)^2}{p + v(r_\lambda - p)} \; \mathrm{d}\mu \\
\label{1411a6}
&= v \int \frac{(p-r_\lambda)^2}{(1-v)p + v r_\lambda} \; \mathrm{d}\mu \\
&= \frac1{v} \int \frac{ \bigl(p-\bigl[(1-v)p
+ v r_\lambda \bigr] \bigr)^2}{(1-v)p
+ v r_\lambda} \; \mathrm{d}\mu \\[0.1cm]
\label{inner int. 1}
&= \frac1{v} \; \chi^2 \bigl( P \, \| \, (1-v)P + v R_\lambda \bigr),
\end{align}
where \eqref{1411a6} holds since $\int p \, \mathrm{d}\mu = 1$, and
$\int r_\lambda \, \mathrm{d}\mu = 1$. From \eqref{mixture PMF}, for
all $(\lambda , v) \in [0,1]^2$,
\begin{align}
\label{identity R}
(1-v)P + v R_\lambda
= (1-\lambda v) P + \lambda v \, Q = R_{\lambda v}.
\end{align}
The substitution of \eqref{identity R} into the right
side of \eqref{inner int. 1} gives that, for all
$(\lambda, v) \in [0,1] \times (0,1]$,
\begin{align}
\label{inner int. 2}
\int \frac{p (p - r_\lambda)}{p + v(r_\lambda - p)} \; \mathrm{d}\mu
= \frac1{v} \; \chi^2( P \| R_{\lambda v}).
\end{align}
Finally, substituting \eqref{inner int. 2} into the right side of
\eqref{1411a3} gives that, for all $\lambda \in (0,1]$,
\begin{align}
\label{1811b1}
\tfrac1{\log \mathrm{e}} \, D(P\| R_\lambda)
&= \int_0^1  \frac1{v} \; \chi^2( P \| R_{\lambda v}) \, \mathrm{d}v \\[0.1cm]
\label{1811b2}
&= \int_0^\lambda \frac1{s} \; \chi^2( P \| R_s) \, \mathrm{d}s,
\end{align}
where \eqref{1811b2} holds by the transformation $s := \lambda v$.
Equality~\eqref{1811b2} also holds for $\lambda = 0$ since
$D(P\|R_0) = D(P\|P) = 0$.

{\em Proof of \eqref{identity: chi2}}:
For all $s \in (0,1]$,
\begin{align}
\chi^2(P\|Q) &= \int \frac{(p-q)^2}{q} \; \mathrm{d}\mu \nonumber \\
&= \frac1{s^2} \int \frac{\bigl[ \bigl(sp + (1-s)q \bigr)
- q \bigr]^2}{q} \; \mathrm{d}\mu \\
\label{1411a4}
&= \frac1{s^2} \int \frac{ \bigl( r_{1-s} - q \bigr)^2}{q} \;
\mathrm{d}\mu \\[0.1cm]
\label{1411a5}
&= \frac1{s^2} \; \chi^2 \bigl( R_{1-s} \, \| \, Q \bigr),
\end{align}
where \eqref{1411a4} holds due to \eqref{r_lambda}. From
\eqref{1411a5}, it follows that for all $\lambda \in [0,1]$,
\begin{align}
\int_0^\lambda \frac1{s} \;
\chi^2 \bigl( R_{1-s} \, \| \, Q \bigr) \, \mathrm{d}s
= \int_0^\lambda s \, \mathrm{d}s \; \,  \chi^2(P\|Q)
= \tfrac12 \, \lambda^2 \, \chi^2(P\|Q).
\end{align}

\subsection{Proof of Proposition~\ref{prop: f-div. ineq.}}
\label{subsection: proof f-div ineq.}

\begin{enumerate}[1)]
\item
{\em Simple Proof of Pinsker's Inequality}:
By \cite{Gilardoni06-cor} or \cite[(58)]{ReidW11},
\begin{align}
\chi^2(P\|Q) \geq
\begin{dcases}
|P-Q|^2,  & \mbox{if} \; |P-Q| \in [0,1], \\[0.1cm]
\frac{|P-Q|}{2-|P-Q|}, & \mbox{if} \; |P-Q| \in (1, 2].
\end{dcases}
\end{align}
We need the weaker inequality $\chi^2(P\|Q) \geq |P-Q|^2$, proved by
the Cauchy-Schwarz inequality:
\begin{align}
\label{1411b1}
\chi^2(P\|Q) &= \int \frac{(p-q)^2}{q} \, \mathrm{d}\mu \; \int q \, \mathrm{d}\mu \\
\label{1411b2}
&\geq \left( \int \frac{|p-q|}{\sqrt{q}} \cdot \sqrt{q} \, \mathrm{d}\mu \right)^2 \\
\label{1411b4}
&= |P-Q|^2.
\end{align}
By combining \eqref{identity2: KL} and \eqref{1411b1}--\eqref{1411b4},
it follows that
\begin{align}
\tfrac1{\log \mathrm{e}} \, D(P\| Q)
&= \int_0^1 \chi^2(P \, \| \, (1-s)P + sQ) \;
\frac{\mathrm{d}s}{s} \\[0.1cm]
&\geq \int_0^1 \big|P - \bigl( (1-s)P + s Q \bigr) \big|^2 \;
\frac{\mathrm{d}s}{s} \\[0.1cm]
&= \int_0^1 s \, |P-Q|^2 \, \mathrm{d}s \\[0.1cm]
&= \tfrac12 \, |P-Q|^2.
\end{align}

\item
{\em Proof of \eqref{RD2-chi2 new} and its local tightness}:
\begin{align}
\label{1511a1}
\tfrac1{\log \mathrm{e}} \, D(P\| Q)
&= \int_0^1 \chi^2(P \, \| \, (1-s)P + sQ) \; \frac{\mathrm{d}s}{s} \\[0.1cm]
\label{1511a2}
&= \int_0^1 \left( \int \frac{\bigl[p-((1-s)p+sq)\bigr]^2}{(1-s)p + sq}
\; \mathrm{d}\mu \right) \frac{\mathrm{d}s}{s} \\[0.1cm]
\label{1511a3}
&= \int_0^1 \int \frac{s(p-q)^2}{(1-s)p + sq} \; \mathrm{d}\mu \;
\mathrm{d}s \\[0.1cm]
\label{1511a4}
&\leq \int_0^1 \int s(p-q)^2 \left( \frac{1-s}{p} + \frac{s}{q} \right)
\mathrm{d}\mu \; \mathrm{d}s \\[0.1cm]
\label{1511a5}
&= \int_0^1 s^2 \, \mathrm{d}s \; \int \frac{(p-q)^2}{q} \; \mathrm{d}\mu +
\int_0^1 s(1-s) \, \mathrm{d}s \; \int \frac{(p-q)^2}{p} \; \mathrm{d}\mu \\[0.1cm]
&= \tfrac13 \, \chi^2(P\|Q) + \tfrac16 \, \chi^2(Q\|P),
\end{align}
where \eqref{1511a1} is \eqref{identity2: KL}, and \eqref{1511a4} holds due to
Jensen's inequality and the convexity of the hyperbola.

We next show the local tightness of inequality \eqref{RD2-chi2 new} by proving
that \eqref{lim 1} yields \eqref{locally tight}. Let $\{P_n\}$ be a sequence of
probability measures, defined on a measurable space $(\set{X}, \mathscr{F})$,
and assume that $\{P_n\}$ converges to a probability measure $P$ in the sense
that \eqref{lim 1} holds. In view of \cite[Theorem~7]{Sason18} (see also
\cite{PardoV03} and \cite[Section~4.F]{ISSV16}), it follows that
\begin{align} \label{1811a1}
\lim_{n \to \infty} D(P_n \| P) = \lim_{n \to \infty} \chi^2(P_n \| P) = 0,
\end{align}
and
\begin{align}
\label{1811a2}
& \lim_{n \to \infty} \frac{D(P_n \| P)}{\chi^2(P_n \| P)}
= \tfrac12 \, \log \mathrm{e}, \\[0.1cm]
\label{1811a3}
& \lim_{n \to \infty} \frac{\chi^2(P_n \| P)}{\chi^2(P \| P_n)} = 1,
\end{align}
which therefore yields \eqref{locally tight}.

\item
{\em Proof of \eqref{LB-RE-chi2} and \eqref{29032020b1}}:
The proof of \eqref{LB-RE-chi2} relies on \eqref{identity4: KL} and the following lemma.
\begin{lemma}
For all $s, \theta \in (0,1)$,
\begin{align} \label{eq: ratio GV01}
\frac{D_{\phi_s}(P\|Q)}{D_{\phi_\theta}(P\|Q)}
\geq \min \biggl\{ \frac{1-\theta}{1-s}, \frac{\theta}{s} \biggr\}.
\end{align}
\end{lemma}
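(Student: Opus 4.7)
The plan is to prove the inequality pointwise at the level of the integrands and then integrate against $q\,\mathrm{d}\mu$. Concretely, since both $f$-divergences are defined through the function $\phi_s$ in \eqref{eq: f - GV01}, it suffices to establish that for every $t \geq 0$,
\begin{align*}
\phi_s(t) \geq \min\!\left\{\frac{1-\theta}{1-s}, \frac{\theta}{s}\right\} \phi_\theta(t),
\end{align*}
because integrating this against $q$ with $t := p/q$ and then dividing by $D_{\phi_\theta}(P\|Q)$ (which is positive unless $P=Q$, in which case the lemma is trivial) yields \eqref{eq: ratio GV01}.

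To verify the pointwise bound, I would form the ratio and note that the $(t-1)^2$ factors cancel:
\begin{align*}
g(t) := \frac{\phi_s(t)}{\phi_\theta(t)} = \frac{\theta + (1-\theta)t}{s + (1-s)t}, \qquad t \geq 0.
\end{align*}
This is a M\"obius-type function whose derivative is
\begin{align*}
g'(t) = \frac{s-\theta}{\bigl(s+(1-s)t\bigr)^2},
\end{align*}
so $g$ is monotonic in $t$, with boundary values $g(0) = \theta/s$ and $\lim_{t\to\infty} g(t) = (1-\theta)/(1-s)$.

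Because $g$ is monotonic and its only accumulation values on $[0,\infty]$ are $\theta/s$ and $(1-\theta)/(1-s)$, its infimum over $t \geq 0$ is exactly $\min\{\theta/s,\,(1-\theta)/(1-s)\}$ (the three cases $s<\theta$, $s=\theta$, $s>\theta$ are handled uniformly by monotonicity). Substituting back into the pointwise inequality and integrating completes the proof. I do not anticipate any genuine obstacle here: the whole argument reduces to recognizing the cancellation of $(t-1)^2$ in the ratio of $\phi_s$ and $\phi_\theta$, followed by the elementary monotonicity of a linear fractional function; the only small care needed is to note that when $D_{\phi_\theta}(P\|Q)=0$ we have $P=Q$ and both sides of \eqref{eq: ratio GV01} degenerate to the trivial $0/0$ case, which can be excluded from the statement as in the other bounds of Proposition~\ref{prop: f-div. ineq.}.
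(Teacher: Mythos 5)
Your proof is correct and is essentially the paper's own argument: the paper likewise bounds the ratio of the integrands pointwise, writing $\frac{(p-q)^2}{(1-s)p+sq} = \frac{(p-q)^2}{(1-\theta)p+\theta q}\cdot\frac{(1-\theta)p+\theta q}{(1-s)p+sq}$ and lower-bounding the second factor by $\min\bigl\{\tfrac{1-\theta}{1-s},\tfrac{\theta}{s}\bigr\}$, which is exactly your function $g(p/q)$. Your explicit verification of that bound via the monotonicity of the linear-fractional function (rather than the paper's implicit mediant-type inequality) is a minor, equally valid variation.
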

\begin{proof}
\begin{align}
D_{\phi_s}(P\|Q) &= \int \frac{(p-q)^2}{(1-s)p+sq} \; \mathrm{d}\mu \\
&= \int \frac{(p-q)^2}{(1-\theta)p+\theta q} \;
\frac{(1-\theta)p+\theta q}{(1-s)p+sq} \; \mathrm{d}\mu \\
&\geq \min \biggl\{ \frac{1-\theta}{1-s},
\frac{\theta}{s} \biggr\} \, \int \frac{(p-q)^2}{(1-\theta)p+\theta q} \; \; \mathrm{d}\mu \\
&= \min \biggl\{ \frac{1-\theta}{1-s}, \frac{\theta}{s} \biggr\} \; D_{\phi_\theta}(P\|Q).
\end{align}
\end{proof}

From \eqref{identity4: KL} and \eqref{eq: ratio GV01}, for all $\theta \in (0,1)$,
\begin{align}
\tfrac1{\log \mathrm{e}} \, D(P\| Q)
&= \int_0^\theta s D_{\phi_s}(P\|Q) \, \mathrm{d}s
+ \int_\theta^1  s D_{\phi_s}(P\|Q) \, \mathrm{d}s \\
&\geq \int_0^\theta \frac{s \, (1-\theta)}{1-s} \cdot D_{\phi_\theta}(P\|Q) \, \mathrm{d}s
+ \int_\theta^1 \theta \, D_{\phi_\theta}(P\|Q) \, \mathrm{d}s \\
&= \biggl[ -\theta + \log_{\mathrm{e}} \biggl(\frac1{1-\theta}\biggr) \biggr] \, (1-\theta) \, D_{\phi_\theta}(P\|Q)
+ \theta (1-\theta) \, D_{\phi_\theta}(P\|Q) \\
&= (1-\theta) \, \log_{\mathrm{e}} \biggl(\frac1{1-\theta}\biggr) \, D_{\phi_\theta}(P\|Q).
\end{align}
This proves \eqref{LB-RE-chi2}. Furthermore, under the assumption in \eqref{lim 1},
for all $\theta \in [0,1]$,
\begin{align}
\lim_{n \to \infty}  \frac{D(P\|P_n)}{D_{\phi_\theta}(P\|P_n)}
& = \lim_{n \to \infty}  \frac{D(P\|P_n)}{\chi^2(P\|P_n)}
\; \lim_{n \to \infty} \frac{\chi^2(P\|P_n)}{D_{\phi_\theta}(P\|P_n)} \\
&= \tfrac12 \log \mathrm{e} \cdot \frac2{\phi_{\theta}''(1)} \label{29032020b2} \\
&= \tfrac12 \, \log \mathrm{e}, \label{29032020b3}
\end{align}
where \eqref{29032020b2} holds due to \eqref{1811a2} and the local behavior of $f$-divergences
\cite{PardoV03}, and \eqref{29032020b3} holds due to \eqref{eq: f - GV01} which implies that
$\phi_{\theta}''(1)=2$ for all $\theta \in [0,1]$. This proves \eqref{29032020b1}.

\item
{\em Proof of \eqref{eq: ISSV16}}: From \eqref{identity2: KL}, we get
\begin{align}
\label{1611a1}
& \tfrac1{\log \mathrm{e}} \, D(P\| Q) \nonumber \\
&= \int_0^1 \chi^2(P \, \| \, (1-s)P + sQ) \; \frac{\mathrm{d}s}{s} \\[0.1cm]
\label{1611a2}
&= \int_0^1 \bigl[ \chi^2(P \, \| \, (1-s)P + sQ) - s^2 \, \chi^2(P \| Q) \bigr]
\; \frac{\mathrm{d}s}{s} + \int_0^1 s \, \mathrm{d}s \; \chi^2(P\|Q) \\[0.1cm]
\label{1611a3}
&= \int_0^1 \bigl[ \chi^2(P \, \| \, (1-s)P + sQ) - s^2 \, \chi^2(P \| Q) \bigr]
\; \frac{\mathrm{d}s}{s} + \tfrac12 \, \chi^2(P\|Q).
\end{align}
Referring to the integrand of the first term in the right side of \eqref{1611a3},
for all $s \in (0,1]$,
\begin{align}
& \frac1s \, \bigl[ \chi^2(P \, \| \, (1-s)P + sQ) - s^2 \, \chi^2(P \| Q) \bigr]
\nonumber \\[0.1cm]
\label{1611a4}
&= s \int (p-q)^2 \biggl[ \frac1{(1-s)p+sq} - \frac1{q} \biggr] \, \mathrm{d}\mu \\[0.1cm]
\label{1611a5}
&= s(1-s) \int \frac{(q-p)^3}{q \bigl[(1-s)p+sq \bigr]} \; \mathrm{d}\mu \\[0.1cm]
\label{1611a6}
&= s(1-s) \int |q-p| \cdot \underbrace{\frac{|q-p|}{q}
\cdot \frac{q-p}{p+s(q-p)}}_{\leq \frac1s \, 1\{q \geq p\}} \; \mathrm{d}\mu \\
\label{1611a7}
&\leq (1-s) \int (q-p) \, 1\{q \geq p\} \, \mathrm{d}\mu \\
\label{1611a8}
&= \tfrac12 (1-s) \, |P-Q|,
\end{align}
where the last equality holds since the equality $\int (q-p) \, \mathrm{d}\mu = 0$ implies that
\begin{align}
\label{TV identities 1}
& \int (q-p) \, 1\{q \geq p\} \, \mathrm{d}\mu = \int (p-q) \, 1\{p \geq q\}
\, \mathrm{d}\mu \\
\label{TV identities 2}
&= \tfrac12 \int |p-q| \, \mathrm{d}\mu = \tfrac12 \, |P-Q|.
\end{align}
From \eqref{1611a4}--\eqref{1611a8}, an upper bound on the right side of
\eqref{1611a3} results in. This gives
\begin{align}
\tfrac1{\log \mathrm{e}} \, D(P\| Q)
& \leq \tfrac12 \int_0^1 (1-s) \, \mathrm{d}s \; |P-Q|
+ \tfrac12 \, \chi^2(P\|Q) \label{1611a9} \\[0.1cm]
&= \tfrac14 \, |P-Q| + \tfrac12 \, \chi^2(P\|Q).  \label{1611a10}
\end{align}

It should be noted that \cite[Theorem~2~a)]{ISSV16} shows that inequality
\eqref{eq: ISSV16} is tight. To that end, let $\varepsilon \in (0,1)$, and
define probability measures $P_\varepsilon$ and $Q_\varepsilon$ on the set
$\set{A} = \{0,1\}$ with $P_\varepsilon(1) = \varepsilon^2$ and
$Q_\varepsilon(1)=\varepsilon$. Then,
\begin{align} \label{eq: from ISSV16}
& \lim_{\varepsilon \downarrow 0} \frac{\tfrac1{\log \mathrm{e}}
\, D(P_\varepsilon \| Q_\varepsilon)}{\tfrac14 \, |P_\varepsilon-Q_\varepsilon|
+ \tfrac12 \, \chi^2(P_\varepsilon \| Q_\varepsilon)} = 1.
\end{align}
\end{enumerate}

\subsection{Proof of Theorem~\ref{theorem: KL LB}}
\label{subsection: proof KL-LB}

We first prove Item~a) in Theorem~\ref{theorem: KL LB}. In view of the
Hammersley--Chapman--Robbins lower bound on the $\chi^2$ divergence, for
all $\lambda \in [0,1]$
\begin{align}
\label{2011a6}
\chi^2\bigl(P \| (1-\lambda)P + \lambda Q \bigr) \geq \frac{\bigl( \expectation[X]
- \expectation[Z_\lambda] \bigr)^2}{\Var(Z_\lambda)},
\end{align}
where $X \sim P$, $Y \sim Q$ and $Z_\lambda \sim R_\lambda := (1-\lambda)P + \lambda Q$
is defined by
\begin{align}
\label{2011a4}
Z_\lambda :=
\begin{dcases}
X, & \quad \mbox{with probability } \hspace*{0.15cm} 1-\lambda, \\
Y, & \quad \mbox{with probability} \hspace*{0.15cm} \lambda.
\end{dcases}
\end{align}
For $\lambda \in [0,1]$,
\begin{align}
\label{2011b1}
\expectation[Z_\lambda]=(1-\lambda)m_P + \lambda m_Q,
\end{align}
and it can be verified that
\begin{align}
\label{2011b7}
\Var(Z_\lambda)=(1-\lambda) \sigma_P^2 + \lambda\sigma_Q^2+\lambda(1-\lambda) (m_P-m_Q)^2.
\end{align}
We now rely on identity \eqref{identity2: KL}
\begin{align}
\label{2011a3}
\tfrac1{\log \mathrm{e}} \, D(P\| Q)
&= \int_0^1 \chi^2(P \| (1-\lambda) P + \lambda Q) \; \frac{\mathrm{d}\lambda}{\lambda}
\end{align}
to get a lower bound on the relative entropy.
Combining \eqref{2011a6}, \eqref{2011b7} and \eqref{2011a3} yields
\begin{align}
\label{2011b9}
\tfrac1{\log \mathrm{e}} \, D(P\| Q)
\geq  (m_P-m_Q)^2\int_0^1 \frac{\lambda}{(1-\lambda) \sigma_P^2
+ \lambda\sigma_Q^2+\lambda(1-\lambda) (m_P-m_Q)^2} \, \mathrm{d}\lambda.
\end{align}
From \eqref{a} and \eqref{b}, we get
\begin{align}
\label{2011c3}
\int_0^1 \frac{\lambda}{(1-\lambda) \sigma_P^2 +
\lambda\sigma_Q^2+\lambda(1-\lambda) (m_P-m_Q)^2} \, \mathrm{d}\lambda
=\int_0^1\frac{\lambda}{(\alpha-a\lambda)(\beta + a\lambda)} \, \mathrm{d}\lambda,
\end{align}
where
\begin{align}
\label{2011c4}
\alpha&:=\sqrt{\sigma_P^2+\frac{b^2}{4a^2}}+\frac{b}{2a}, \\[0.1cm]
\label{2011c5}
\beta&:=\sqrt{\sigma_P^2+\frac{b^2}{4a^2}}-\frac{b}{2a}.
\end{align}
By using the partial fraction decomposition of the integrand in the right side
of \eqref{2011c3}, we get
\begin{align}
\label{2011d6}
D(P\| Q)
&\geq \frac{(m_P-m_Q)^2}{a^2}\biggl[\frac{\alpha}{\alpha + \beta}
\log \biggl(\frac{\alpha}{\alpha - a}\biggr) +\frac{\beta}{\alpha + \beta}
\log \biggl(\frac{\beta}{\beta + a}\biggr)\biggr] \\[0.1cm]
\label{2011d7}
&=\frac{\alpha}{\alpha + \beta} \log \biggl(\frac{\alpha}{\alpha - a}\biggr)
+ \frac{\beta}{\alpha + \beta}\log \biggl(\frac{\beta}{\beta + a}\biggr) \\[0.1cm]
\label{2011d8}
&= d\biggl(\frac{\alpha}{\alpha + \beta} \, \bigl\| \, \frac{\alpha-a}{\alpha + \beta}\biggr),
\end{align}
where \eqref{2011d6} holds by integration since $\alpha-a\lambda$ and
$\beta+a\lambda$ are both non-negative for all $\lambda \in [0,1]$.
To verify the latter claim, it should be noted that \eqref{a}
and the assumption that $m_P \neq m_Q$ imply that $a \neq 0$.
Since $\alpha, \beta > 0$, it follows that for all $\lambda \in [0,1]$,
either $\alpha-a\lambda > 0$ or $\beta+a\lambda > 0$ (if $a<0$, then
the former is positive, and if $a>0$, then the latter is positive).
By comparing the denominators of both integrands in the left and right
sides of \eqref{2011c3}, it follows that
$(\alpha-a\lambda)(\beta + a \lambda) \geq 0$ for all $\lambda \in [0,1]$.
Since the product of $\alpha-a\lambda$ and $\beta+a\lambda$ is non-negative
and at least one of these terms is positive, it follows that $\alpha-a\lambda$
and $\beta+a\lambda$ are both non-negative for all $\lambda \in [0,1]$.
Finally, \eqref{2011d7} follows from \eqref{a}.

If $m_P-m_Q\rightarrow 0$ and $\sigma_P\neq\sigma_Q$, then it follows from
\eqref{a} and \eqref{b} that $a\rightarrow 0$ and $b \to \sigma_P^2 - \sigma_Q^2 \neq 0$.
Hence, from \eqref{2011c4} and \eqref{2011c5},
$\alpha \geq \left|\frac{b}{a}\right| \rightarrow \infty$ and $\beta \rightarrow 0$,
which implies that the lower bound on $D(P\| Q)$ in \eqref{2011d8} tends to zero.

Letting $r:=\frac{\alpha}{\alpha+\beta}$ and $s:=\frac{\alpha-a}{\alpha+\beta}$, we obtain
that the lower bound on $D(P\|Q)$ in \eqref{LB-KL} holds. This bound is consistent
with the expressions of $r$ and $s$ in \eqref{r} and \eqref{s} since from
\eqref{v}, \eqref{2011c4} and \eqref{2011c5},
\begin{align}
\label{2011e1}
r&=\frac{\alpha}{\alpha+\beta}=\frac{v+\frac{b}{2a}}{2v} =\frac{1}{2}+\frac{b}{4av},
\end{align}
\begin{align}
\label{2011e2}
s=\frac{\alpha-a}{\alpha+\beta}=r-\frac{a}{\alpha+\beta}=r-\frac{a}{2v}.
\end{align}
It should be noted that $r,s\in[0,1]$. First, from \eqref{2011c4} and
\eqref{2011c5}, $\alpha$ and $\beta$ are positive if $\sigma_P \neq 0$,
which yields $r=\frac{\alpha}{\alpha+\beta} \in (0,1)$.
We next show that $s \in [0,1]$. Recall that $\alpha-a\lambda$ and
$\beta+a\lambda$ are both non-negative for all $\lambda \in [0,1]$.
Setting $\lambda=1$ yields $\alpha \geq a$, which (from \eqref{2011e2})
implies that $s \geq 0$. Furthermore, from \eqref{2011e2} and the
positivity of $\alpha+\beta$, it follows that $s \leq 1$ if and only
if $\beta \geq -a$. The latter holds since $\beta + a \lambda \geq 0$
for all $\lambda \in [0,1]$ (in particular, for $\lambda = 1$).
If $\sigma_P=0$, then it follows from \eqref{r}--\eqref{v}
that $v = \frac{b}{2|a|}$, $b = a^2 + \sigma_Q^2$, and (recall that $a \neq 0$)
\begin{enumerate}[1)]
\item If $a>0$, then $v = \frac{b}{2a}$ implies that
$r=\frac12 + \frac{b}{4av} = 1$, and $s = r-\frac{a}{2v} = 1-\frac{a^2}{b}
= \frac{\sigma_Q^2}{\sigma_Q^2 + a^2} \in [0,1]$;
\item if $a<0$, then $v = -\frac{b}{2a}$ implies that $r=0$, and
$s = r-\frac{a}{2v} = \frac{a^2}{b} = \frac{a^2}{a^2+\sigma_Q^2} \in [0,1]$.
\end{enumerate}

We next prove Item~b) in Theorem~\ref{theorem: KL LB} (i.e., the
achievability of the lower bound in \eqref{LB-KL}). To that end,
we provide a technical lemma, which can be verified by the reader.
\begin{lemma}
\label{lemma: KL LB}
Let $r,s$ be given in \eqref{r}--\eqref{v}, and let $u_{1,2}$ be given in \eqref{u_1,2}.
Then,
\begin{align}
\label{2011_lem2}
& (s-r)(u_1-u_2)=m_Q-m_P, \\[0.1cm]
\label{2011_lem3}
& u_1+u_2=m_P+m_Q+\frac{\sigma_Q^2-\sigma_P^2}{m_Q-m_P}.
\end{align}
\end{lemma}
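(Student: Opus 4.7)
The plan is to verify both identities by direct substitution from the explicit formulas in \eqref{r}--\eqref{v} and \eqref{u_1,2}, after first collapsing $\sqrt{r(1-r)}$ into a clean expression in terms of $\sigma_P$ and $v$.

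First, I would write $s-r = -a/(2v)$ directly from \eqref{s}, and factor
$u_1-u_2 = \sigma_P\bigl(\sqrt{(1-r)/r} + \sqrt{r/(1-r)}\bigr) = \sigma_P/\sqrt{r(1-r)}$.
The key observation is the identity $r(1-r) = \sigma_P^2/(4v^2)$, which follows from $r-\tfrac12 = b/(4av)$ together with the defining relation $v^2 = \sigma_P^2 + b^2/(4a^2)$: indeed $r(1-r) = \tfrac14 - (r-\tfrac12)^2 = \tfrac14 - b^2/(16 a^2 v^2)$, and substituting $4a^2 v^2 - b^2 = 4a^2 \sigma_P^2$ into the numerator collapses this to $\sigma_P^2/(4v^2)$. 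Combining these pieces gives $(s-r)(u_1-u_2) = -\tfrac{a}{2v} \cdot \sigma_P/(\sigma_P/(2v)) = -a = m_Q - m_P$ by \eqref{a}, which proves \eqref{2011_lem2}.

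For \eqref{2011_lem3}, an analogous factorization yields $u_1+u_2 = 2 m_P + \sigma_P(1-2r)/\sqrt{r(1-r)}$. Using $1-2r = -b/(2av)$ and the same identity $\sqrt{r(1-r)} = \sigma_P/(2v)$, this reduces to $u_1 + u_2 = 2 m_P - b/a$. A final substitution of $b = a^2 + \sigma_Q^2 - \sigma_P^2$ and $a = m_P - m_Q$ from \eqref{a}--\eqref{b} rearranges this to the claimed expression $m_P + m_Q + (\sigma_Q^2 - \sigma_P^2)/(m_Q - m_P)$.

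There is no serious obstacle — the entire argument is algebraic — but the only non-trivial step is spotting the simplification $r(1-r) = \sigma_P^2/(4v^2)$; without it, the square roots in $u_1 \pm u_2$ remain opaque. Once that identity is in hand, both claims reduce to one-line manipulations, and sign issues in $a$ (which may be of either sign, since $m_P \neq m_Q$ with no a priori ordering) do not arise because both sides of each identity are manifestly consistent under $a \mapsto -a$.
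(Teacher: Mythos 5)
Your verification is correct, and it is exactly the direct algebraic check that the paper leaves to the reader (the lemma is stated with ``which can be verified by the reader'' and no proof is given); the crux you isolate, $r(1-r)=\sigma_P^2/(4v^2)$, equivalently $u_1-u_2=2v$ and $\sqrt{r(1-r)}=\sigma_P/(2v)$, is indeed what makes both identities collapse to $(s-r)(u_1-u_2)=-a$ and $u_1+u_2=2m_P-b/a$. The only implicit hypothesis is $\sigma_P>0$, which is what guarantees $r\in(0,1)$ so that the square roots in \eqref{u_1,2} are finite and nonzero; this is precisely the regime in which the lemma is invoked, and the degenerate case $\sigma_P=0$ is handled separately in the paper.
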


\vspace*{0.2cm}
Let $X\sim P$ and $Y\sim Q$ be defined on a set
$\set{U}=\{u_1,u_2\}$ (for the moment, the values of $u_1$
and $u_2$ are not yet specified) with $P[X=u_1]=r$,
$P[X=u_2]=1-r$, $Q[Y=u_1]=s$, and $Q[Y=u_2]=1-s$. We
now calculate $u_1$ and $u_2$ such that
$\expectation[X]=m_P$ and $\Var(X)=\sigma_P^2$.
This is equivalent to
\begin{align}
\label{2011f2}
& ru_1 +(1-r)u_2=m_P, \\
\label{2011f3}
& ru_1^2 +(1-r)u_2^2=m_P^2+\sigma_P^2.
\end{align}
Substituting \eqref{2011f2} into the right side of \eqref{2011f3} gives
\begin{align}
\label{2011f4}
ru_1^2 +(1-r)u_2^2=\bigl[ru_1 +(1-r)u_2\bigr]^2+\sigma_P^2,
\end{align}
which, by rearranging terms, also gives
\begin{align}
\label{2011f6}
u_1-u_2 = \pm \sqrt{\frac{\sigma_P^2}{r(1-r)}}.
\end{align}
Solving simultaneously \eqref{2011f2} and \eqref{2011f6} gives
\begin{align}
\label{2011g1}
& u_1=m_P \pm \sqrt{\frac{(1-r)\sigma_P^2}{r}}, \\
\label{2011g2}
& u_2=m_P \mp \sqrt{\frac{r\sigma_P^2}{1-r}}.
\end{align}
We next verify that by setting $u_{1,2}$ as in \eqref{u_1,2},
one also gets (as desired) that $\expectation[Y]=m_Q$ and $\Var(Y)=\sigma_Q^2$.
From Lemma ~\ref{lemma: KL LB}, and from \eqref{2011f2} and \eqref{2011f3}, we have
\begin{align}
\label{2011h1}
\expectation[Y]
&=su_1+(1-s)u_2 \\[0.1cm]
\label{2011h2}
&=\bigl(ru_1+(1-r)u_2\bigr)+(s-r)(u_1-u_2) \\[0.1cm]
\label{2011h3}
&=m_P+(s-r)(u_1-u_2)=m_Q, \\[0.2cm]
\label{2011h4}
\expectation[Y^2]
&=su_1^2+(1-s)u_2^2 \\[0.1cm]
\label{2011h5}
&=ru_1^2+(1-r)u_2^2+(s-r)(u_1^2-u_2^2) \\[0.1cm]
\label{2011h6}
&=\expectation[X^2]+(s-r)(u_1-u_2)(u_1+u_2) \\[0.1cm]
\label{2011h7}
&=m_P^2+\sigma_P^2+(m_Q-m_P)\biggl(m_P+m_Q+\frac{\sigma_Q^2-\sigma_P^2}{m_Q-m_P}\biggr) \\[0.1cm]
\label{2011h9}
&=m_Q^2+\sigma_Q^2.
\end{align}
By combining \eqref{2011h3} and \eqref{2011h9}, we obtain $\Var(Y)=\sigma_Q^2$.
Hence, the probability mass functions $P$ and $Q$ defined on $\set{U}=\{u_1,u_2\}$
(with $u_1$ and $u_2$ in \eqref{u_1,2}) such that
\begin{align}
\label{2011i1}
P(u_1)=1-P(u_2)=r, \quad
Q(u_1)=1-Q(u_2)=s
\end{align}
satisfy the equality constraints in \eqref{eq: 1st and 2nd moments}, while also
achieving the lower bound on $D(P\|Q)$ that is equal to $d(r\|s)$.
It can be also verified that the  second option where
\begin{align}
\label{2011i3}
u_1=m_P-\sqrt{\frac{(1-r)\sigma_P^2}{r}}, \quad u_2=m_P+\sqrt{\frac{r\sigma_P^2}{1-r}}
\end{align}
does {\em not} yield the satisfiability of the conditions $\expectation[Y]=m_Q$ and
$\Var(Y)=\sigma_Q^2$, so there is only a unique pair of probability measures $P$
and $Q$, defined on a two-element set that achieves the lower bound in \eqref{LB-KL}
under the equality constraints in \eqref{eq: 1st and 2nd moments}.

We finally prove Item~c) in Theorem~\ref{theorem: KL LB}.
Let $m \in \Reals, \sigma_P^2$ and $\sigma_Q^2$ be selected arbitrarily such that $\sigma_Q^2 \geq \sigma_P^2$.
We construct probability measures $P_\varepsilon$ and $Q_\varepsilon$, depending on
a free parameter $\varepsilon$, with means
$m_P = m_Q := m$ and variances $\sigma_P^2$ and $\sigma_Q^2$, respectively (means
and variances are independent of $\varepsilon$),
and which are defined on a three-element set $\set{U} := \{u_1, u_2, u_3\}$ as follows:
\begin{align}
\label{20200502a2}
& P_\varepsilon(u_1) = r, \quad \, P_\varepsilon(u_2) = 1-r, \quad \quad \hspace*{0.3cm} P_\varepsilon(u_3) = 0, \\
\label{20200502a3}
& Q_\varepsilon(u_1) = s, \quad Q_\varepsilon(u_2) = 1-s-\varepsilon, \quad Q_\varepsilon(u_3) = \varepsilon,
\end{align}
with $\varepsilon > 0$. We aim to set the parameters $r, s, u_1, u_2$ and $u_3$ (as a function of
$m, \sigma_P, \sigma_Q$ and $\varepsilon$) such that
\begin{align} \label{20200502a4}
\lim_{\varepsilon \to 0^+} \, D(P_\varepsilon \| Q_\varepsilon) = 0.
\end{align}
Proving \eqref{20200502a4} yields \eqref{20200502a1}, while it also follows that the infimum in the left side of
\eqref{20200502a1} can be restricted to probability measures which are defined on a three-element set.

In view of the constraints on the means and variances in \eqref{eq: 1st and 2nd moments}, with
equal means $m$, we get the following set of equations from \eqref{20200502a2} and \eqref{20200502a3}:
\begin{align}
\label{20200502a5}
\begin{cases}
& r u_1 + (1-r) u_2 = m, \\
& s u_1 + (1-s-\varepsilon) u_2 + \varepsilon u_3 = m, \\
& r u_1^2 + (1-r) u_2^2 = m^2 + \sigma_P^2, \\
& s u_1^2 + (1-s-\varepsilon) u_2^2 + \varepsilon u_3^2 = m^2 + \sigma_Q^2.
\end{cases}
\end{align}
The first and second equations in \eqref{20200502a5} refer to the equal means
under $P$ and $Q$, and the third and fourth equations in \eqref{20200502a5} refer
to the second moments in \eqref{eq: 1st and 2nd moments}. Furthermore, in view
of \eqref{20200502a2} and \eqref{20200502a3}, the relative entropy is given by
\begin{align}
\label{20200502a6}
D(P_\varepsilon \| Q_\varepsilon) = r \log \frac{r}{s} + (1-r) \log \frac{1-r}{1-s-\varepsilon}.
\end{align}
Subtracting the square of the first equation in \eqref{20200502a5} from its third equation
gives the equivalent set of equations
\begin{align}
\label{20200502a7}
\begin{cases}
& r u_1 + (1-r) u_2 = m, \\
& s u_1 + (1-s-\varepsilon) u_2 + \varepsilon u_3 = m, \\
& r (1-r) (u_1 - u_2)^2 = \sigma_P^2, \\
& s u_1^2 + (1-s-\varepsilon) u_2^2 + \varepsilon u_3^2 = m^2 + \sigma_Q^2.
\end{cases}
\end{align}
We next select $u_1$ and $u_2$ such that $u_1 - u_2 := 2 \sigma_P$.
Then, the third equation in \eqref{20200502a7} gives $r(1-r) = \tfrac{1}{4}$, so
$r = \frac12$. Furthermore, the first equation in \eqref{20200502a7} gives
\begin{align}
\label{20200502a10}
& u_1 = m + \sigma_P, \\
\label{20200502a11}
& u_2 = m - \sigma_P.
\end{align}
Since $r$, $u_1$ and $u_2$ are independent of $\varepsilon$, so is the probability measure $P_\varepsilon := P$.
Combining the second equation in \eqref{20200502a7} with \eqref{20200502a10} and \eqref{20200502a11} gives
\begin{align}
\label{20200502a13}
u_3 = m - \biggl(1 + \frac{2s-1}{\varepsilon} \biggr) \sigma_P.
\end{align}
Substituting \eqref{20200502a10}--\eqref{20200502a13} into the fourth equation
of \eqref{20200502a7} gives a quadratic equation for $s$, whose selected solution
(such that $s$ and $r = \tfrac12$ be close for small $\epsilon>0$) is equal to
\begin{align}
\label{20200502a14}
s = \tfrac12 \left[1-\varepsilon + \sqrt{\left(\frac{\sigma_Q^2}{\sigma_P^2}-1+\varepsilon \right) \varepsilon} \, \right].
\end{align}
Hence, $s = \tfrac12 + O(\sqrt{\varepsilon})$, which implies that $s \in (0, 1-\varepsilon)$ for sufficiently small
$\varepsilon > 0$ (as it is required in \eqref{20200502a3}). In view of \eqref{20200502a6}, it also follows that
$D(P \| Q_\varepsilon)$ vanishes as we let $\varepsilon$ tend to zero.

We finally outline an alternative proof, which refers
to the case of equal means with arbitrarily selected $\sigma_P^2$ and $\sigma_Q^2$.
Let $(\sigma_P^2,\sigma_Q^2) \in (0,\infty)^2$. We next construct a sequence of pairs of
probability measures $\{(P_n,Q_n)\} $ with zero mean and respective variances
$(\sigma_P^2,\sigma_Q^2)$ for which $D(P_n \| Q_n ) \to 0$ as $n \to \infty$ (without any loss
of generality, one can assume that the equal means are equal to zero). We start by assuming
$(\sigma_P^2,\sigma_Q^2) \in (1,\infty)^2$. Let
\begin{align}
\mu_n :=  \sqrt{1 + n \bigl(\sigma_Q^2 - 1\bigr)},
\end{align}
and define a sequence of quaternary real-valued random variables with probability mass functions
\begin{align}\label{4Q_n}
Q_n (a) :=
\begin{dcases}
\tfrac12 - \tfrac{1}{2n} & a = \pm 1,\\
\tfrac1{2n} & a=  \pm \mu_n.
\end{dcases}
\end{align}
It can be verified that, for all $n \in \naturals$, $Q_n$ has zero mean and variance $\sigma_Q^2$. Furthermore, let
\begin{align}
P_n (a) :=
\begin{dcases}
\tfrac12 - \tfrac{\xi}{2n} & a = \pm 1,\\
\tfrac{\xi}{2n} & a=  \pm \mu_n,
\end{dcases}
\end{align}
with
\begin{align}
\xi := \frac{\sigma_P^2-1}{\sigma_Q^2-1} .
\end{align}
If $\xi >1$, for $n =1 , \ldots, \lceil\xi \rceil$, we choose $P_n$ arbitrarily
with mean 0 and variance $\sigma_P^2$. Then,
\begin{align}
& \Var(P_n) = 1 - \tfrac{\xi}{n} + \tfrac{\xi}{n} \mu^2_n = \sigma_P^2, \\
& D(P_n\| Q_n) = d \left( \frac{\xi}{n} \bigg\| \frac{1}{n} \right) \to 0.
\end{align}
Next suppose
$\min \{ \sigma_P^2,\sigma_Q^2 \} := \sigma^2 < 1$,
then
construct $P^\prime_n$ and $Q^\prime_n$ as
before with variances
$\frac{2 \sigma_P^2}{\sigma^2} > 1$ and
$\frac{2 \sigma_Q^2}{\sigma^2}>1$, respectively.
If $P_n$ and $Q_n$ denote
the random variables $P^\prime_n$ and $Q^\prime_n$
scaled by a factor of $\frac{\sigma}{\sqrt{2}}$, then
their variances are $\sigma_P^2$, $\sigma_Q^2$,
respectively, and $D(P_n\|Q_n) = D(P^\prime_n\|Q^\prime_n) \to 0$
as we let $n \to \infty$.

To conclude, it should be noted that the sequences of probability measures in the
latter proof are defined on a four-element set. Recall that in the earlier proof,
specialized to the case of (equal means with) $\sigma_P^2 \leq \sigma_Q^2$, the introduced
probability measures are defined on a three-element set, and the reference probability
measure $P$ is fixed while referring to an equiprobable binary random variable.

\subsection{Proof of Theorem~\ref{thm: F}}
\label{subsection: proof of thm: F}

We first prove \eqref{diff F 2a}. Differentiating
both sides of \eqref{identity: KL} gives that, for all $\lambda \in (0,1]$,
\begin{align}
\label{1711a1}
F'(\lambda) &= \frac1{\lambda} \; \chi^2\bigl(P \| R_\lambda \bigr)
\, \log \mathrm{e} \\[0.1cm]
\label{1711a2}
&\geq \frac1{\lambda} \Bigl[ \exp\bigl( D(P \| R_\lambda) \bigr) - 1 \Bigr]
\, \log \mathrm{e} \\[0.1cm]
\label{1711a3}
&= \frac1{\lambda} \Bigl[ \exp\bigl(F(\lambda)\bigr) - 1 \Bigr]
\, \log \mathrm{e},
\end{align}
where \eqref{1711a1} holds due to \eqref{mixture PMF}, \eqref{identity: KL}
and \eqref{def:F}; \eqref{1711a2} holds by \eqref{RD2-chi2}, and \eqref{1711a3}
is due to \eqref{mixture PMF} and \eqref{def:F}. This gives \eqref{diff F 2a}.

We next prove \eqref{diff F 2b}, and the conclusion which appears after it.
In view of \cite[Theorem~8]{Sason18}, applied to $f(t) := -\log t$ for all $t>0$,
we get (it should be noted that, by the definition of $F$ in \eqref{def:F}, the result
in \cite[(195)--(196)]{Sason18} is used here by swapping $P$ and $Q$)
\begin{align}
\label{20200421a1}
\lim_{\lambda \to 0^+} \frac{F(\lambda)}{\lambda^2}
= \tfrac12 \, \chi^2(Q\|P) \, \log \mathrm{e}.
\end{align}
Since $\underset{\lambda \to 0^+}{\lim} F(\lambda) = 0$, it follows by
L'H\^{o}pital's rule that
\begin{align}
\lim_{\lambda \to 0^+} \frac{F'(\lambda)}{\lambda}
= 2 \lim_{\lambda \to 0^+} \frac{F(\lambda)}{\lambda^2}
= \chi^2(Q\|P) \, \log \mathrm{e},
\end{align}
which gives \eqref{diff F 2b}.
A comparison of the limit in \eqref{diff F 2b} with a lower bound which
follows from \eqref{diff F 2a} gives
\begin{align}
\lim_{\lambda \to 0^+} \frac{F'(\lambda)}{\lambda} &\geq
\lim_{\lambda \to 0^+} \frac1{\lambda^2} \Bigl[ \exp\bigl(F(\lambda)\bigr) - 1 \Bigr]
\, \log \mathrm{e} \\
&= \lim_{\lambda \to 0^+} \frac{F(\lambda)}{\lambda^2} \,
\lim_{\lambda \to 0^+} \frac{\exp\bigl(F(\lambda)\bigr) - 1}{F(\lambda)} \cdot \log \mathrm{e} \\
&= \lim_{\lambda \to 0^+} \frac{F(\lambda)}{\lambda^2} \, \lim_{u \to 0} \frac{\mathrm{e}^u-1}{u} \\
\label{LB limit}
&= \tfrac12 \, \chi^2(Q\|P) \, \log \mathrm{e},
\end{align}
where \eqref{LB limit} relies on \eqref{20200421a1}.
Hence, the limit in \eqref{diff F 2b} is twice larger than its lower bound in the
right side of \eqref{LB limit}. This proves the conclusion which comes right after \eqref{diff F 2b}.

We finally prove \eqref{UB on F} based on \eqref{diff F 2a}. The function
$F$ is non-negative on $[0,1]$, and it is strictly positive on $(0,1]$
if $P \neq Q$. Let $P \neq Q$ (otherwise, \eqref{UB on F} is trivial).
Rearranging terms in \eqref{diff F 2a} and integrating both sides over
the interval $[\lambda, 1]$, for $\lambda \in (0,1]$, gives that
\begin{align}
\label{1711b1}
\int_\lambda^1 \frac{F'(t)}{\exp\bigl(F(t)\bigr)-1} \; \mathrm{d}t
&\geq \int_\lambda^1 \frac{\mathrm{d}t}{t} \; \log \mathrm{e} \\
\label{1711b2}
&= \log \frac{1}{\lambda}, \quad \forall \, \lambda \in (0,1].
\end{align}
The left side of \eqref{1711b1} satisfies
\begin{align}
\label{1711b3}
\int_\lambda^1 \frac{F'(t)}{\exp\bigl(F(t)\bigr)-1} \; \mathrm{d}t
&= \int_\lambda^1 \frac{F'(t) \,
\exp\bigl(-F(t)\bigr)}{1-\exp\bigl(-F(t)\bigr)} \; \mathrm{d}t \\[0.1cm]
\label{1711b4}
&= \int_\lambda^1  \frac{\mathrm{d}}{\mathrm{d}t} \Bigl\{ \log
\Bigl(1-\exp\bigl(-F(t)\bigr) \Bigr) \Bigr\} \; \mathrm{d}t \\[0.1cm]
\label{1711b6}
&= \log \Biggl( \frac{1-\exp\bigl(-D(P\|Q)\bigr)}{1-
\exp\bigl(-F(\lambda)\bigr)} \Biggr),
\end{align}
where \eqref{1711b6} holds since $F(1) = D(P\|Q)$ (see \eqref{def:F}).
Combining \eqref{1711b1}--\eqref{1711b6} gives
\begin{align}
\label{1711b7}
\frac{1-\exp\bigl(-D(P\|Q)\bigr)}{1-\exp\bigl(-F(\lambda)\bigr)}
\geq \frac{1}{\lambda}, \quad \forall \, \lambda \in (0,1],
\end{align}
which, due to the non-negativity of $F$, gives the right side inequality
in \eqref{UB on F} after rearrangement of terms in \eqref{1711b7}.

\subsection{Proof of Theorem~\ref{theorem: monotonic_f_divergence_seq}}
\label{subsection: proof monotonic_f_divergence_seq}

\begin{lemma}
\label{lemma: integral_convex_function}
Let $f_0 \colon (0, \infty) \rightarrow \Reals$ be a convex function with $f_0(1)=0$,
and let $\{ f_k(\cdot) \}_{k=0}^\infty$ be defined as in \eqref{eq:recursion f_k}.
Then, $\{f_k(\cdot)\}_{k=0}^{\infty}$ is a sequence of convex functions on $(0, \infty)$, and
\begin{align} \label{eq:inequality_f}
f_k(x) \geq f_{k+1}(x), \quad \forall \, x > 0, \; \; k \in \{0, 1, \ldots\}.
\end{align}
\end{lemma}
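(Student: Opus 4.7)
The plan is to prove both assertions of the lemma by a single induction on $k$, carrying the joint invariant that $f_k$ is convex on $(0,\infty)$ and $f_k(1)=0$. The base case is given. For the induction step, I would first recast the recursion \eqref{eq:recursion f_k} via the substitution $u := 1-s$ into
\begin{align*}
f_{k+1}(x) = \int_x^1 \frac{f_k(u)}{1-u}\,\mathrm{d}u, \qquad 0 < x \leq 1,
\end{align*}
with the analogous signed version $\int_1^x \frac{f_k(u)}{u-1}\,\mathrm{d}u$ for $x > 1$. From this representation $f_{k+1}(1)=0$ is immediate, and differentiating under the integral sign yields the key identity
\begin{align*}
(x-1)\,f_{k+1}'(x) = f_k(x), \qquad x \neq 1.
\end{align*}
Local integrability of $f_k(u)/(1-u)$ near $u=1$ is automatic, since the convex function $f_k$ vanishes at $1$ and admits finite one-sided derivatives there, making the integrand bounded in a neighborhood of $1$.

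Next I would deduce convexity of $f_{k+1}$ from the inductive hypothesis on $f_k$. Because $f_k$ is convex with $f_k(1)=0$, the slope function $u \mapsto \frac{f_k(u)-f_k(1)}{u-1} = \frac{f_k(u)}{u-1}$ is non-decreasing on $(0,\infty)\setminus\{1\}$ and extends continuously through $u=1$ by the one-sided derivatives of $f_k$ there. The displayed identity above identifies this non-decreasing function with $f_{k+1}'$, so $f_{k+1}'$ is non-decreasing and hence $f_{k+1}$ is convex. Together with $f_{k+1}(1)=0$, this closes the joint inductive invariant.

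To obtain the pointwise inequality \eqref{eq:inequality_f}, I would apply the support-line inequality to the convex function $f_{k+1}$ at the point $x$, evaluated at the target point $1$:
\begin{align*}
0 = f_{k+1}(1) \geq f_{k+1}(x) + f_{k+1}'(x)(1-x),
\end{align*}
which rearranges to $(x-1)f_{k+1}'(x) \geq f_{k+1}(x)$. By the key identity from the first step, the left-hand side is exactly $f_k(x)$, yielding $f_k(x) \geq f_{k+1}(x)$ as required. The only place I expect friction is the bookkeeping at $x=1$: the differentiation identity is stated away from $1$, and the tangent inequality needs left/right derivatives at $x$; both issues are absorbed by the monotonicity/continuous-extension argument already used in the convexity step, so no separate analysis of the $x=1$ case will be required.
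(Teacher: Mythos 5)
Your proof is correct, and its core — the identity $f_{k+1}'(x)=\frac{f_k(x)}{x-1}$ obtained by differentiating the recursion, followed by the support-line inequality for $f_{k+1}$ at $x$ evaluated at the point $1$ — is exactly the paper's argument for \eqref{eq:inequality_f} (the paper writes $f_{k+1}(y)\geq f_{k+1}(x)+f_{k+1}'(x)(y-x)$ and sets $y=1$). Where you genuinely diverge is the convexity step. The paper substitutes $s:=(1-x)s'$ to get $f_{k+1}(x)=\int_0^1 f_k(s'x-s'+1)\,\frac{\mathrm{d}s'}{s'}$, exhibiting $f_{k+1}$ as a positively weighted integral mixture of the convex functions $x\mapsto f_k(s'x-s'+1)$, and then checks the convexity inequality directly under the integral sign. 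You instead observe that $f_{k+1}'$ coincides with the slope function $u\mapsto\frac{f_k(u)-f_k(1)}{u-1}$, which is non-decreasing by convexity of $f_k$, so $f_{k+1}$ is an antiderivative of a non-decreasing function and hence convex. Your route is arguably slicker and makes the convexity and the monotonicity share the same key identity; the paper's route avoids any discussion of differentiability or of the one-sided limits of the slope function at $u=1$ (your phrase ``extends continuously through $u=1$'' is slightly optimistic when $f_k$ has a kink at $1$, though the inequality $f_k'(1^-)\leq f_k'(1^+)$ means monotonicity of $f_{k+1}'$ survives across the point and convexity is unaffected). Both arguments are complete; no gap.
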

\begin{proof}
We prove the convexity of $\{f_k(\cdot)\}$ on $(0, \infty)$ by induction.
Suppose that $f_k(\cdot)$ is a convex function with $f_k(1)=0$ for a fixed integer $k \geq 0$.
The recursion in \eqref{eq:recursion f_k} yields $f_{k+1}(1)=0$ and, by the change
of integration variable $s :=(1-x)s'$,
\begin{align}
\label{2012a1}
f_{k+1}(x) &= \int_0^1 f_k(s'x-s'+1) \, \frac{\mathrm{d}s'}{s'}, \quad x>0.
\end{align}
Consequently, for $t\in (0,1)$ and $x\neq y$ with $x,y>0$, applying \eqref{2012a1} gives
\begin{align}
f_{k+1}((1-t)x+ty)&=\int_0^{1} f_k\bigl(s' [(1-t)x+ty]-s'+1\bigr)
\, \frac{\mathrm{d}s'}{s'} \label{eq:20200409a1} \\[0.1cm]
&=\int_0^1 f_k\bigl((1-t)(s'x-s'+1)+t(s'y-s'+1)\bigr) \,
\frac{\mathrm{d}s'}{s'} \label{eq:20200409a2} \\[0.1cm]
&\leq (1-t)\int_0^1 f_k(s'x-s'+1) \, \frac{\mathrm{d}s'}{s'}
+ t\int_0^1 f_k(s'y-s'+1) \, \frac{\mathrm{d}s'}{s'} \label{eq:20200409a3} \\[0.1cm]
&=(1-t)f_{k+1}(x)+tf_{k+1}(y),  \label{eq:20200409a4}
\end{align}
where \eqref{eq:20200409a3} holds since $f_k(\cdot)$ is convex on $(0, \infty)$
(by assumption). Hence, from \eqref{eq:20200409a1}--\eqref{eq:20200409a4},
$f_{k+1}(\cdot)$ is also convex on $(0, \infty)$ with $f_{k+1}(1)=0$.
By induction and our assumptions on $f_0$, it follows that
$\{f_k(\cdot)\}_{k=0}^{\infty}$ is a sequence of convex functions on $(0, \infty)$
which vanish at~1.

We next prove \eqref{eq:inequality_f}. For all $x, y > 0$ and $k \in \{0, 1, \ldots\}$,
\begin{align}
\label{eq:20200409a5}
f_{k+1}(y) &\geq f_{k+1}(x)+f'_{k+1}(x) \, (y-x) \\[0.1cm]
\label{eq:20200409a6}
&=f_{k+1}(x)+\frac{f_{k}(x)}{x-1} \; (y-x),
\end{align}
where \eqref{eq:20200409a5} holds since $f_k(\cdot)$ is convex on $(0, \infty)$, and
\eqref{eq:20200409a6} relies on the recursive equation in \eqref{eq:recursion f_k}.
Substituting $y=1$ into \eqref{eq:20200409a5}--\eqref{eq:20200409a6} and using the
equality $f_{k+1}(1)=0$ gives \eqref{eq:inequality_f}.
\end{proof}

We next prove Theorem~\ref{theorem: monotonic_f_divergence_seq}.
From Lemma~\ref{lemma: integral_convex_function}, it follows that $D_{f_k}(P\|Q)$
is an $f$-divergence for all integers $k\geq 0$, and the non-negative sequence
$\bigl\{D_{f_k}(P\|Q)\}_{k=0}^{\infty}$ is monotonically non-increasing. From
\eqref{mixture PMF} and \eqref{eq:recursion f_k}, it also follows that for all
$\lambda \in [0,1]$ and integer $k \in \{0, 1, \ldots\}$,
\begin{align}
D_{f_{k+1}}(R_\lambda\| P) &= \int p \, f_{k+1} \Bigl(\frac{r_\lambda}{p}\Bigr)
\; \mathrm{d}\mu \label{eq:20200409a7} \\[0.1cm]
&= \int p\int_0^{(p-q)\lambda/p} f_{k} (1-s) \; \frac{\mathrm{d}s}{s}
\; \mathrm{d}\mu \label{eq:20200409a8} \\[0.1cm]
&= \int p\int_0^{\lambda} f_{k} \Bigl(1+\frac{(q-p)s'}{p}\Bigr)
\, \frac{\mathrm{d}s'}{s'} \; \mathrm{d}\mu \label{eq:20200409a9} \\[0.1cm]
&= \int_0^{\lambda} \int p f_{k} \Bigl(\frac{r_{s'}}{p}\Bigr) \; \mathrm{d}\mu
\; \frac{\mathrm{d}s'}{s'} \label{eq:20200409a10} \\[0.1cm]
&=\int_0^{\lambda} D_{f_k}(R_{s'}\|P) \; \frac{\mathrm{d}s'}{s'},
\end{align}
where the substitution $s:=\frac{(p-q)s'}{p}$ is invoked in \eqref{eq:20200409a9},
and then \eqref{eq:20200409a10} holds by the equality
$\frac{r_{s'}}{p}=1+\frac{(q-p) \, s'}{p}$ for $s' \in [0,1]$ (this follows from
\eqref{mixture PMF}) and by interchanging the order of the integrations.

\subsection{Proof of Corollary~\ref{corollary: polylogarithm}}
\label{proof: corollary-polylogarithm}

Combining \eqref{eq:polylog} and \eqref{eq: def f_k} yields \eqref{eq:recursion f_k};
furthermore, $f_0 \colon (0, \infty) \to \Reals$, given by $f_0(x) = \frac{1}{x}-1$
for all $x>0$, is convex on $(0, \infty)$ with $f_0(1)=0$. Hence,
Theorem~\ref{theorem: monotonic_f_divergence_seq} holds for the selected functions
$\{f_k(\cdot)\}_{k=0}^{\infty}$ in \eqref{eq: def f_k}, which therefore are all convex
on $(0, \infty)$ and vanish at~1. This proves that \eqref{eq:fD_integral} holds for
all $\lambda \in[0,1]$ and $k \in \{0, 1, \ldots\}$. Since $f_0(x) = \frac{1}{x}-1$
and $f_1(x) = -\log_{\mathrm{e}}(x)$ for all $x>0$ (see \eqref{eq:polylog} and
\eqref{eq: def f_k}), then for every pair of probability measures $P$ and $Q$:
\begin{align} \label{reverse KL and chi^2}
D_{f_0}(P\|Q) = \chi^2(Q\|P), \quad D_{f_1}(P\|Q) = \tfrac1{\log \mathrm{e}} \; D(Q\|P).
\end{align}
Finally, combining \eqref{eq:fD_integral}, for $k=0$, together with \eqref{reverse KL and chi^2}
gives \eqref{identity: KL} as a special case.

\subsection{Proof of Theorem~\ref{thm: gen. Csiszar} and Corollary~\ref{corollary: gen. Csiszar}}
\label{subsection: proof of Csiszar84 extended}

For an arbitrary measurable set $\set{E} \subseteq \set{X}$, we have from \eqref{cond. PM}
\begin{align}
\mu_{\set{C}}(\set{E}) = \int_{\set{E}} \frac{1_{\set{C}}(x)}{\mu(\set{C})} \; \mathrm{d}\mu(x),
\end{align}
where $1_{\set{C}} \colon \set{X} \to \{0,1\}$ is the indicator function of the set
$\set{C} \subseteq \set{X}$, i.e., $1_{\set{C}}(x) := 1\{x \in \set{C}\}$ for all $x \in \set{X}$.
Hence,
\begin{align}
\frac{\mathrm{d}\mu_{\set{C}}}{\mathrm{d}\mu} \, (x) = \frac{1_{\set{C}}(x)}{\mu(\set{C})},
\quad \forall \, x \in \set{X},
\end{align}
and
\begin{align}
D(\mu_{\set{C}} \| \mu)
&= \int_{\set{X}} f\Bigl(\frac{\mathrm{d}\mu_{\set{C}}}{\mathrm{d}\mu}\Bigr) \; \mathrm{d}\mu \\[0.1cm]
&= \int_{\set{C}} f\biggl(\frac1{\mu(\set{C})}\biggr) \; \mathrm{d}\mu(x)
+ \int_{\set{X} \setminus \set{C}} f(0) \; \mathrm{d}\mu(x) \\[0.1cm]
&= \mu(\set{C}) \; f\biggl(\frac1{\mu(\set{C})}\biggr)
+ \mu(\set{X} \setminus \set{C}) \; f(0) \\[0.1cm]
&= \widetilde{f}\bigl(\mu(\set{C})\bigr) + (1-\mu(\set{C})) \, f(0),
\end{align}
where the last equality holds by the definition of $\widetilde{f}$ in \eqref{f conjugate}.
This proves Theorem~\ref{thm: gen. Csiszar}. Corollary~\ref{corollary: gen. Csiszar} is next
proved by first proving \eqref{eq: Csiszar84c} for the R\'{e}nyi divergence.
For all $\alpha \in (0,1) \cup (1, \infty)$,
\begin{align}
D_{\alpha}\bigl( \mu_{\set{C}} \| \mu \bigr) &=
\frac1{\alpha-1} \; \log \int_{\set{X}} \biggl( \frac{\mathrm{d}\mu_{\set{C}}}{\mathrm{d}\mu}
\biggr)^\alpha \, \mathrm{d}\mu \\[0.1cm]
&= \frac1{\alpha-1} \; \log \int_{\set{C}} \biggl( \frac1{\mu(\set{C})} \biggr)^\alpha
\, \mathrm{d}\mu \\[0.1cm]
&= \frac1{\alpha-1} \; \log \biggl( \biggl( \frac1{\mu(\set{C})} \biggr)^\alpha
\, \mu(\set{C}) \biggr) \\[0.1cm]
&= \log \frac1{\mu(\set{C})}.
\end{align}
The justification of \eqref{eq: Csiszar84c} for $\alpha=1$ is due to the continuous
extension of the order-$\alpha$ R\'{e}nyi divergence at $\alpha=1$, which gives the
relative entropy (see \eqref{def:d1}). Equality \eqref{eq: Csiszar84a} is obtained
from \eqref{eq: Csiszar84c} at $\alpha=1$. Finally, \eqref{eq: Csiszar84b} is obtained
by combining \eqref{RD2} and \eqref{eq: Csiszar84c} with $\alpha = 2$.

\subsection{Proof of Theorem~\ref{thm: contraction coefficients}}
\label{subsection: proof of thm: contraction coefficients}

Eq.~\eqref{eq: 20200412c1} is an equivalent form of \eqref{identity3: KL}.
From \eqref{S-K div.} and \eqref{eq: 20200412c1}, for all
$\alpha \in [0,1]$
\begin{align}
\tfrac1{\log \mathrm{e}} \, S_\alpha(P\|Q)
&= \alpha \; \tfrac1{\log \mathrm{e}} \, K_\alpha(P\|Q)
+ (1-\alpha) \; \tfrac1{\log \mathrm{e}} \, K_{1-\alpha}(Q \| P) \\
&= \alpha \int_0^\alpha s D_{\phi_s}(P\|Q) \, \mathrm{d}s +
(1-\alpha) \int_0^{1-\alpha} s D_{\phi_s}(Q\|P) \, \mathrm{d}s \\
\label{eq: 20200412e1}
&= \alpha \int_0^\alpha s D_{\phi_s}(P\|Q) \, \mathrm{d}s +
(1-\alpha) \int_\alpha^1 (1-s) D_{\phi_{1-s}}(Q\|P) \, \mathrm{d}s.
\end{align}
Regarding the integrand of the second term in \eqref{eq: 20200412e1},
in view of \eqref{eq: GV01-chi^2}, for all $s \in (0,1)$
\begin{align}
D_{\phi_{1-s}}(Q\|P)
&= \frac1{(1-s)^2} \cdot \chi^2\bigl(Q \, \| \, (1-s)P+sQ\bigr) \\
\label{eq: identity chi^2}
&= \frac1{s^2} \cdot \chi^2\bigl(P \, \| \, (1-s)P+sQ\bigr) \\
\label{eq: 20200412e2}
&= D_{\phi_s}(P\|Q),
\end{align}
where \eqref{eq: identity chi^2} readily follows from \eqref{eq: chi-square 1b}.
Since we also have $D_{\phi_1}(P\|Q) = \chi^2(P\|Q) = D_{\phi_0}(Q\|P)$ (see
\eqref{eq: GV01-chi^2}), it follows that
\begin{align}
D_{\phi_{1-s}}(Q\|P) = D_{\phi_s}(P\|Q),  \quad s \in [0,1].
\end{align}
By using this identity, we get from \eqref{eq: 20200412e1} that for all
$\alpha \in [0,1]$
\begin{align}
\tfrac1{\log \mathrm{e}} \, S_\alpha(P\|Q)
&= \alpha \int_0^\alpha s D_{\phi_s}(P\|Q) \, \mathrm{d}s +
(1-\alpha) \int_\alpha^1 (1-s) D_{\phi_s}(P\|Q) \, \mathrm{d}s \\
&= \int_0^1 g_\alpha(s) \, D_{\phi_s}(P\|Q) \, \mathrm{d}s,
\end{align}
where the function $g_\alpha \colon [0,1] \to \Reals$ is defined in
\eqref{eq: 20200412c3}. This proves the integral identity \eqref{eq: 20200412c2}.

The lower bounds in \eqref{eq: 20200412d1} and \eqref{eq: 20200412d2}
hold since if $f \colon (0, \infty) \to \Reals$ is convex, continuously
twice differentiable and strictly convex at~1, then
\begin{align} \label{eq: CohenKZ98}
\mu_{\chi^2}(Q_X, W_{Y|X}) \leq \mu_f(Q_X, W_{Y|X}),
\end{align}
(see, e.g., \cite[Proposition~II.6.5]{CohenKZ98} and \cite[Theorem~2]{PolyanskiyW17}).
Hence, this holds in particular for the $f$-divergences in \eqref{eq: 20200412a1}
and \eqref{eq: 20200412a2} (since the required properties are satisfied
by the parametric functions in \eqref{eq: 20200412b1} and \eqref{eq: 20200412b2},
respectively). We next prove the upper bound on the contraction coefficients
in \eqref{eq: 20200412d1} and \eqref{eq: 20200412d2} by relying on
\eqref{eq: 20200412c1} and \eqref{eq: 20200412c2}, respectively.
In the setting of Definition~\ref{def: contraction}, if $P_X \neq Q_X$,
then it follows from \eqref{eq: 20200412c1} that for $\alpha \in (0,1]$,
\begin{align}
\label{eq: 202020413a1}
\frac{K_\alpha(P_Y \| Q_Y)}{K_\alpha(P_X \| Q_X)}
&=  \frac{\int_0^\alpha s D_{\phi_s}(P_Y\|Q_Y) \,
\mathrm{d}s}{\int_0^\alpha s D_{\phi_s}(P_X\|Q_X) \, \mathrm{d}s} \\
&\leq \frac{\int_0^\alpha s \, \mu_{\phi_s}(Q_X, W_{Y|X}) \, D_{\phi_s}(P_X\|Q_X) \,
\mathrm{d}s}{\int_0^\alpha s D_{\phi_s}(P_X\|Q_X) \, \mathrm{d}s} \\
&\leq \sup_{s \in (0, \alpha]} \mu_{\phi_s}(Q_X, W_{Y|X}).
\end{align}
Finally, supremizing the left-hand side of \eqref{eq: 202020413a1} over all
probability measures $P_X$ such that $0 < K_\alpha(P_X \| Q_X) < \infty$
gives the upper bound on $\mu_{k_\alpha}(Q_X, W_{Y|X})$ in \eqref{eq: 20200412d1}.
The proof of the upper bound on $\mu_{s_\alpha}(Q_X, W_{Y|X})$, for all $\alpha \in [0,1]$,
follows similarly from \eqref{eq: 20200412c2}, since the function $g_\alpha(\cdot)$
as defined in \eqref{eq: 20200412c3} is positive over the interval $(0,1)$.

\subsection{Proof of Corollary~\ref{corollary: bounds - contraction coefficient}}
\label{proof of Corollary: bounds - contraction coefficient}

The upper bounds in \eqref{eq: 20200417a1} and \eqref{eq: 20200417a2} rely on those in \eqref{eq: 20200412d1}
and \eqref{eq: 20200412d2}, respectively, by showing that
\begin{align} \label{eq: 20200417a3}
\sup_{s \in (0,1]} \mu_{\phi_s}(Q_X, W_{Y|X}) \leq \mu_{\chi^2}(W_{Y|X}).
\end{align}
Inequality \eqref{eq: 20200417a3} is obtained as follows, similarly to the concept of the
proof of \cite[Remark~3.8]{Raginsky16}. For all $s \in (0,1]$ and $P_X \neq Q_X$,
\begin{align}
& \frac{D_{\phi_s}(P_X W_{Y|X} \, \| \, Q_X W_{Y|X})}{D_{\phi_s}(P_X \| Q_X)} \nonumber \\
\label{eq: 20200417a4}
&= \frac{\chi^2( P_X W_{Y|X} \, \| \, (1-s) P_X W_{Y|X} + s Q_X W_{Y|X})}{\chi^2( P_X \, \| \, (1-s) P_X + s Q_X)} \\
&\leq \mu_{\chi^2}((1-s) P_X + s Q_X, \, W_{Y|X}) \\
\label{eq: 20200417a5}
&\leq \mu_{\chi^2}(W_{Y|X}),
\end{align}
where \eqref{eq: 20200417a4} holds due to \eqref{eq: GV01-chi^2},
and \eqref{eq: 20200417a5} is due to the definition in \eqref{eq: 20200417a0}.

\subsection{Proof of Proposition~\ref{proposition: bounds - contraction coefficient}}
\label{proof of proposition: bounds - contraction coefficient}

The lower bound on the contraction coefficients in \eqref{eq: 20200417b1}
and \eqref{eq: 20200417b2} is due to \eqref{eq: CohenKZ98}.
The derivation of the upper bounds relies on \cite[Theorem~2.2]{Makur_PhD19}, which states the following.
Let $f \colon [0, \infty) \to \Reals$ be a three-times differentiable, convex function with $f(1)=0$, $f''(1) > 0$,
and let the function $z \colon (0, \infty) \to \Reals$ defined as $z(t) := \frac{f(t)-f(0)}{t}$, for all $t>0$,
be concave. Then,
\begin{align} \label{eq: Makur's bound}
\mu_f(Q_X, W_{Y|X}) \leq \frac{f'(1)+f(0)}{f''(1) \, Q_{\min}} \cdot \mu_{\chi^2}(Q_X, W_{Y|X}).
\end{align}
For $\alpha \in (0,1]$, let $z_{\alpha,1} \colon (0, \infty) \to \Reals$ and
$z_{\alpha,2} \colon (0, \infty) \to \Reals$ be given by
\begin{align}
z_{\alpha,1}(t) &:= \frac{k_\alpha(t) - k_\alpha(0)}{t}, \quad t>0,\\
z_{\alpha,2}(t) &:= \frac{s_\alpha(t) - s_\alpha(0)}{t}, \quad t>0,
\end{align}
with $k_\alpha$ and $s_\alpha$ in \eqref{eq: 20200412b1} and \eqref{eq: 20200412b2}.
Straightforward calculus shows that, for $\alpha \in (0,1]$ and $t>0$,
\begin{align}
\tfrac1{\log \mathrm{e}} \, z_{\alpha,1}''(t)
&= -\frac{\alpha^2 + 2 \alpha (1-\alpha)t}{t^2 \bigl[\alpha + (1-\alpha)t\bigr]^2} < 0, \\
\label{eq: 20200417b3}
\tfrac1{\log \mathrm{e}} \, z_{\alpha,2}''(t)
&= -\frac{\alpha^2 \bigl[\alpha + 2(1-\alpha)t \bigr]}{t^2 \bigl[\alpha + (1-\alpha)t \bigr]^2} \\
& \hspace*{0.4cm} - \frac{2(1-\alpha)}{t^3} \biggl[ \log_{\mathrm{e}} \biggl(1 + \frac{(1-\alpha)t}{\alpha} \biggr)
- \frac{(1-\alpha)t}{\alpha+(1-\alpha)t} - \frac{(1-\alpha)^2 t^2}{2 \bigl[\alpha+(1-\alpha)t\bigr]^2} \Biggr]. \nonumber
\end{align}
The first term in the right side of \eqref{eq: 20200417b3} is negative. For showing that the second term
is also negative, we rely on the power series expansion
$\log_{\mathrm{e}}(1+u) = u - \tfrac12 u^2 + \tfrac13 u^3 - \ldots$ for $u \in (-1, 1]$.
Setting $u:=-\frac{x}{1+x}$, for $x>0$, and using Leibnitz theorem for alternating series yields
\begin{align} \label{eq: 20200418b1}
\log_{\mathrm{e}}(1+x) = -\log_{\mathrm{e}}\biggl(1-\frac{x}{1+x} \biggr)
> \frac{x}{1+x} + \frac{x^2}{2(1+x)^2}, \qquad x>0.
\end{align}
Consequently, setting $x:= \frac{(1-\alpha)t}{\alpha} \in [0,\infty)$ in \eqref{eq: 20200418b1},
for $t>0$ and $\alpha \in (0,1]$, proves that the second term in the right side of \eqref{eq: 20200417b3}
is negative. Hence, $z_{\alpha,1}''(t), \, z_{\alpha,2}''(t) < 0$,
so both $z_{\alpha,1}, z_{\alpha,2} \colon (0, \infty) \to \Reals$ are concave functions.

In view of the satisfiability of the conditions of \cite[Theorem~2.2]{Makur_PhD19} for the $f$-divergences
with $f = k_\alpha$ or $f = s_\alpha$, the upper bounds in \eqref{eq: 20200417b1} and \eqref{eq: 20200417b2}
follow from \eqref{eq: Makur's bound}, and also since
\begin{align}
& k_\alpha(0) = 0,  \hspace*{2.7cm} k'_\alpha(1) = \alpha \, \log \mathrm{e},
\hspace*{1.5cm} k''_\alpha(1) = \alpha^2 \, \log \mathrm{e}, \\
& s_\alpha(0) = -(1-\alpha) \, \log \alpha,  \quad
s'_\alpha(1) = (2\alpha-1) \, \log \mathrm{e}, \quad
s''_\alpha(1) = (1-3\alpha+3\alpha^2) \, \log \mathrm{e}.
\end{align}

\subsection{Proof of Proposition~\ref{prop: max. correlation}}
\label{proof of prop: max. correlation}

In view of \eqref{identity2: KL}, we get
\begin{align}
\label{20200413a1}
\frac{D(P_Y \| Q_Y)}{D(P_X \| Q_X)}
&= \frac{\int_0^1 \chi^2(P_Y \, \| \, (1-s)P_Y + sQ_Y)
\; \frac{\mathrm{d}s}{s}}{\int_0^1
\chi^2(P_X \, \| \, (1-s)P_X + sQ_X) \; \frac{\mathrm{d}s}{s}} \\
\label{20200413a2}
&\leq \frac{\int_0^1 \mu_{\chi^2}((1-s)P_X + sQ_X, \, W_{Y|X}) \;
\chi^2(P_X \, \| \, (1-s)P_X + sQ_X)
\; \frac{\mathrm{d}s}{s}}{\int_0^1
\chi^2(P_X \, \| \, (1-s)P_X + sQ_X) \; \frac{\mathrm{d}s}{s}} \\
\label{20200413a3}
&\leq \sup_{s \in [0,1]} \mu_{\chi^2}((1-s)P_X + sQ_X, \, W_{Y|X}).
\end{align}
In view of \eqref{eq:Sarmanov}, the distributions of the random
variables $X_s$ and $Y_s$, and since the equality
$\bigl((1-s)P_X + sQ_X\bigr) W_{Y|X} = (1-s)P_Y + sQ_Y$ holds
for all $s \in [0,1]$, it follows that
\begin{align}
\label{20200413a4}
\rho_{\mathrm{m}}(X_s;Y_s) = \sqrt{\mu_{\chi^2}((1-s)P_X + sQ_X, \, W_{Y|X})}, \quad s \in [0,1],
\end{align}
which, from \eqref{20200413a1}--\eqref{20200413a4}, implies that
\begin{align}
\label{20200413a5}
\sup_{s \in [0,1]} \rho_{\mathrm{m}}(X_s;Y_s) \geq \sqrt{\frac{D(P_Y \| Q_Y)}{D(P_X \| Q_X)}}.
\end{align}
Switching $P_X$ and $Q_X$ in \eqref{20200413a1}--\eqref{20200413a3} and using the
mapping $s \mapsto 1-s$ in \eqref{20200413a3} gives (due to the symmetry of the
maximal correlation)
\begin{align}
\label{20200413a6}
\sup_{s \in [0,1]} \rho_{\mathrm{m}}(X_s;Y_s) \geq \sqrt{\frac{D(Q_Y \| P_Y)}{D(Q_X \| P_X)}},
\end{align}
and, finally, taking the maximal lower bound among those in \eqref{20200413a5}
and \eqref{20200413a6} gives \eqref{eq: LB max. correlation}.

\subsection*{Acknowledgment}
Sergio Verd\'{u} is gratefully acknowledged for a careful reading, and
well-appreciated feedback on the submitted version of this paper.

\end{document}